\documentclass[lettersize,journal]{IEEEtran}

\usepackage{moreverb,url}
\usepackage[colorlinks,bookmarksopen,bookmarksnumbered,citecolor=red,urlcolor=red]{hyperref}
\usepackage{hyperref}

\usepackage{amsmath} 
\usepackage{amssymb}  
\usepackage{amsthm}
\usepackage{amsfonts}
\usepackage{mathtools}
\usepackage{bm}
\providecommand{\bm}{\pmb}

\newtheorem{defin}{Definition}

\newtheorem{thm}{Theorem}

\theoremstyle{definition}

\theoremstyle{remark}
\newtheorem{rmk}{Remark}

\usepackage{verbatim}
\usepackage{color}
\usepackage{graphicx}
\usepackage{caption}
\usepackage{mathptmx}
\usepackage{times}
\usepackage{subfig}


\usepackage[us]{datetime}
\captionsetup{font=small}

\usepackage[usenames,dvipsnames,table]{xcolor}

\hypersetup{
    colorlinks,
    citecolor=black,
    filecolor=black,
    linkcolor=black,
    urlcolor=black,
    pdfauthor={},
    pdfsubject={},
    pdftitle={}
}

\usepackage{todonotes}

\graphicspath{{images/}}

\usepackage{graphicx}

\usepackage{latexsym}

\usepackage{tabularx, booktabs}
\newcolumntype{Y}{>{\centering\arraybackslash}X}
\usepackage{multirow}
\usepackage{paralist}


\usepackage{color}


\newcommand{\errorRL}{e_{\rotMatL}}
\newcommand{\vect}[1]{\bm{#1}}		
\newcommand{\matr}[1]{\bm{#1}}		
\newcommand{\nR}[1]{\mathbb{R}^{#1}}		
\newcommand{\SO}[1]{SO({#1})}		
\newcommand{\matrice}[1]{\begin{bmatrix} #1 \end{bmatrix}}	
\newcommand{\upperRomannumeral}[1]{\uppercase\expandafter{\romannumeral#1}}	

\newcommand{\vSpace}{\;\,}

\newcommand{\diag}[1]{\text{diag}\left( #1 \right)}

\newcommand{\fig}{Fig.~}	


\renewcommand{\frame}{\mathcal{F}}		
\newcommand{\origin}{O}						
\newcommand{\vX}{\vect{x}}					
\newcommand{\vY}{\vect{y}}					
\newcommand{\vZ}{\vect{z}}					
\newcommand{\vE}[1]{\vect{e}_{#1}}			
\newcommand{\vV}{\vect{v}}					
\newcommand{\pos}{\vect{p}}				
\newcommand{\dpos}{\dot{\vect{p}}}		
\newcommand{\ddpos}{\ddot{\vect{p}}}	
\newcommand{\rotMat}{\matr{R}}			
\newcommand{\rotMatVectAngle}[2]{\rotMat_{#1}(#2)}	
\newcommand{\angVel}{\vect{\omega}}				
\newcommand{\vZero}{\vect{0}}				
\renewcommand{\skew}[1]{\matr{S}(#1)}				
\newcommand{\eye}[1]{\matr{I}_{#1}}		
\newcommand{\roll}{\phi}		
\newcommand{\pitch}{\theta}		
\newcommand{\pitchDes}{\bar{\theta}}		
\newcommand{\pitchEq}{\theta^{eq}}		
\newcommand{\yaw}{\psi}		
\newcommand{\yawDes}{\bar{\psi}}		

\newcommand{\frameW}{\frame_W}			
\newcommand{\originW}{\origin_W}		
\newcommand{\xW}{\vX_W}				
\newcommand{\yW}{\vY_W}				
\newcommand{\zW}{\vZ_W}				

\newcommand{\frameL}{\frame_L}			
\newcommand{\originL}{\origin_L}			
\newcommand{\xL}{\vX_L}				
\newcommand{\yL}{\vY_L}				
\newcommand{\zL}{\vZ_L}				
\newcommand{\pL}{\pos_L}			
\newcommand{\pLW}{\prescript{W}{}{\pL}} 
\newcommand{\dpL}{\dpos_L}			
\newcommand{\ddpL}{\ddpos_L}		
\newcommand{\rotMatL}{\rotMat_L}
\newcommand{\rotMatLEquilib}{\rotMat^{eq}_{L}}

\newcommand{\angVelL}{\angVel_L}		
\newcommand{\massL}{{m_L}}
\newcommand{\massLU}{\hat{{m}}_L}				
\newcommand{\inertiaL}{\matr{J}_L}	
\newcommand{\InertiaL}{\matr{M}_L}	
\newcommand{\coriolisL}{\vect{c}_L}	
\newcommand{\gravityL}{\vect{g}_L}	
\newcommand{\graspL}{\matr{G}}		

\newcommand{\length}[1]{{l}_{0#1}}
\newcommand{\lengthU}[1]{\hat{{l}}_{0#1}}	
\newcommand{\springCoeff}[1]{{k}_{#1}}
\newcommand{\springCoeffU}[1]{\hat{{k}}_{#1}}
\newcommand{\cableForce}[1]{\vect{f}_{#1}}
\newcommand{\cableForceEquilib}[1]{\vect{f}^{eq}_{#1}}

\newcommand{\cableAttitude}[1]{\vect{l}_{#1}}

\newcommand{\condZero}{\xi}
\newcommand{\anchorPoint}[1]{B_{#1}}			
\newcommand{\anchorPos}[1]{\vect{b}_{#1}}		
\newcommand{\anchorLength}[1]{{b}_{#1}}	
\newcommand{\anchorLengthU}[1]{\hat{{b}}_{#1}}		
\newcommand{\anchorPosL}[1]{\prescript{L}{}{\vect{b}}_{#1}}		
\newcommand{\anchorPosLU}[1]{\prescript{L}{}{\hat{\vect{b}}}_{#1}}		

\newcommand{\cableForces}{\cableForce{}}		

\newcommand{\frameR}[1]{\frame_{R #1}}			
\newcommand{\originR}[1]{O_{R #1}}					
\newcommand{\xR}[1]{\vX_{R #1}}								
\newcommand{\yR}[1]{\vY_{R #1}}								
\newcommand{\zR}[1]{\vZ_{R #1}}								
\newcommand{\pR}[1]{\pos_{R #1}}						
\newcommand{\dpR}[1]{\dpos_{R #1}}					
\newcommand{\ddpR}[1]{\ddpos_{R #1}}				
\newcommand{\uR}[1]{\vect{u}_{R #1}}				
\newcommand{\rotMatR}[1]{\rotMat_{R #1}}			

\newcommand{\dampingA}[1]{\matr{B}_{A#1}}		
\newcommand{\springA}[1]{\matr{K}_{A#1}}		
\newcommand{\inertiaA}[1]{\matr{M}_{A#1}}		
\newcommand{\paramA}[1]{\vect{\pi}_{A#1}}			

\newcommand{\config}{\vect{q}}					
\newcommand{\dconfig}{\vect{v}}			
\newcommand{\ddconfig}{\dot{\dconfig}}			
\newcommand{\configR}{\config_R}					
\newcommand{\dconfigR}{\dconfig_R}				
\newcommand{\configL}{\config_L}					
\newcommand{\dconfigL}{\dconfig_L}				
\newcommand{\ddconfigL}{\ddconfig_L}			
\newcommand{\state}{\vect{x}}						
\newcommand{\dynamicModelFun}{m}					

\newcommand{\configEq}{\bar{\config}}			
\newcommand{\configLEq}{\bar{\config}_L}		
\newcommand{\configREq}{\bar{\config}_R}		
\newcommand{\paramAEq}[1]{\bar{\vect{\pi}}_{A#1}}			
\newcommand{\paramAEqInc}[1]{\hat{\bar{\vect{\pi}}}_{A#1}}
\newcommand{\paramAEqU}[1]{\hat{\bar{\vect{\pi}}}_{A#1}}
\newcommand{\pLEq}{\bar{\pos}_L}

\newcommand{\pLEquilib}{{\pos}^{eq}_L}
\newcommand{\rotMatLEq}{\bar{\rotMat}_L}		
\newcommand{\cableForceEq}[1]{\bar{\vect{f}}_{#1}}
\newcommand{\cableForceEqInc}[1]{\hat{\bar{\vect{f}}}_{#1}}				
\newcommand{\cableForcesEq}{\bar{\vect{f}}}	
\newcommand{\cableForcesEqInc}{\hat{\bar{\vect{f}}}}					
\newcommand{\internalTension}{t_L}				
\newcommand{\pREq}[1]{\bar{\pos}_{R #1}}	
\newcommand{\pREqInc}[1]{\hat{\bar{\pos}}_{R #1}}		

\newcommand{\pREquilib}[1]{\pos^{eq}_{R #1}}

\newcommand{\configSetEq}{\mathcal{Q}(\internalTension,\configLEq)}
\newcommand{\configSetEqZero}{\mathcal{Q}(0,\configLEq)}
\newcommand{\configSetEqZeroi}[1]{\mathcal{Q}_{#1}(0,\configLEq)}

\newcommand{\configSetEqPlus}{\mathcal{Q}^+(\internalTension,\configLEq)}
\newcommand{\configSetEqMinus}{\mathcal{Q}^-(\internalTension,\configLEq)}

\newcommand{\errorpREq}[1]{\vect{e}_{R#1}}	
\newcommand{\errorPL}{{\vect{e}_{p}}_L}




\newcommand{\dVZeroSet}{\mathcal{E}}				
\newcommand{\stateSetEq}{\mathcal{X}(\internalTension,\configLEq)}

\newcommand{\stateSetEqZeroi}[1]{\mathcal{X}_{#1}(0,\configLEq)}
\newcommand{\stateSetEqPlus}{\mathcal{X}^+(\internalTension,\configLEq)}
\newcommand{\stateSetEqMinus}{\mathcal{X}^-(\internalTension,\configLEq)}

\newcommand{\lyapunovFun}{V(\state)}				

\newcommand{\Vadd}{V_R(\state)}				

\newcommand{\dlyapunovFun}{\dot{V}(\state)}				


\newcommand{\displacement}{\vect{d}}
\newcommand{\pREqIncRef}[1]{\pREqInc{#1}^r}
\newcommand{\rotMatRRef}[1]{\rotMatR{#1}^r}

\DeclarePairedDelimiter{\norm}{\lVert}{\rVert}





\usepackage{amsmath} 
\usepackage{amssymb}  
\usepackage{amsthm}
\usepackage{amsfonts}
\usepackage{mathtools}
\usepackage{flushend}

\usepackage{todonotes}

\usepackage{graphicx}
\usepackage{color}

\usepackage{caption}

\setcounter{secnumdepth}{3}
\begin{document}

\title{Equilibria, Stability, and Sensitivity for the Aerial Suspended Beam Robotic System subject to Parameter Uncertainty}

\author{C. Gabellieri$^1$~\IEEEmembership{IEEE~Member}, M. Tognon$^2$~\IEEEmembership{IEEE~Member}, D. Sanalitro$^3$~\IEEEmembership{IEEE~Member}, \\A. Franchi$^{1, 3,4}$,~\IEEEmembership{IEEE~Fellow}
\thanks{This work has been partially funded by the European Union's Horizon Europe research and innovation program [grant agreement No. ID: 101059875] Flyflic and  Horizon 2020 research and innovation programme [grant agreement No. 871479] AERIAL-CORE.}
\thanks{$^1$Robotics and Mechatronics group, EEMCS faculty, University of Twente, Enschede, The Netherlands $^2$Inria, Univ Rennes, CNRS, IRISA, Campus de Beaulieu, 35042 Rennes Cedex, France $^3$LAAS-CNRS, Universit\'e de Toulouse, CNRS, Toulouse, France.
       Email: {\tt\footnotesize c.gabellieri@utwente.nl, 	 a.franchi@utwente.nl} $^4$Department of Computer, Control and Management Engineering, Sapienza University of Rome, 00185 Rome, Italy,}
       
}   
\markboth{Transactions on Robotics,
}
{C. Gabellieri \MakeLowercase{\textit{(et al.)}:
Equilibria, Stability, and Sensitivity for the Aerial Suspended Beam Robotic System subject to
Parameter Uncertainty}}
\maketitle

\begin{abstract}
This work studies how  parametric uncertainties affect the cooperative manipulation of a cable-suspended beam-shaped load by means of two aerial robots not explicitly communicating with each other.
In particular, the work sheds light on the impact of the uncertain knowledge of the model parameters available to an established communication-less force-based controller. First, we find the closed-loop equilibrium configurations in the presence of the aforementioned uncertainties, and  then we study their stability. Hence, we show the fundamental role played in the robustness of the load attitude control by the internal force induced in the manipulated object by non-vertical cables. Furthermore, we formally study the sensitivity of the attitude error to such parametric variations, and we provide a method to act on the load position error in the presence of the uncertainties. Eventually, we validate the results through an extensive set of numerical tests in a realistic simulation environment including underactuated aerial vehicles and sagging-prone cables, and through hardware experiments.
\end{abstract}
\begin{IEEEkeywords}
...
\end{IEEEkeywords}
\section{Introduction}\label{sec:intro}
\IEEEPARstart{I}{t} is nowadays universally acknowledged that the interest in Unmanned Aerial Vehicles (UAVs) is becoming  wider and wider by virtue of their ability to embrace an ample set of applications. 
A very recent and popular topic in aerial robotics is physical interaction using aerial manipulators~\cite{ollero2021past, 2018-RuLiOl,2018-KaJaAb} for applications such as contact-based inspection, assembly, human assistance, etc.
To solve these challenges, aerial platforms are endowed with physical interaction tools, such as cables~\cite{tognon2020theory} or more complex robotic arms~\cite{2017g-TogYueBuoFra}. 

Researchers have considered taking advantage of the cooperation between multiple robots to enhance the overall payload and manipulate large objects~\cite{skorobogatov2020multiple, 2010-MazKonBerOll, 2020-MohZweGan, loianno2017cooperative}. %
Different methods have been developed to tackle multi-robot aerial manipulation. In~\cite{2015-NguParLee} and~\cite{2013-RitD'an} the authors use passive manipulation tools to solve the cooperative aerial transportation of rigid and elastic objects, respectively. 
Multiple flying arms are instead used in~\cite{2015-CacGigMusPie, 2018-ThaBaiAco}. 
Cables have been often considered in multi-robot manipulation scenarios, because, in addition to being lightweight ad low-cost, they also mitigate the coupling between the system dynamics and the robots' attitude, which can simplify the control problem, especially when using underactuated aerial platforms.

\subsection{Related Works} The problem of manipulating a cable-suspended load through a team of aerial vehicles has been studied, e.g., in ~\cite{2013-SreKum,2016-MasBulSte,2013-ManDevRosCor, 2020-MoZwTaGa,2017-Lee, li2021cooperative}. In ~\cite{sanalitro2020full},  a robust pose controller for a cable-suspended load manipulated by multiple UAVs is presented. Stability is ensured through gain tuning given a bound on the uncertainties affecting the kinematic parameters. 
Formation control to transport the payload with a focus on robustness is described in~\cite{2020-RosRosGimSalSorSarCar}, where  modeling uncertainties are also taken into account.

A standard system that has attracted substantial interest in the research community is composed of \textit{two} aerial vehicles manipulating a \textit{beam-like load} through cables~\cite{2019-SpuPetVonSas, 2020-PeDi, sundin2022decentralized, 2020-MoZwTaGa,2017-TagKamVerSieNie, 2017-GasCieSca,2021-VilBraCarSar}.  Such standard configuration is of interest for several real-world applications, especially in the construction field, where we find columns, wooden pillars, iron beams for cement walls, scaffolds, pipes, pieces of roofs, and other beam-like building elements.  Two is the minimum number of aerial robots allowing to control both the position and attitude of a cable-suspended beam-like load~\cite{2020-PeDi}. While three aerial robots allow controlling the entire pose of a generic rigid body~\cite{2009-FinMicKimKum}, using more than two robots for a beam-like load it is arguably not the optimal solution in most of the cases because of the increased complexity of the system without being necessary for the control of the load.

In~\cite{2020-PeDi} and \cite{goodman2022geometric}, the authors propose a method for the transportation of a cable-suspended beam load by two aerial vehicles that have access to the state of the load; they consider rigid and elastic cables, respectively.  
In~\cite{sundin2022decentralized},  centralized and  decentralized model predictive control is proposed for a system of two UAVs manipulating a beam load through cables.

Decentralized algorithms as~\cite{2013-MelShoMicKum} are more robust and scalable with respect to (w.r.t.) the number of robots. 
However, decentralized \textit{communication-less} approaches have been also intensively studied in the literature \cite{wang2016force} because communication delays and packet losses are among the principal causes undermining the performance and stability of the system in real implementations, and because the hardware and software complexity can be reduced by confining explicit communication.  In~\cite{2017-GasCieSca}, a method relying on visual feedback is presented. As an alternative to vision, a force-based method  that uses admittance controllers and a leader-follower scheme is typically used to address communication-less aerial manipulation of cable-suspended objects~\cite{2017-TagKamVerSieNie, tagliabue2019robust, 2018h-TogGabPalFra,gabellieri2020study}. The leader robot guides the system following a predetermined trajectory, while the second robot, which carries a portion of the load weight, follows its lead by sensing the cable force variations.

 A primary goal of~\cite{2017-TagKamVerSieNie} is to keep the cables always vertical during transportation, meaning that no internal force  is induced in the object. The authors in~\cite{tagliabue2019robust} extend the results of~\cite{2017-TagKamVerSieNie} towards the $N-$robot case  and provides a method for tuning the gains of the robot admittance controllers in order to improve the robustness against disturbances induced by unmodeled dynamics and parametric uncertainties. In both works experiments are shown  in which, however, the altitude of the robots is set to a predetermined reference, implying either a centralized vertical movement coordination or restricted vertical motion.  

For such a popular class of communication-less, admittance-controlled, and leader-follower schemes, the formal analysis of the closed-loop system equilibrium configurations and their stability was presented for the first time in our previous work~\cite{2018h-TogGabPalFra}. 
There, we showed  that inducing an internal force on the load through non-vertical cables is required for full-pose regulation, especially to prevent  arbitrary vertical movements of the robots that would interfere with  the regulation of the load pitch and center position. In~\cite{gabellieri2020study}, we considered $N$ robots, empirically showing  through extensive simulations the effect of changing the number of leader robots on the stability and robustness against disturbances. Both works tackle only the ideal case where perfect knowledge of the system parameters is available to the admittance controllers of each aerial robot.

Despite being of primary interest, it has been unclear until now if and how in-practice-unavoidable uncertainties impact the pose regulation in the aforementioned control framework. Such a gap is filled in this work by introducing uncertainties on those system parameters used in the control action. 

Adaptive control laws have been proposed in the literature for the system in question, however, they are based on different assumptions than those used in this work.
For instance, the full state has been considered available for feedback in \cite{wang2016force}, or the robots rigidly attached to the object \cite{lee2016planning,lee2018integrated}. Other works assume the load mass is the sole uncertain parameter and only focus on the translational velocity regulation  \cite{2018-ThaBaiAco}, or rely on a communication network \cite{thapa2020cooperative}.  

In this work, we have found that the internal force induced by non-vertical cables plays a fundamental role in enabling task execution, especially in realistic conditions characterized by uncertainties. The importance of this is  masked when vertical movements of the robots are prevented or anyway the leader-follower approach is used solely to regulate the load motion on the horizontal plane, as in ~\cite{2017-TagKamVerSieNie, tagliabue2019robust}. On the other hand, the role of the internal force is crucial if the admittance-based  communication-less approach is applied in the full 3D space and, hence,   communication-less  full-pose regulation is sought.

While some loads may be damaged by internal forces, this is easily prevented in practice by enclosing the loads in suitable cases. Also, internal forces require additional control effort, which is justified by the benefits in terms of convergence and robustness of the load  pose control, as will be clear in the following. Indeed, internal forces have been often proposed also in the robotic grasping literature as a tool to make the grip on the object robust thanks to friction \cite{bicchi1994problem}.

\subsection{Contributions of the Work and Outline of the Paper}

The contribution of this work is showing the effects of  parametric uncertainties on the static regulation of the load pose when the usual approach~\cite{2017-TagKamVerSieNie, tagliabue2019robust, 2018h-TogGabPalFra,gabellieri2020study} based on admittance-controlled leader-follower aerial robots is used for  manipulating a cable-suspended beam in the absence of explicit communication.
In this approach, each robot knows only its own state and the force in its cable, retrievable from the robot's state using an external force observer. 

Note that, unlike in \cite{goodman2022geometric}, it is not feasible to assume all robots have knowledge of the object state. This is because the object state is based on the state of all robots, but data exchange among them is not considered in our scenario.

Throughout the manuscript,  we show that it is best to avoid the intuitive idea of having the cables vertical when performing the manipulation with force-based methods in real scenarios, i.e., when uncertainties are present. 
We address the problem aiming at a mathematically-sound point of view. We point out that we restrict the analysis to pose \textit{regulation}, hence, to quasi-static motion of the load. Tracking of more aggressive trajectories is left to future work.
With the above context in mind, the key contributions can be succinctly summarized:
\begin{itemize}
\item after formally studying the equilibrium configurations of the closed-loop system in the presence of uncertain parameters, their stability is proved using Lyapunov's  theory;
\item  the impact of an internal force induced by non-vertical cables on the \textit{robustness} of the load pose control is formally studied; 
\item  the effect of the internal force of diminishing the \textit{sensitivity} of the load attitude error to parametric uncertainty or \textit{variations} is shown; 
\item  a method for correcting the load position inaccuracy induced by the uncertainties is also presented; 
\item we present extensive numerical results and hardware experiments supporting the claims conveyed by the theoretical analysis;
\item last but not least, this work generalizes the system model by considering a generic position of the center of mass (CoM) of the load rather than assuming it to be exactly centered in the middle of the two anchoring points of the cables as done in~\cite{2018h-TogGabPalFra}.
\end{itemize}
 
 The paper is organized as follows.  Sec~\ref{sec:back} contains some background useful to better understand the results of the work. In  Secs~\ref{sec:equilibria} to  \ref{sec:error}, we present the three main contributions of the work: Sec~\ref{sec:equilibria} contains the derivation of the equilibrium points, and Sec~\ref{sec:stability} their stability analysis; Sec~\ref{sec:error} highlights the role of the internal forces in the load error robustness and sensitivity to parametric  variations. The results of the simulations and 
 experiments  are presented in Sec~\ref{sec:exp_num} and Sec~\ref{sec:exp_real}, respectively. Conclusive discussions are in Sec~\ref{sec:conclusions}.
\section{Background}\label{sec:back}
 %
 
 \begin{table}
\caption{Notation\textemdash General symbols and reference frames}\label{Tab:notation1}
\begin{tabularx}{\columnwidth}{@{}XX@{}}
\toprule
 $\eye{i}$ & $i\times i$ identity matrix\\
  $\vE{i}$ &  i$-{th}$ column of $\eye{3}$\\
  $\skew{\star}$ & skew operator\\
  $Ker(\star)$ & nullspace of $\star$\\
  diag($\star$) & diagonal matrix \\
  $ \star ^\top$ & transpose of $\star$\\
  $\norm{\star}$ & 2-norm of $\star$\\
  $\bar{\star}$ & desired value of $\star$\\
  $\star^{Eq}$ & value of $\star$ at the equilibrium\\
  $\hat{\star}$ & uncertain value of $\star$\\
  $\Delta_\star$ & $\star-\hat{\star}$\\
  $\frameW$ & inertial reference frame \\
  $\frameL$ & load reference frame \\
  $\frameR{i}$ & i$-{th}$ robot reference frame \\
  $\{\origin_A, \vX_A, \vY_A, \vZ_A\}$ & Origin, X$-$, Y$-$, and Z$-$ axis  of $\frame_A$ \\
  $^L\star$ & $\star$ expressed in $\frameL$\\
  $\dot{\star}$ & time derivative of $\star$\\
    \bottomrule
\end{tabularx}
\end{table}
\begin{table}
\caption{Notation\textemdash  System variables}\label{Tab:notation2}
\begin{tabularx}{\columnwidth}{@{}XX@{}}
\toprule
  $g$ & gravity acceleration \\
  $\pL, \dpL, \ddpL$ & load position, velocity, acceleration\\
  $\rotMatL$ & rotation of $\frameL$ w.r.t. $\frameW$\\
  $\angVelL$ & load angular velocity\\
  $\configL$ & set composed of $(\pL, \rotMatL)$\\
  $\dconfigL$ & $[\dpL^\top {^L\angVelL}^\top]^\top$\\
  $\yaw, \pitch$ & yaw and pitch of the load\\
  $\anchorPoint{i}$ & i$-{th}$ cable attaching point on load\\
  $\anchorPos{1}$ & position of point $\anchorPoint{i}$\\
  $\massL$, $\inertiaL$ & load mass and rotational inertia\\
  $L$ & load length, with $\ell=1/L$\\
  $\InertiaL$ & diag($\massL\eye{3}, \inertiaL)$\\
  $\gravityL$ & gravity terms in load dynamics\\
  $\coriolisL$ & Coriolis terms in load dynamics\\
  $\graspL$ & load grasp matrix\\
  $\pR{i}, \dpR{i}, \ddpR{i}$ & i$-{th}$ robot position, velocity, acceleration\\
  $\rotMatR{i}$ &rotation of $\frameR{i}$ w.r.t. $\frameW$\\
  $\configR$ & $[\pR{1}^\top\ \pR{2}^\top]^\top$\\
  $\springCoeff{i}$ & stiffness of the i$-th$ cable\\
  $\length{i}$ & rest length of the i$-th$ cable\\
  $\cableAttitude{i}$ & $\pR{i}-\anchorPos{i}$, vector along the i$-th$ cable\\
  $\cableForce{i}$ & force of the i$-th$ cable on the load\\
  $\vect{f}$ & $[\cableForce{1}^\top\ \cableForce{2}^\top]^\top$\\
  $\config$ & set $(\configR, \configR)$\\
  $\dconfig$ & $[\dconfigR^\top\ \dconfigL^\top]^\top$\\
  $\state$ & $(\config,\dconfig)$\\
  $\uR{i}$ & control input of robot i$-th$\\
  $\inertiaA{i}$ & i$-th$ robot control gain: apparent inertia, $\inertiaA{}$ diag($\inertiaA{1}, \inertiaA{2}$)\\
  $\dampingA{i}$ & i$-th$ robot control gain: apparent damping, $\dampingA{}$ diag($\dampingA{1}, \dampingA{2}$)\\
  $\springA{i}$ & i$-th$ robot control gain: virtual spring stiffness,  $\springA{}$ diag($\springA{1}, \springA{2}$)\\
  $\paramA{i}$ & feedforward term of i$-th$ robot's control, $\paramA=[\paramA{1}^\top\  \paramA{2}^\top]^\top$\\
  $\internalTension$ & load internal force\\
  $\condZero$ & $\left(\anchorLength{1}\massL - \frac{{\anchorLengthU{1}}{\massLU}L}{\hat{L}}\right)$\\
  $\dynamicModelFun(\config,\dconfig,\paramA{})$ & closed-loop dynamics\\
  $\configSetEq$ & $\{ \config \text{ satisfying Theorem~\ref{theorem:paramToConfig}} \}$\\
  ${\configSetEqPlus}$ & $\configSetEq$ where $\rotMatLEquilib\vE{1}$ has same sign of \eqref{eq:vec2}\\
   ${\rotMatLEquilib}^+$ & $\rotMatLEquilib$ in $\config\in{\configSetEqPlus}$\\
   ${\rotMatLEquilib}^-$ & ${\rotMatLEquilib}^+\rotMatVectAngle{\zL}{\pi}$\\
   ${\configSetEqMinus}$ & $\config\in\configSetEq$ s.t. $\rotMatLEquilib={\rotMatLEquilib}^-$\\
   $\configSetEqZeroi{1}$ & $\config\in\configSetEqZero$ s.t.  $(\rotMatLEquilib\vE{1})^\top\vE{3}=+1$\\
   $\configSetEqZeroi{2}$ & $\config\in\configSetEqZero$ s.t.  $(\rotMatLEquilib\vE{1})^\top\vE{3}=-1$\\
  $\stateSetEqZeroi{}$ & $\{ \state \; : \; \config \in \configSetEqZeroi{}, \; \dconfig = \vZero \}$\\
  $\stateSetEqZeroi{i}$&$\{ \state \; : \; \config \in \configSetEqZeroi{i}, \; \dconfig = \vZero \}$\\
 $\stateSetEq^\pm $ & $\{ \state \; : \; \config \in \configSetEq^\pm, \; \dconfig = \vZero \}$\\
 $\errorRL$, $\errorPL$ & load attitude and position errors\\
  \bottomrule
\end{tabularx}
\end{table}
In this section, we provide the background needed to understand the contribution of the work. Specifically, we quickly recall the system's main variables, the dynamics equations, and the already established findings.For the sake of readability, the notation is also summarized in Tables \ref{Tab:notation1} and \ref{Tab:notation2}.

The considered system, schematically shown in \fig\ref{fig:model:systemDescription}, is the typical rigid beam-like load attached to two aerial vehicles by means of  cables. 
\subsubsection{Load Model}
The beam-like load has mass $\massL \in \nR{}_{>0}$ and  positive-definite rotational inertia  ${\inertiaL }\in \nR{3 \times 3}$.
 The frame $\frameL = \{\originL,\xL,\yL,\zL\}$, where $\originL$ coincides with the load CoM, is rigidly attached to the load. 
The inertial frame is denoted with $\frameW = \{\originW,\xW,\yW,\zW\}$ where $\zW$ is oriented in the direction opposite to the gravity. 
The position and orientation of $\frameL$ w.r.t. $\frameW$, defined by the vector\footnote{The left superscript indicates the reference frame. From now on, $\frameW$ is considered as a reference frame when the superscript is omitted.}
 $\pLW \in \nR{3}$ and the rotation matrix $\rotMatL$, respectively, describe the full configuration of the load. We recall that the rotation along the axis that passes between the two cable anchoring points is not controllable by the robots. Being the load a beam, only the yaw angle, $\psi$, and pitch angle, $\theta$, are used to describe its attitude.
The usual equations of a rigid body subject to gravity and contact forces describe the dynamics of the load as
\begin{align}\label{eqn:load_dynamics}
	\begin{split}
	\ddconfigL &= \InertiaL^{-1}\left( -\coriolisL(\dconfigL) -\gravityL + \graspL(\configL)\cableForces \right),
	\end{split}
\end{align}
where  
$\configL = (\pL,\rotMatL)$;  
$\dconfigL = [\dpL^\top \vSpace {^L\angVelL}^\top]^\top$ with  $^L\angVelL \in \nR{3}$ the angular velocity of $\frameL$ w.r.t. $\frameW$ expressed in $\frameL$;
$\InertiaL = \diag{{\massL} \eye{3},\inertiaL}$ with $\eye{3} \in \nR{3 \times 3}$ the identity matrix; $\gravityL = [{\massL g} \vect{e}_3^\top \vSpace \vZero]^\top$, where $\rm{g}$ is the gravitational acceleration and $\vE{i}$ is the canonical unit vector with a~$1$ in the $i$-th entry. Coriolis and centrifugal terms are given by 
$$\coriolisL = \begin{bmatrix}\vZero \\ \vSpace \skew{\angVelL}\inertiaL\angVelL\end{bmatrix}$$
where $\skew{\star}$ is the \textit{skew operator}\footnote{Given $\vX \in \nR{3}$, $\skew{\vX} \in \nR{3 \times 3}$ is such that $\skew{\vX}\vY = \vX \times \vY$ for all  $\vY \in \nR{3}$.}, 
 and the grasp matrix is
\begin{align*}
	\graspL = \matrice{\eye{3} & \eye{3} \\ \skew{\anchorPosL{1}}\rotMatL^\top & \skew{\anchorPosL{2}}\rotMatL^\top}.
\end{align*}
%
The load is suspended by two cables from two anchoring points, $\anchorPoint{i}$ with $i = 1,2$, for which the position w.r.t. $\frameL$ is described by the vector $\anchorPosL{i} \in \nR{3}$. Each cable exerts on the load a force $\cableForce{i}$ such that
$\cableForces = [\cableForce{1}^\top \vSpace \cableForce{2}^\top]^\top$ in \eqref{eqn:load_dynamics}. 
By simple kinematics, the position of $\anchorPoint{i}$ w.r.t. $\frameW$ is then given by $\anchorPos{i} = \pL + \rotMatL\anchorPosL{i}$. 
Since we are considering a beam-like load, the object CoM is aligned with the two anchoring points of the cables.
Without loss of generality, we assume that $\anchorPosL{1}=[\anchorLength{1}\;0\;0]^\top$ and $\anchorPosL{2}=[-\anchorLength{2}\;0\;0]^\top$, where $\anchorLength{i} \in \nR{}_{>0}$, for $i = 1,2$. We also define the beam's  length $L=\anchorLength{1} + \anchorLength{2}.$  
\begin{figure}[t]
\center
\includegraphics[width=.8\columnwidth]{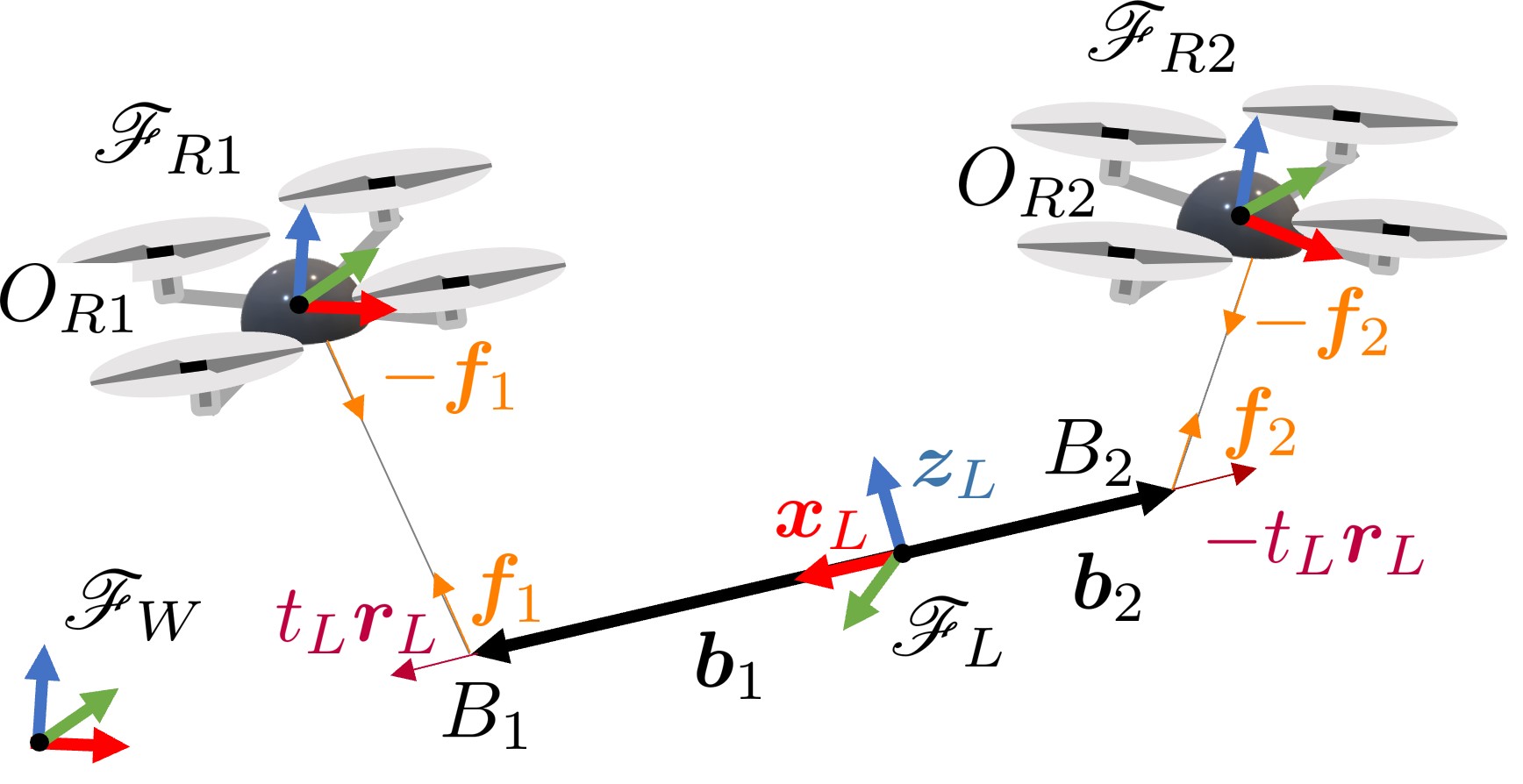}
\caption{Representation of the system and its major variables. The two aerial vehicles do not need to be necessarily quadrotors since the analysis and control design is valid for general aerial vehicles.}
\label{fig:model:systemDescription}
\end{figure}
\subsubsection{Robot Model}
 We define a frame ${\frameR{i} = \{ \originR{i},\xR{i},\yR{i},\zR{i}\}}$ rigidly attached to the i$-th$ robot and centered in its CoM. The $i$-th cable is attached to the $i$-th aerial vehicle at the point $\originR{i}$, which allows decoupling the robot's attitude dynamics from the rest \cite{2017-TagKamVerSieNie, goodman2022geometric}.
$\frameR{i}$ is used to describe the position and rotation of the vehicle  w.r.t. $\frameW$, denoted by the vector $\pR{i} \in \nR{3}$, and the rotation matrix $\rotMatR{i} \in \SO{3}$, respectively. 

The use of recent controllers for unidirectional- and multidirectional-thrust vehicles~\cite{kamel2017model,2016j-RylBicFra} and disturbance observers for aerial vehicles has been experimentally proven to result in negligible tracking errors even in the presence of external disturbances.

Consequently, due to the time-scale separation between the fast attitude dynamics and the slow translational dynamics~ \cite{nonami2010autonomous}, the closed-loop translational dynamics of the robot under the influence of the position controller effectively behaves like that of a double integrator
$ \ddpR{i} = \uR{i},$  
where $\uR{i}$ is a virtual input.
In other words, it is safe to assume that  the aerial robots together with a sufficiently accurate position controller can track any desired $C^2$ trajectory with negligible error in the domain of interest \cite{2018-RugLipOll}, independently from external disturbances~\cite{2018h-TogGabPalFra}. In this work, we follow such experimentally validated common practice for the theoretical derivations contained in Secs~\ref{sec:equilibria},~\ref{sec:stability}, and~\ref{sec:error} (see, e.g., the experiments on cooperative load transport in \cite{2017-TagKamVerSieNie, tagliabue2019robust}), and we utilize  underactuated quadrotors for both the numerical and experimental validations in  Sec~\ref{sec:exp_num} and Sec~\ref{sec:exp_real}. 


\subsubsection{Cable Model}
Cable-to-robot and cable-to-load connections are modeled as passive and mass-negligible rotational joints.
Besides, the $i$-th cable is represented as a unilateral spring along its principal direction, which is a frequently  adopted model~\cite{goodman2021geometric, 2018h-TogGabPalFra, bisig2020genetic, yiugit2021novel}. As commonly done in the state of the art, the  cables' mass and inertia are assumed negligible in comparison to the robots' and load's.
Its parameters are the constant elastic coefficient $\springCoeff{i} \in \nR{}_{> 0}$ and the constant rest length denoted by $\length{i}$.

The attitude of the $i$-th cable w.r.t. $\frameW$ is expressed by the normalized vector \footnote{$\mathbb{S}^{2} = \{ \vV \in \nR{3} \; | \; \norm{\vV} = 1 \}$} $\cableAttitude{i}/\norm{\cableAttitude{i}} \in \mathbb{S}^{2}$, where $\cableAttitude{i} = \pR{i} - \anchorPos{i}$.
The force acting on the load at $\anchorPoint{i}$, given a certain length $\norm{\cableAttitude{i}}$ of the cable, is given by the simplified Hooke's law:
\begin{align}\label{eqn:cableForce}
	\begin{split}
	\cableForce{i} = \norm{\cableForce{i}}\frac{\cableAttitude{i}}{\norm{\cableAttitude{i}}}\text{,} \;\;\;
	\norm{\cableForce{i}} = \begin{cases}
					{\springCoeff{i}}(\norm{\cableAttitude{i}} - \length{i})	& \text{ if } \norm{\cableAttitude{i}} - \length{i} > 0 \\
					0	&	\text{ otherwise }
				\end{cases}.
	\end{split}			
\end{align}
The force produced on the other hand of the cable, i.e., on the $i$-th robot at $\originR{i}$, is equal to $-\cableForce{i}$.

 
\begin{figure}[t]
	\center
	\includegraphics[width=0.8\columnwidth]{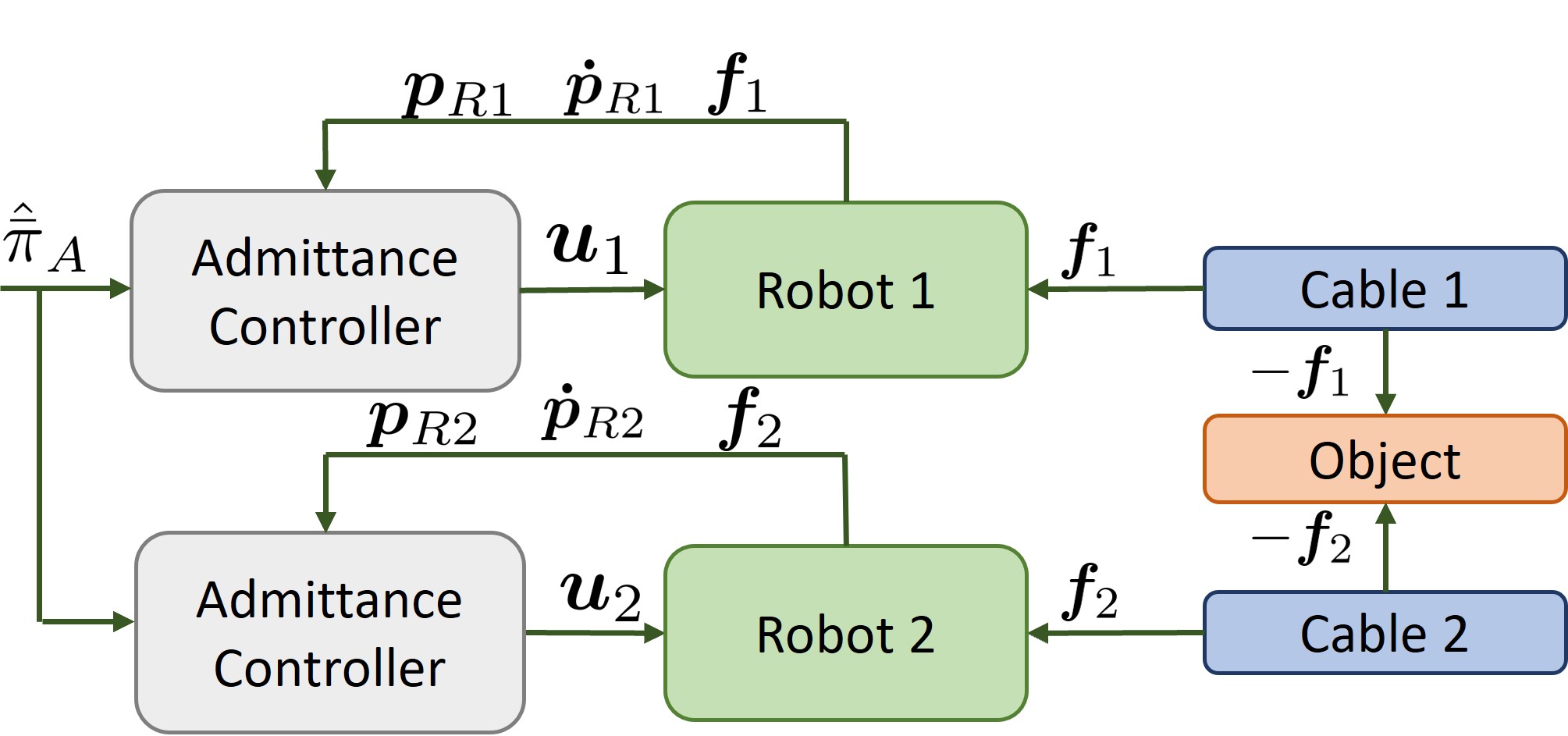}
	\caption{Schematic representation of the overall system including both physical and control blocks. The input of the admittance controller is affected by the uncertainty of  the system parameters.}
	\label{fig:control:controlStrategy}
\end{figure}
\subsubsection{Controller}
We recall that to regulate the pose of the manipulated load to a desired configuration  $\configLEq = (\pLEq, \rotMatLEq)$, an admittance controller is used on the robots~\cite{2018h-TogGabPalFra}:
\begin{align}\label{eqn:admittance}
	\uR{i} = \inertiaA{i}^{-1}\left( -\dampingA{i}\dpR{i} - \springA{i}\pR{i} - \cableForce{i} + \paramA{i} \right),
\end{align}
where the positive-definite symmetric matrices $\inertiaA{i},  \dampingA{i}, \springA{i} \in \nR{3 \times 3}$ are, respectively, the virtual inertia of the robot,  and the damping and stiffness coefficients of a virtual spring-damper system that links the robot and a desired reference frame; 
$\paramA{i} \in \nR{3}$ is an additional forcing input that is properly set to steer the load to the desired configuration.

\begin{rmk}
	One can notice that~\eqref{eqn:admittance} requires only local information, i.e. the robot's state $(\pR{i},\dpR{i})$, which can be retrieved with standard onboard sensors like IMU, GPS, and cameras; the force applied by the cable  $\cableForce{i}$,  which can be directly measured by an onboard force sensor or estimated by a sufficiently precise model-based observer as done in~\cite{2017e-RylMusPieCatAntCacFra,2017-TagKamVerSieNie}.
	Therefore, the described method is decentralized and does not require explicit communication between the robots. 
\end{rmk}
\subsubsection{Closed-loop Model}
From equations~\eqref{eqn:load_dynamics} and~\eqref{eqn:admittance}, the closed-loop system dynamics can be written as $\ddconfig = \dynamicModelFun(\config,\dconfig,\paramA{})$ where
\begin{align}\label{eqn:closedLoopDynamics}
	\dynamicModelFun(\config,\dconfig,\paramA{}) 
			   = \matrice{
			\inertiaA{}^{-1}\left( -\dampingA{}\dconfigR{} - \springA{}\configR{} - \cableForce{} + \paramA{} \right) \\
			\InertiaL^{-1}\left( -\coriolisL(\dconfigL) -\gravityL + \graspL\cableForces \right)},
\end{align}
with $\configR = [\pR{1}^\top \vSpace \pR{2}^\top]^\top,$  $\config = (\configR, \configL)$, $\dconfigR = [\dpR{1}^\top \vSpace \dpR{2}^\top]^\top$ and
$\dconfig = [\dconfigR^\top \vSpace \dconfigL^\top]^\top$;  
$\paramA{} = [\paramA{1}^\top \vSpace \paramA{2}^\top]^\top$.
Furthermore $\inertiaA{} = \text{diag}(\inertiaA{1},\inertiaA{2})$,
$\dampingA{} = \text{diag}(\dampingA{1},\dampingA{2})$ and 
$\springA{} = \text{diag}(\springA{1},\springA{2})$.

In order to coordinate the motion of the robots in a decentralized way,  a  \textit{leader-follower} approach is used. 
In this way, only the designated leader will have active control over the position of the load.
On the other hand,  the other robot will follow, partially sustaining the weight of the load and contributing to the control of the load attitude.
Choosing without loss of generality, robot~1 as the leader and robot~2 as the follower, the leader-follower approach is achieved as previously proposed in~\cite{2017-TagKamVerSieNie, tagliabue2019robust, 2018h-TogGabPalFra, gabellieri2020study} by setting  $\springA{1} \neq \vZero$ and $\springA{2} = \vZero$.

In the following, we present Theorem \ref{theom:equilibriumConfigToParam}, from \cite{2018h-TogGabPalFra}, along with two definitions that will help the reader comprehend the contribution of this work.
\begin{defin}[Equilibrium configuration]
	$\config$ is an \emph{equilibrium configuration}, indicated as  $\configEq$,  if $\exists$ $\paramA{}$ s.t. $\vZero = \dynamicModelFun(\config,\vZero,\paramA{}),$
i.e, if the corresponding zero-velocity state is a forced equilibrium for the system~\eqref{eqn:closedLoopDynamics} for a certain forcing input $\paramA{}$. 
\end{defin}
\begin{defin}[Load internal force]\label{def:internalForce}
	For the considered system,  the \emph{load internal force} is defined as
\begin{align}
\internalTension := \tfrac{1}{2}\cableForce{}^\top \matrice{\eye{3} \vSpace -\eye{3}}^\top\rotMatL\vE{1}
\label{eq:internal_force},
\end{align}
	where $\matrice{\eye{3} \vSpace -\eye{3}}^\top\rotMatL\vE{1}\in Ker(\graspL)$.
	We have that
	\begin{itemize}
		\item if $\internalTension > 0$ the internal force causes a \emph{tension} in the load;
		\item if $\internalTension < 0$ the internal force causes a \emph{compression}.
	\end{itemize}	 
\end{defin} 

The following result, proven in~\cite{2018h-TogGabPalFra}, provides the expression of the  forcing input $\paramA{}$ and the robot configurations $\configR$ for which, given a desired load configuration $\configLEq$, $\config = (\configR, \configLEq)$ is an equilibrium configuration of the system.
\begin{thm}[equilibrium inverse  problem, provided in~\cite{2018h-TogGabPalFra}, reported here for completeness]\label{theom:equilibriumConfigToParam}
Consider the  closed-loop system~\eqref{eqn:closedLoopDynamics} and assume that the load is at a given desired configuration $ \configLEq = (\pLEq, \rotMatLEq)$. 
For each internal force $\internalTension \in \nR{}$, there exists a unique constant value of the forcing input $\paramA{}= \paramAEq{}{}$ (and a unique position of the robots $\configR=\configREq$)
such that $\configEq = (\configLEq,\configREq)$ is an equilibrium of the system.

 In particular $\paramAEq{}$ and $\configREq = [\pREq{1}^\top\; \pREq{2}^\top]^\top$ are given by 
 \begin{align}
 		\paramAEq{}(\configLEq,\internalTension) &= \springA{}\configREq + \cableForcesEq(\configLEq,\internalTension) 
 		\label{eqn:equilibriumConditions:paramA}	\\
 		\pREq{i}(\configLEq,\internalTension) &= \pLEq + \rotMatLEq\anchorPosL{i} + \left(\frac{\norm{\cableForceEq{i}}}{\springCoeff{i}} + \length{i}\right) \frac{\cableForceEq{i}}{\norm{\cableForceEq{i}}},	\label{eqn:equilibriumConditions:pR}
 \end{align}
 for $i=1,2$, where
 \begin{align}
 	\cableForcesEq(\configLEq,\internalTension) = \matrice{\cableForceEq{1} \\ \cableForceEq{2}} = \matrice{\frac{\anchorLength{2}\massL g}{\rm{L}}\\\frac{\anchorLength{1}\massL g}{\rm{L}} }\matrice{\eye{3} \\ \eye{3}}\vE{3} + \internalTension\matrice{\eye{3} \\ -\eye{3}} \rotMatLEq\vE{1}. \label{eqn:equilibriumConditions:cableForces}
 \end{align}
\end{thm}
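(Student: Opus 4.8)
The plan is to set $\dconfig=\vZero$ in the closed-loop dynamics~\eqref{eqn:closedLoopDynamics} and impose $\dynamicModelFun(\config,\vZero,\paramA{})=\vZero$. Since $\coriolisL(\vZero)=\vZero$, the two block rows decouple: the robot row yields the \emph{actuator equation} $\paramA{}=\springA{}\configR+\cableForce{}$, and the load row yields the \emph{load balance} $\graspL(\configLEq)\cableForces=\gravityL$. Here $\cableForce{}$ is not an independent unknown — through the cable law~\eqref{eqn:cableForce} it is a function of $\configR$ and $\configLEq$ — so the strategy is: first solve the load balance for the admissible cable forces, then invert the cable geometry to recover $\configREq$, and finally read $\paramAEq{}$ off the actuator equation.

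\emph{Load balance.} Writing $\graspL$ out with $\anchorPosL{1}=\anchorLength{1}\vE{1}$ and $\anchorPosL{2}=-\anchorLength{2}\vE{1}$, the equation $\graspL\cableForces=\gravityL$ splits into the force balance $\cableForce{1}+\cableForce{2}=\massL g\,\vE{3}$ and the torque balance $\skew{\vE{1}}\rotMatLEq^\top(\anchorLength{1}\cableForce{1}-\anchorLength{2}\cableForce{2})=\vZero$; the latter says precisely that $\anchorLength{1}\cableForce{1}-\anchorLength{2}\cableForce{2}$ is parallel to $\rotMatLEq\vE{1}$, leaving exactly one scalar free. This reflects $\dim Ker(\graspL)=1$, the kernel being spanned by the internal-force direction $\begin{bmatrix}\eye{3}&-\eye{3}\end{bmatrix}^\top\rotMatLEq\vE{1}$ of Definition~\ref{def:internalForce}. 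Eliminating $\cableForce{2}$ between the two vector equations gives the one-parameter family~\eqref{eqn:equilibriumConditions:cableForces}, the single free scalar being relabeled as the internal force $\internalTension$ via Definition~\ref{def:internalForce}. Hence for each $\internalTension\in\nR{}$ there is a unique admissible $\cableForcesEq(\configLEq,\internalTension)$.

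\emph{Robot positions and forcing input.} Assuming the cables are taut at the desired pose, i.e. $\norm{\cableForceEq{i}}\neq 0$, the cable law~\eqref{eqn:cableForce} is invertible: $\cableForce{i}=\norm{\cableForce{i}}\,\cableAttitude{i}/\norm{\cableAttitude{i}}$ forces $\cableAttitude{i}=\pR{i}-\anchorPos{i}$ to be aligned with $\cableForceEq{i}$, and the magnitude law gives $\norm{\cableAttitude{i}}=\norm{\cableForceEq{i}}/\springCoeff{i}+\length{i}$; combined with $\anchorPos{i}=\pLEq+\rotMatLEq\anchorPosL{i}$ this gives~\eqref{eqn:equilibriumConditions:pR} and fixes $\pREq{i}$ uniquely. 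The actuator equation then yields $\paramAEq{}=\springA{}\configREq+\cableForcesEq(\configLEq,\internalTension)$, i.e.~\eqref{eqn:equilibriumConditions:paramA}. Since every step above is an equivalence, the triple $(\cableForcesEq,\configREq,\paramAEq{})$ is the unique one rendering $\configEq=(\configLEq,\configREq)$ an equilibrium, and because these values satisfy both block rows of $\dynamicModelFun(\cdot,\vZero,\cdot)=\vZero$, it is indeed one.

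\emph{Main obstacle.} The only non-mechanical step is the load balance: recognizing that $\graspL$ drops rank by exactly one, that the deficiency lies along the internal-force direction, and hence that the cable forces — and with them $\configREq$ and $\paramAEq{}$ — are pinned down only after $\internalTension$ is chosen; the rest is linear-algebra bookkeeping and inverting Hooke's law. One should also treat the degenerate case $\norm{\cableForceEq{i}}=0$ (a slack cable, for which the cable direction and thus $\pREq{i}$ are undefined), which the statement tacitly rules out.
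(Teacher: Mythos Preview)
Your argument is correct and follows the natural route: impose $\dynamicModelFun(\config,\vZero,\paramA{})=\vZero$, split into the load balance $\graspL\cableForces=\gravityL$ and the actuator equation, solve the former by exploiting the rank deficiency of $\graspL$ along the internal-force direction, invert the cable law, and read off $\paramAEq{}$. Note, however, that the present paper does \emph{not} prove Theorem~\ref{theom:equilibriumConfigToParam} at all: it is explicitly quoted from~\cite{2018h-TogGabPalFra} and only reported for completeness, so there is no ``paper's own proof'' to compare against. Your sketch is essentially how the original reference argues as well.

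One small caveat: your claim that the free scalar is ``relabeled as the internal force $\internalTension$ via Definition~\ref{def:internalForce}'' is slightly loose. If you compute $\tfrac{1}{2}(\cableForceEq{1}-\cableForceEq{2})^\top\rotMatLEq\vE{1}$ from~\eqref{eqn:equilibriumConditions:cableForces} you get $\internalTension+\tfrac{(\anchorLength{2}-\anchorLength{1})\massL g}{2L}\,\vE{3}^\top\rotMatLEq\vE{1}$, which coincides with $\internalTension$ only when $\anchorLength{1}=\anchorLength{2}$ (the symmetric case treated in~\cite{2018h-TogGabPalFra}) or when the load is horizontal. This does not affect your proof of the theorem as stated---the map $\internalTension\mapsto(\cableForcesEq,\configREq,\paramAEq{})$ is still a bijection onto the set of equilibria---but the identification with Definition~\ref{def:internalForce} is an affine reparametrization rather than an equality in the general asymmetric case.
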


 From~\eqref{eqn:equilibriumConditions:paramA}, we can see that the forcing input is made up of two parts: one that depends on the robots' positions computed from the load equilibrium configuration according to kinematic relations, and the other one that depends on the equilibrium forces. The equilibrium forces are composed, according to \eqref{eqn:equilibriumConditions:cableForces}, by one term that compensates the gravity and one term that produces an internal force on the load whose intensity is $\internalTension$. \cite{2018h-TogGabPalFra} confirms that, if $\paramAEq{}$ is exactly applied to the closed-loop system \eqref{eqn:closedLoopDynamics},  $\configLEq$ is an isolated load equilibrium configuration if $\internalTension\neq0$, which is asymptotically stable if $\internalTension>0$ and unstable if $\internalTension<0$. Instead,  $\configLEq$ belongs to a continuum of equilibrium points containing any possible attitude of the load if $\internalTension=0$. 
In the remainder, Sec. \ref{sec:equilibria}-Sec. \ref{sec:error} contain the main theoretical contributions of the work.
\section{Equilibria under Uncertainty}\label{sec:equilibria}
In this section, the uncertainties are introduced and the equilibrium configurations of the system subject to those uncertainties are derived.
\begin{figure*}[t]
	\subfloat[][Representation of $\configSetEqZero$ when  $\condZero\neq0$. The position of the leader robot (in red) is the same in the two configurations. \label{fig:equilibrium:ZeroTi}]
	{\includegraphics[width=.28\textwidth]{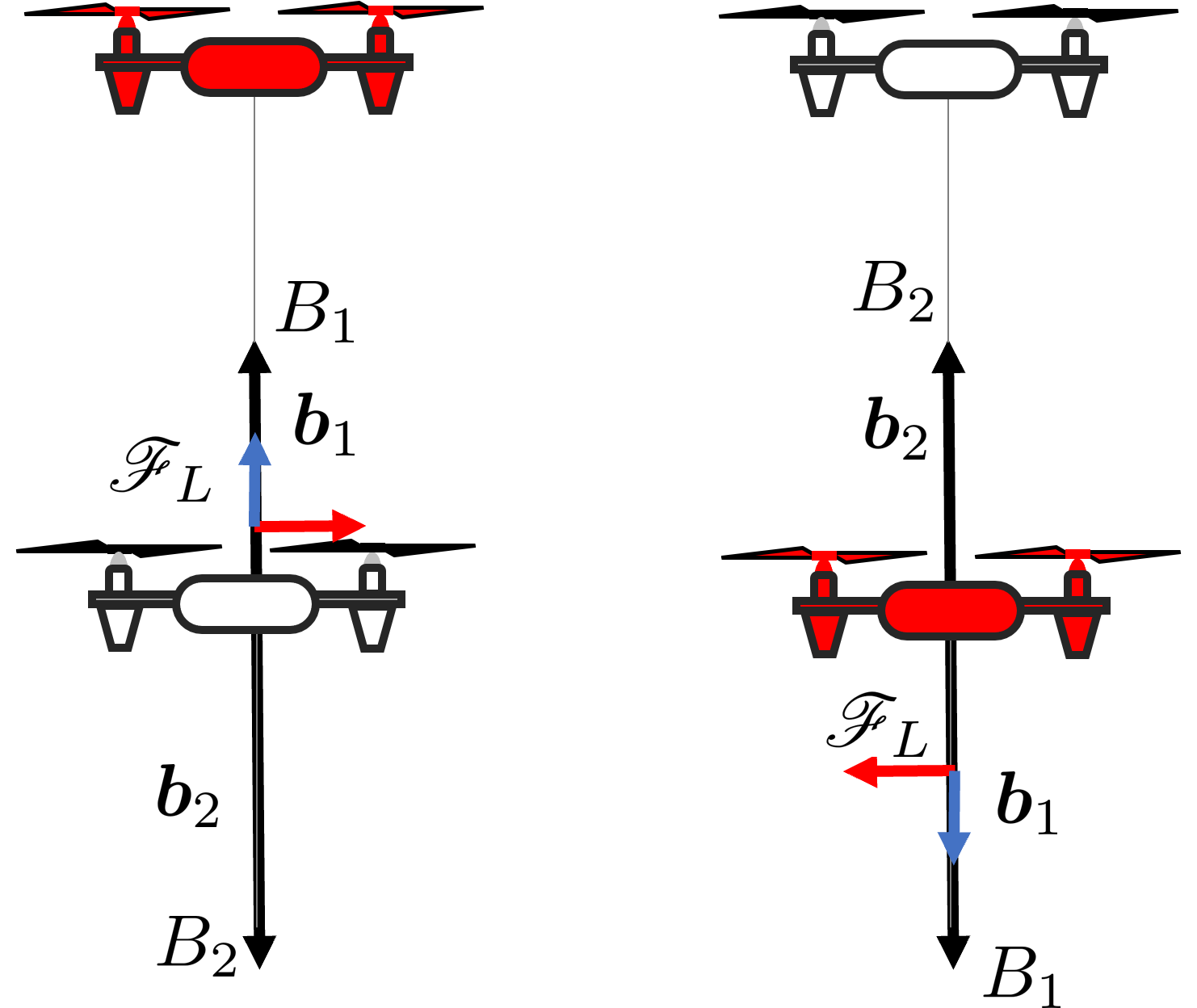}} \quad
		\subfloat[][Representation of $\configSetEqZero$ when  $\condZero=0$. The position of the leader robot (in red) is the same in all infinite equilibrium configurations, some of which are represented, and one of which is highlighted. Any other configuration with  $\anchorPoint{1}$ at the center of a sphere of radius $L$, $\anchorPoint{2}$ on its surface, and the cables vertical is in $\configSetEqZero$ when  $\condZero=0$.  \label{fig:equilibrium:ZeroTiZeroxi}]
	{\includegraphics[width=.40\textwidth]{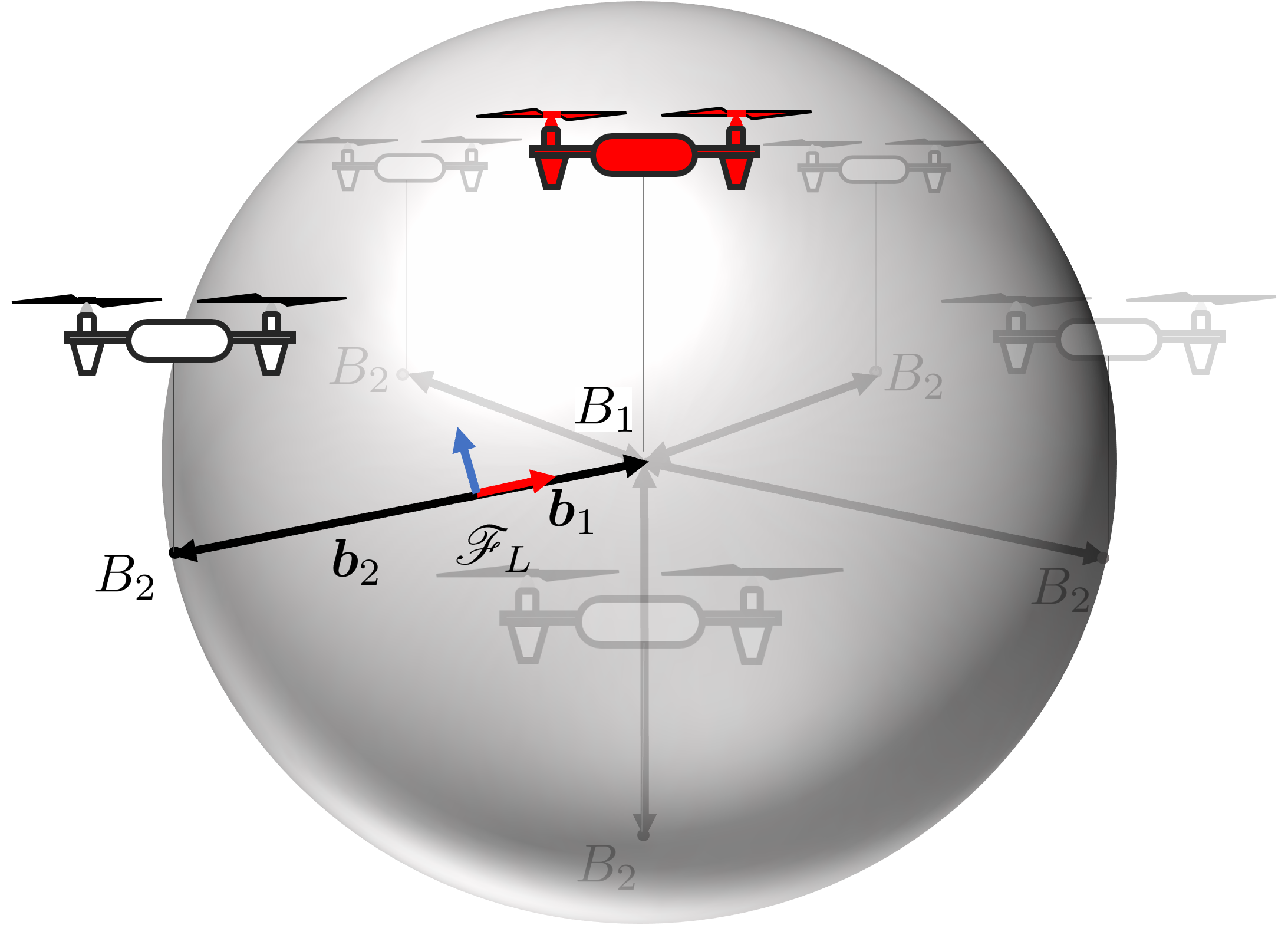}} \quad
	\subfloat[][Representation of $\configSetEq$ with  $\internalTension \neq 0$. The position of the leader robot (in red) is the same in the two configurations. On top, the load is under tension; below, it is under compression.  \label{fig:equilibrium:notZeroTi}]
	{\includegraphics[width=.28\textwidth]{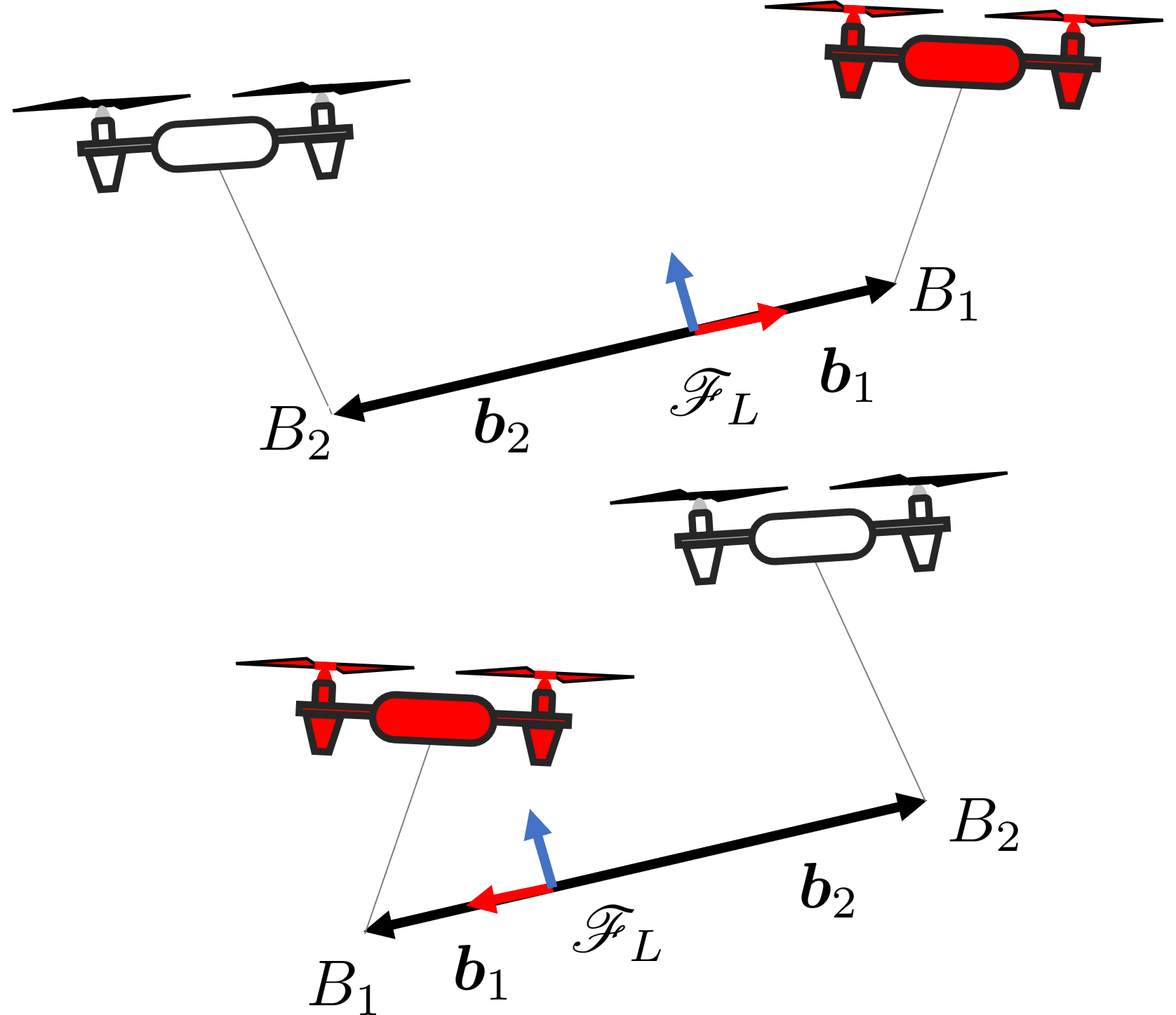}}\caption{Representation of the different equilibrium configurations of the system depending on  $\internalTension$ and $\condZero$.}
	\label{fig:subfig}
\end{figure*}
Note that, in reality,  $\paramAEq{}$ in \eqref{eqn:equilibriumConditions:paramA} cannot be applied exactly because of parametric uncertainties. Instead, one can apply only a version of $\paramAEq{}$, denoted with $\paramAEqU{}$,  computed using the nominal, uncertain values of the system parameters,  (see Figure~\ref{fig:control:controlStrategy} for a schematic representation of the control scheme with the nominal forcing input).  In the following, if not differently stated, we consider the general case in which a whole set of  uncertainties are present. These uncertainties affect the  control law \eqref{eqn:equilibriumConditions:paramA} and, in turn, affect the system equilibrium configurations. The uncertainties are the following:
\begin{itemize}
	\item $\massL$ is unknown, but only its nominal value $\massLU$ is available for the control design. We define the corresponding uncertainty as  $\Delta_m = \massL- \massLU$;
	\item $\anchorLength{1}$ is unknown, but only its nominal value ${\anchorLengthU{1}}$ is available. The corresponding uncertainty, affecting the load CoM position, is  $\Delta_b = \anchorLength{1} - {\anchorLengthU{1}}$;
	\item $\rm{L}$ is unknown, but only its nominal value $\hat{L}$ is available,  
	and we define $\Delta_\ell = \frac{1}{L}-\frac{1}{\hat{L}} = \ell- \hat{\ell}$ and $\Delta_L = L-\hat{L}$;
	\item the model of the cable $i$-th is inexact. Therefore,  the nominal length $\length{i}$ and stiffness $\springCoeff{i}$ are unknown, but their nominal values $ \lengthU{i}$ and ${\springCoeffU{i}}$ are available for the control design. We define the uncertainties $\Delta_{ki} = \springCoeff{i}- {\springCoeffU{i}}$, $\Delta_{\length{i}} = \length{i} - \lengthU{i}$.
	\end{itemize}
Note that the nominal value of $\anchorLength{2}$, $\anchorLengthU{2}$,  depends on the previously defined quantities according to the relationship  $\anchorLengthU{2}=\hat{L}-\anchorLengthU{1}$. However, for convenience, we also define $\Delta_{b2}=\anchorLength{2}-\anchorLengthU{2}$.

We shall now study the system's equilibrium configurations when ${\paramAEqU{}}$ is applied. 
%
%
\begin{thm}[equilibrium direct problem]
\label{theorem:paramToConfig}
	Given a desired load configuration $\configLEq = (\pLEq, \rotMatLEq)$  and the internal force $\internalTension \in \nR{},$ assume that  the forcing input $\paramAEqU{}{}$ 
	is computed
from~\eqref{eqn:equilibriumConditions:paramA} and is
	applied to the closed-loop system~\eqref{eqn:closedLoopDynamics}. Then, the equilibrium configurations are all and only the ones satisfying the following conditions:
	 \begin{align}
	 &\pR{1} ={\pREqInc{1}} -\springA{1}^{-1}(\Delta_m g\vE{3}):=\pREquilib{1}\label{pr1d3}\\
	 &\rotMatL:=\rotMatLEquilib~  {\rm{s.t.}}~  \skew{\vE{1}} {\rotMatLEquilib}^\top\Bigg[\left(\anchorLength{1}\massL - \frac{{\anchorLengthU{1}}{\massLU}L}{\hat{L}}\right)g\vE{3}   +\nonumber \\& + L\internalTension\rotMatLEq\vE{1} \Bigg] = \vect{0}\label{Rd3}\\
	 &\cableForce{1} = {\massL} g\vE{3} - \frac{\hat{\massL}{\anchorLengthU{1}}g}{\hat{L}}\vE{3}  + \internalTension{\rotMatLEq\vE{1}}:= {\cableForceEquilib{1}} \label{f1d3}\\
	 &\cableForce{2} =  \frac{\hat{\anchorLength{1}}{\massLU} g}{\hat{L}}\vE{3} - \internalTension\rotMatLEq\vE{1}= \cableForceEqInc{2}:=\cableForceEquilib{2}\label{f2d3}\\
	 &\pL = \pREquilib{1} -\rotMatLEquilib{\anchorPosL{1}} - \left(\frac{\norm{\cableForceEquilib{1}}}{{\springCoeff{1}}} + {\length{1}}\right) \frac{\cableForceEquilib{1}}{\norm{\cableForceEquilib{1}}}:=\pLEquilib\label{pld3},
	 \end{align}
	 where $\pREqInc{1}$ indicates the reference position of the leader robot computed as in \eqref{eqn:equilibriumConditions:pR}, namely starting from $\pLEq, \rotMatLEq$, but using the uncertain parameters.
	%
\end{thm}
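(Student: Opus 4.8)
The plan is to impose the definition of an equilibrium configuration, $\vZero=\dynamicModelFun(\config,\vZero,\paramAEqU{})$, directly on the closed-loop map \eqref{eqn:closedLoopDynamics}, and then to extract the five claimed identities one block at a time. Setting the velocity to zero kills both the Coriolis/centrifugal term $\coriolisL(\dconfigL)$ and the damping term $\dampingA{}\dconfigR$; since $\inertiaA{}$ and $\InertiaL$ are positive definite, the equilibrium condition is then equivalent to the pair
\begin{align*}
  \vZero &= -\springA{}\configR - \cableForces + \paramAEqU{}, \\
  \vZero &= -\gravityL + \graspL\cableForces ,
\end{align*}
where, by hypothesis, $\paramAEqU{}$ equals \eqref{eqn:equilibriumConditions:paramA} evaluated at $(\configLEq,\internalTension)$ with every system parameter replaced by its nominal (hatted) value. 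First I would use the leader--follower structure $\springA{1}\neq\vZero$, $\springA{2}=\vZero$: the second robot row then collapses to $\cableForce{2}=\paramAEqU{2}=\cableForceEqInc{2}$, i.e.\ exactly \eqref{f2d3}.

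Next I would exploit the two rows of the load balance $\graspL\cableForces=\gravityL$. Its translational row is $\cableForce{1}+\cableForce{2}=\massL g\vE{3}$, which with \eqref{f2d3} gives \eqref{f1d3}. Its rotational row, after inserting $\anchorPosL{1}=\anchorLength{1}\vE{1}$, $\anchorPosL{2}=-\anchorLength{2}\vE{1}$ and eliminating $\cableForce{2}$ through the translational row, reduces to $\skew{\vE{1}}\rotMatL^\top\bigl(L\cableForce{1}-\anchorLength{2}\massL g\vE{3}\bigr)=\vZero$; substituting \eqref{f1d3} for $\cableForce{1}$ and using $L-\anchorLength{2}=\anchorLength{1}$ turns the bracket into $\condZero g\vE{3}+L\internalTension\rotMatLEq\vE{1}$, which is precisely the constraint \eqref{Rd3}. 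Then the first robot row, with $\paramAEqU{1}=\springA{1}\pREqInc{1}+\cableForceEqInc{1}$, reads $\springA{1}(\pR{1}-\pREqInc{1})=\cableForceEqInc{1}-\cableForce{1}$; comparing \eqref{f1d3} with the hatted first block of \eqref{eqn:equilibriumConditions:cableForces} shows that the internal-force terms and the $\frac{\anchorLengthU{1}\massLU g}{\hat L}\vE{3}$ terms cancel, leaving the sole residual $\cableForce{1}-\cableForceEqInc{1}=\Delta_m g\vE{3}$, hence $\pR{1}=\pREqInc{1}-\springA{1}^{-1}\Delta_m g\vE{3}$, i.e.\ \eqref{pr1d3}. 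Finally, \eqref{pld3} would follow from the cable constitutive law \eqref{eqn:cableForce}: since $\cableForce{1}=\cableForceEquilib{1}\neq\vZero$ the first cable is taut, so Hooke's law inverts to $\cableAttitude{1}=\bigl(\norm{\cableForce{1}}/\springCoeff{1}+\length{1}\bigr)\cableForce{1}/\norm{\cableForce{1}}$, and inserting this together with $\anchorPos{1}=\pL+\rotMatL\anchorPosL{1}$ into $\cableAttitude{1}=\pR{1}-\anchorPos{1}$ and solving for $\pL$ gives \eqref{pld3}.

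Because each of these steps is an equivalence --- every equation is \emph{solved} for an unknown rather than merely used as an implication --- the argument would deliver the ``all and only'' statement at once: a zero-velocity state is an equilibrium under $\paramAEqU{}$ iff $(\pR{1},\rotMatL,\cableForce{1},\cableForce{2},\pL)$ satisfy \eqref{pr1d3}--\eqref{pld3}, the follower position $\pR{2}$ being then uniquely fixed from $\pL,\rotMatL,\cableForce{2}$ by the cable-$2$ kinematics exactly as in \eqref{eqn:equilibriumConditions:pR}. The only genuinely delicate parts are bookkeeping: keeping straight which quantities carry hats inside $\paramAEqU{}$, verifying the cancellation that isolates $\Delta_m g\vE{3}$ in the leader row, and compressing the three-component moment balance into the one-line form \eqref{Rd3} using that the kernel of $\skew{\vE{1}}$ is $\mathrm{span}\{\vE{1}\}$. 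I expect this algebraic reduction, together with the consistent substitution of the uncertain feed-forward, to be the main --- but entirely routine --- obstacle; no conceptual difficulty beyond it is anticipated.
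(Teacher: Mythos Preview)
Your proposal is correct and follows essentially the same route as the paper: impose $\vZero=\dynamicModelFun(\config,\vZero,\paramAEqU{})$, extract \eqref{f2d3} from the follower row via $\springA{2}=\vZero$, get \eqref{f1d3} from the load translational balance, \eqref{pr1d3} from the leader row, \eqref{Rd3} from the load rotational balance, and \eqref{pld3} from the cable kinematics. The only cosmetic difference is that the paper derives \eqref{pr1d3} before \eqref{Rd3}; your proposal is actually more explicit about the bookkeeping (the $\Delta_m g\vE{3}$ cancellation and the one-line form of the moment constraint) than the paper's own proof.
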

	 \begin{proof}
	 $\paramAEqInc{}$ is defined according to \eqref{eqn:equilibriumConditions:paramA}, where \eqref{eqn:equilibriumConditions:cableForces} becomes
	\begin{align}
 	\cableForcesEqInc(\configLEq,\internalTension) = \matrice{\cableForceEqInc{1} \\ \cableForceEqInc{2}} = \matrice{\frac{(\hat{L}-{\anchorLengthU{1}}){\massLU} g}{\hat{L}}\\\frac{{\anchorLengthU{1}}{\massLU} g}{\hat{L}}}  \matrice{\eye{3} \\ \eye{3}}\vE{3} + \internalTension\matrice{\eye{3} \\ -\eye{3}} \rotMatLEq\vE{1}. \label{eqn:equilibriumConditions:cableForces_d3}
 \end{align}
 \begin{figure*}[t]
 	\centering
 	\subfloat[][$\internalTension>0$, $\condZero>0$. \label{fig:equilibriumPlos_tiPoscsipos}]
	{\includegraphics[width=.24\textwidth]{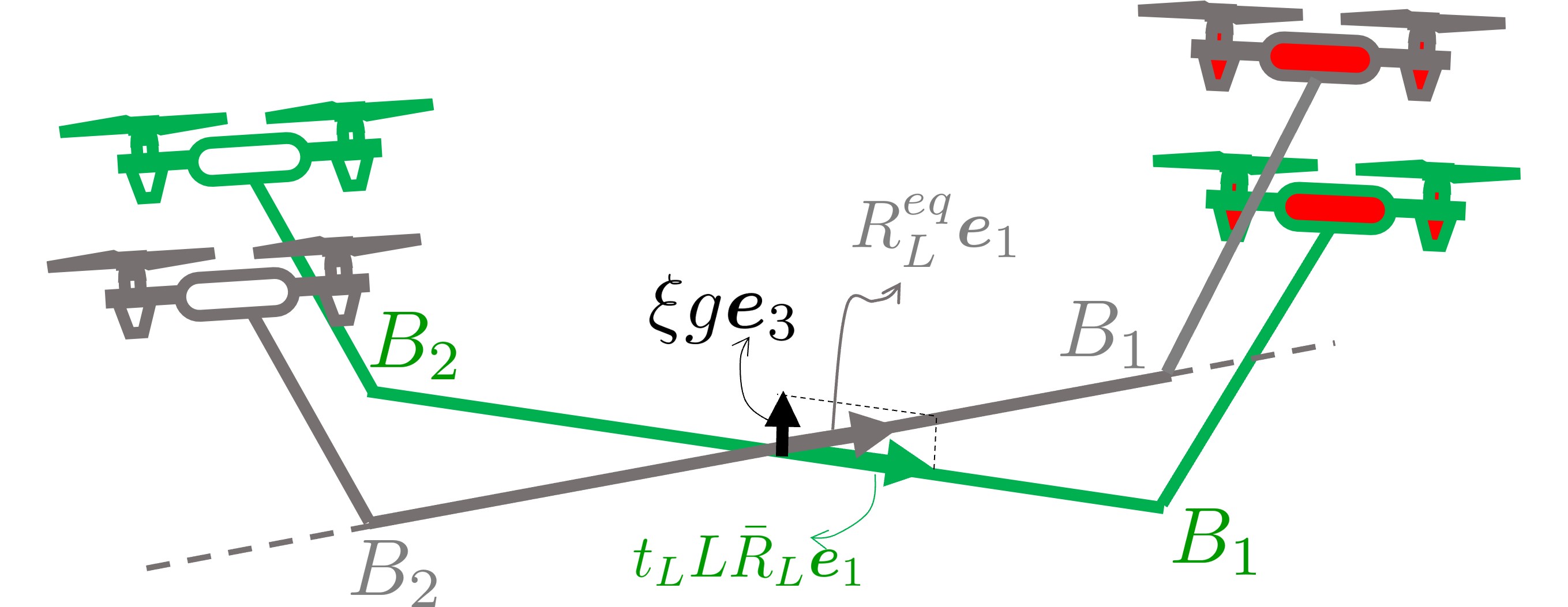}}
	\subfloat[][$\internalTension>0$, $\condZero<0$. \label{fig:equilibriumPlos_tiPoscsineg}]
	{\includegraphics[width=.24\textwidth]{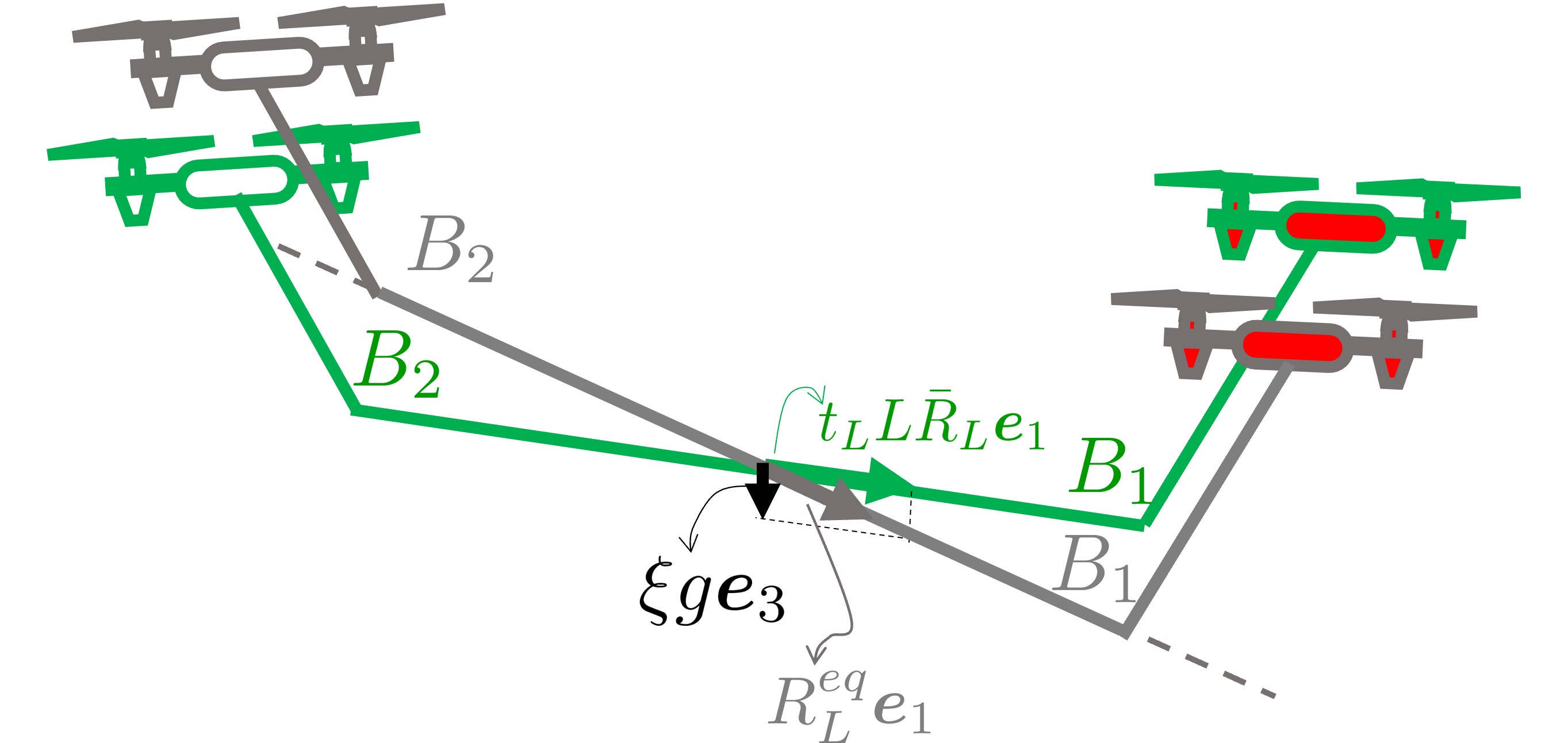}}
		\subfloat[][$\internalTension<0$, $\condZero>0$. \label{fig:equilibriumPlos_tinegcsipos}]
	{\includegraphics[width=.24\textwidth]{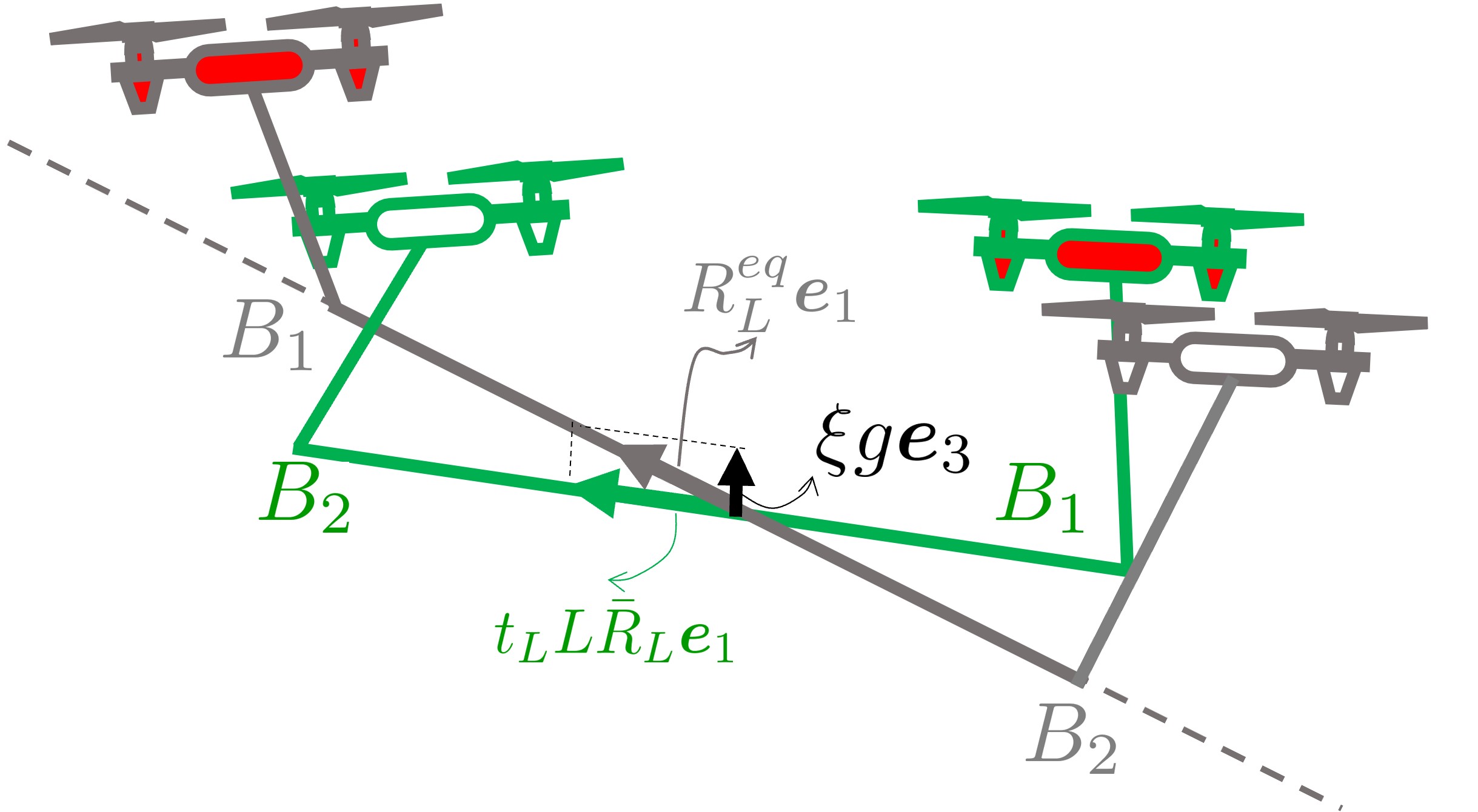}}
		\subfloat[][$\internalTension<0$, $\condZero<0$. \label{fig:equilibriumPlos_tinegcsineg}]
	{\includegraphics[width=.24\textwidth]{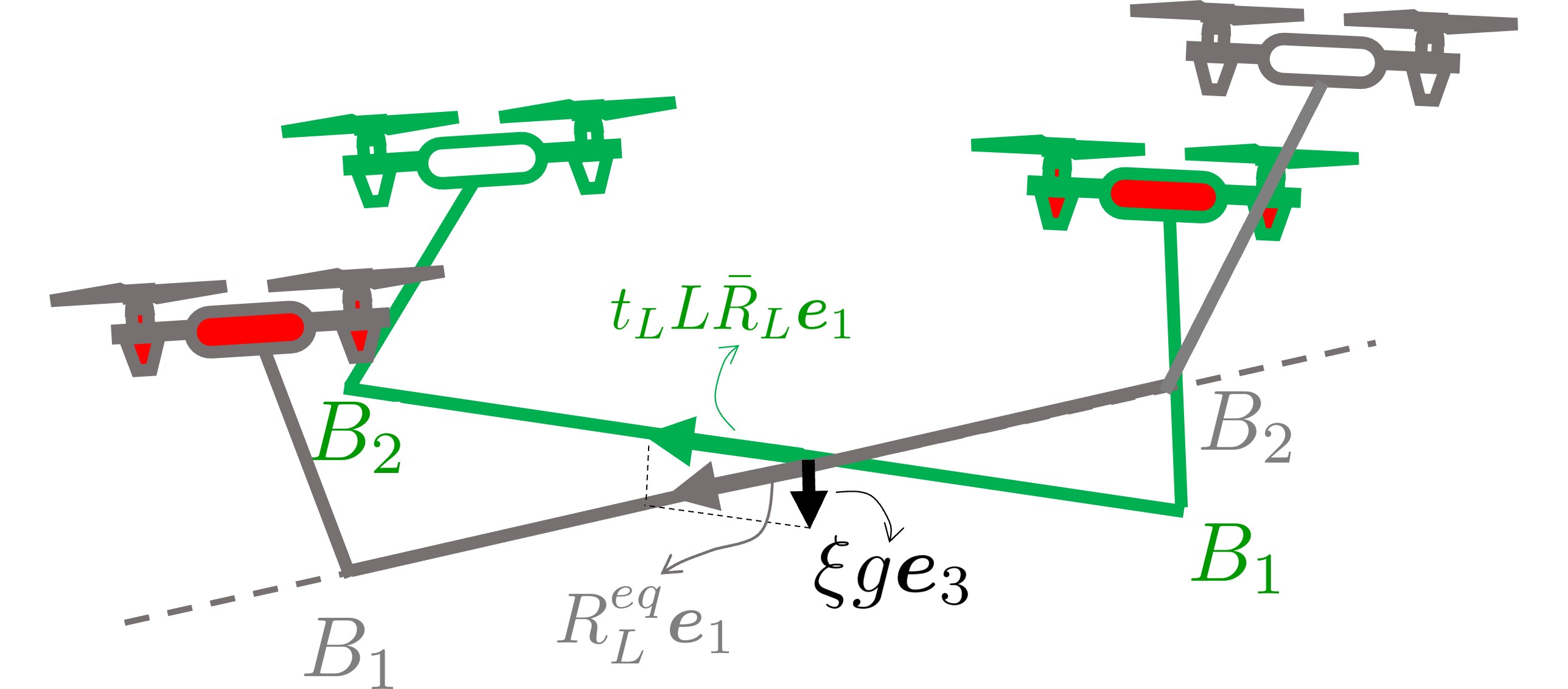}}
		\caption{Attitude of the load in the asymptotically stable equilibrium points $\stateSetEqPlus$ when $\internalTension\neq0$. In all the plots, the green configuration is the desired one  while in grey is the system with the actual  attitude at the equilibrium. In red are the leader robots. In $\stateSetEqPlus$, the attitude at the equilibrium is always such that $\rotMatLEquilib\vE{1}$ is aligned with $\internalTension\rotMatLEq\vE{1}$. However,  for $\internalTension<0$, this means the system is `flipped' compared to the desired configuration.}
	\label{fig:subfig2}
\end{figure*}
The control \eqref{eqn:admittance} is
 $$\uR{i} = \inertiaA{i}^{-1}\left( -\dampingA{i}\dpR{i} - \springA{i}\pR{i})-\cableForce{i} + \paramAEqInc{i} \right).$$
	Consider the equilibrium condition 
	\begin{equation}\label{eq:eq_conditions}
	\vect{0} = \dynamicModelFun(\config,0,\paramAEqInc{}).
	\end{equation}
	Equation \eqref{f2d3} is obtained by  substituting the last three lines of \eqref{eqn:equilibriumConditions:cableForces_d3} into \eqref{eq:eq_conditions} and solving the equilibrium condition for the follower robot. 
	Then, \eqref{f2d3} can be substituted into the load translational equilibrium (lines 7, 8, and 9 of \eqref{eq:eq_conditions}) to retrieve \eqref{f1d3}.
	\eqref{pr1d3} results from the first three lines of \eqref{eq:eq_conditions} using \eqref{f1d3}. Finally, \eqref{Rd3} can be obtained using \eqref{f1d3} and \eqref{f2d3} in the last three lines of \eqref{eq:eq_conditions}. Equation \eqref{pld3} is obtained applying the analogous of \eqref{eq:pL}.\end{proof}
\begin{defin}\label{def:configSetEq}
	Given a desired load configuration $\configLEq = (\pLEq, \rotMatLEq)$, internal force $\internalTension \in \nR{}$, and forcing input $\paramA{} = \paramAEqInc{}(\configLEq,\internalTension)$, we define the set of equilibrium configurations as $\configSetEq = \{ \config \text{ s.t. conditions of Theorem~\ref{theorem:paramToConfig} are satisfied} \}$
\end{defin}


From Theorem~\ref{theorem:paramToConfig}, we can distinguish between two scenarios:

\smallskip
	\emph{Scenario 1:}
	 If $\internalTension = 0$, condition~\eqref{Rd3} implies that the attitude of the load is such that $\rotMatLEquilib\vE{1}$ is aligned to $\vE{3}$, and conditions \eqref{f1d3} and \eqref{f2d3} imply that both cables are vertical. 
	In other words, the load at the equilibrium is, irrespective of the parametric uncertainties, aligned with the vertical direction; even an infinitesimal parametric uncertainty would lead the load to this undesired configuration in which the vertical load is aligned with the two vertical cables.  Such a configuration is clearly not realizable. Note that the position error of the system at the equilibrium still depends on the parametric uncertainties (see condition \eqref{pr1d3}). One can express the alignment between $\rotMatLEquilib\vE{1}$ and $\vE{3}$ as $(\rotMatLEquilib\vE{1})^\top\vE{3}=\pm1$. By convention, let us indicate with $\configSetEqZeroi{1}$ the system equilibrium configuration in which $(\rotMatLEquilib\vE{1})^\top\vE{3}=+1$ holds, namely the one in which the leader robot is above and the follower robot below, and with $\configSetEqZeroi{2}$ the other equilibrium configuration. 
Fig~\ref{fig:equilibrium:ZeroTi} illustrates the aforementioned equilibrium configurations. 
Note also that there is an additional possibility. 
With simple manipulation, remembering that $\anchorLengthU{1}=\anchorLength{1}+\Delta_b$, $\anchorLengthU{2}=\anchorLength{2}+\Delta_{b2}$, $\hat{L}=L+\Delta_L$, and defining $\condZero$ as follows,  the term in \eqref{Rd3} becomes: 
\begin{align}
	    &\condZero:=\left(\anchorLength{1}\massL - \frac{{\anchorLengthU{1}}{\massLU}L}{\hat{L}}\right) = \nonumber\\
	    &\Delta_m \anchorLength{1}+\frac{\Delta_{b2}}{\hat{L}}\massLU\anchorLength{1}-\frac{\Delta_{b1}}{\hat{L}}\massLU\anchorLength{2}. \label{eq:condzero}
	\end{align}
	If $\condZero=0$, \eqref{Rd3} is verified for every value of $\rotMatL$, and hence the equilibrium configurations $\configSetEqZeroi{}$ are infinite and such that the attitude of the load at equilibrium is arbitrary. This happens in the special case in which the parameters of the system are exactly known (this situation is the one we analyzed in~\cite{2018h-TogGabPalFra} and which we can now see as a  special case with $\condZero=0$). Indeed, $\condZero=0$ is verified also if the cable parameters are the sole uncertain ones, as it will be also more deeply discussed in the following. See \ref{fig:equilibrium:ZeroTiZeroxi} for a schematic representation of the mentioned equilibrium configurations. 

\smallskip
\emph{Scenario 2:}
  If $\internalTension \neq 0$, condition~\eqref{Rd3} holds when the vectors $
\rotMatLEquilib\vE{1}$ and
%
\begin{equation}\left(\condZero g\vE{3} + L\internalTension\rotMatLEq\vE{1} \right)\label{eq:vec2}\end{equation}are aligned. Similar to before, this condition holds in two possible cases: when the vectors are aligned and point in the same direction, or when they are aligned but point in opposite directions. 
Let us indicate with ${\rotMatLEquilib}^+$ the attitude of the load for which condition \eqref{Rd3} holds and the two vectors $
\rotMatLEquilib\vE{1}$ and \eqref{eq:vec2} point in the same direction.  We indicate the corresponding load equilibrium configuration as $\configSetEqPlus$. In the other case, when the two aforementioned vectors point in opposite directions, at the equilibrium one has that  ${{\rotMatLEquilib}^-={\rotMatLEquilib}^+\rotMatVectAngle{\zL}{\pi}}$; we indicate the corresponding equilibrium configuration as $\configSetEqMinus$. Depending on the sign of $\internalTension$ in $\paramAEqInc{}$, the forces in the cables place the load under tension in one equilibrium configuration and under 
compression in the other. Figure \ref{fig:equilibrium:notZeroTi} represents these equilibrium configurations. 
	\begin{rmk}\label{rmk:dm_tl}	Under the hypothesis that $\bar{\pitch}\neq \pi/2 + k\pi$, with ${k\in\mathbb{N}}$  and ${\internalTension\neq 0}$,as shown by~\eqref{Rd3}, at the equilibrium the following holds:
	\begin{align}
	\yaw &= \bar{\yaw} + k\pi \label{eq:inc_mass_yaw}\\
	\tan{\pitch} &= \tan{\bar{\pitch}} + \frac{-\condZero g}{L\internalTension\cos{\bar{\pitch}}}. \label{eq:inc_mass_pitch} 
	\end{align}
	In other words, \emph{the uncertainties have no effect on the yaw angle at equilibrium.} $\yaw$ may differ from $\bar{\yaw}$ by $\pi$ because, as already discussed, both $\configSetEqPlus$ and $\configSetEqMinus$ are equilibrium configurations.
	Moreover, \eqref{eq:inc_mass_pitch} tells us that not only is the attitude error proportional to the amount of uncertainty but also that, as $\internalTension$ decreases, the load at the equilibrium becomes \textit{increasingly vertical}.  
Eventually, for $\internalTension =0$ and uncertain parameters ($\condZero\neq 0$),~\eqref{Rd3} leads to
$ \vE{1} \times \rotMatL^\top \vE{3} = \vect{0}.
$
Namely, as previously observed, the load at the equilibrium is aligned with the vertical direction and the two cables are vertical despite the value of $\condZero\neq0$.  In other words, if $\internalTension=0$ the load attitude error is unaffected by the parametric uncertainties: the load will reach the same, clearly undesired, configuration regardless of the smallest  $\condZero\neq0$.
\end{rmk}

	In the remainder of this section, we briefly analyze the effects of each uncertain parameter on the final equilibrium.
	\subsection{Uncertainty on the load mass $\massL$}
	In this subsection, we only discuss uncertainty in the load's mass, while the other parameters are assumed to be  perfectly known. 
	Equations \eqref{pr1d3}-\eqref{f2d3} become:
	\begin{align}
	&\pREquilib{1} = {\pREqInc{1}}-\springA{1}^{-1}\Delta_m g\vE{3}\label{pr1dm}\\
&{\anchorLength{1}}\skew{\vE{1}}{\rotMatLEquilib}^\top g{\Delta_m}{\vE{3}} + \internalTension L \skew{\vE{1}}{\rotMatLEquilib}^\top\rotMatLEq\vE{1}=\vect{0} \label{Rdm}\\
&\cableForceEquilib{1} = \massL g\vE{3} -\frac{{\anchorLength{1}}{\massLU}g}{L} \vE{3}+ \internalTension \rotMatLEq\vE{1} = \cableForceEqInc{1} + \Delta_m g \vE{3} \label{f1dm}\\
&\cableForceEquilib{2} = \frac{\anchorLength{1} {\massLU g}}{L} \vE{3} - \internalTension \rotMatLEq\vE{1} = \cableForceEqInc{2} \label{f2dm}.
	\end{align}

The position of the load CoM at the equilibrium is different from $\pLEq$ and can be computed from  \eqref{pld3} using  \eqref{pr1dm}-\eqref{f1dm}. 

It is worth noting that the leader robot can detect a mismatch between the known commanded 
$\cableForceEqInc{1}$ and the actual force $\cableForceEquilib{1}$ measured at steady state. Such a discrepancy solely depends on $\Delta_m$, according to \eqref{f1dm}.
 Thus, the leader robot can compute $\Delta_m$ and, by knowing the nominal value ${\massLU}$, retrieve the actual value of the load mass $\massL$, which can be used to adjust its own reference force and position. 
 However, note that in a communication-less setup it is impossible for both robots to know the correct parameter value based simply on their own state. In fact, according to \eqref{f2dm}, the follower robot has no mismatch between the equilibrium and the force reference value.

\subsection{Uncertainty on the load length, $\rm{L}$, or CoM position, $\anchorLength{1}$}
Uncertainties on one of these two parameters have similar effects.
In one case, $\anchorLengthU{1}\neq \anchorLength{1}$, namely the load CoM is aligned to the cables attachment points on the load at an uncertain position but $L$ is exactly known; in the other case, $\anchorLength{1}$ is  exactly known but $L$ is not. 
	In both cases, at the equilibrium, the following conditions hold:
	\begin{align}
	&\pREquilib{}{1} = \pREqInc{1}\label{pr1db}\\
&\skew{\vE{1}} {\rotMatLEquilib}^{\top} ({\rm{L}\internalTension} \rotMatLEq\vE{1} + y {\massL} g \vE{3}) =\vect{0} \label{Rdb}\\
&\cableForceEquilib{1}  = \cableForceEqInc{1}\label{f1db}\\
&\cableForceEquilib{2} = \cableForceEqInc{2}\label{f2db},
	\end{align}
where $y = \Delta_b$ in one case, and $y=\anchorLength{1} \rm{L} \Delta_\ell$ in the other. $\pREqInc{1},\cableForceEqInc{1}$, and $\cableForceEqInc{2}$ are computed from \eqref{eqn:equilibriumConditions:pR} and \eqref{eqn:equilibriumConditions:cableForces}, where the corresponding uncertain parameter is used in place of the real one. 
	 	
	
In this case, the leader robot position and both cable forces at the equilibrium coincide with the respective reference values available to the robots (see~\eqref{pr1db}-\eqref{f2db}). Consequently, \emph{it is not possible for any of the robots to estimate the uncertain parameter at the equilibrium based on the local information they possess.}

	\subsection{Uncertainty on the cable length $\length{i}$ or stiffness $\springCoeff{i}$}
	Consider an uncertainty on the parameters of the i$-{th}$ cable such that the rest length is $ \length{i}\neq{\lengthU{i}}$ and the stiffness is $\springCoeff{i}\neq {\springCoeffU{i}}$. At the equilibrium,  $\cableForceEquilib{i} = \cableForceEq{i}$,  \begin{align}
	    \rotMatLEquilib = \rotMatLEq. \label{eq:Rlinc} 
	\end{align}  \begin{align}\label{pr1dlo}
	\pREquilib{1} = \pREqInc{1} = \pLEq + \rotMatLEq \anchorPosL{1}  + \left(\frac{\norm{\cableForceEq{1}}}{{{\springCoeffU{1}}}} + {{\lengthU{1}}}\right) \frac{\cableForceEq{1}}{\norm{\cableForceEq{1}}},
	\end{align} 
	and the value of $\pL$ at the equilibrium is 
	\begin{align}\label{pLeqdl0}
	&\pLEquilib = \pREqInc{1} -\rotMatLEq\anchorPosL{1} - \left(\frac{\norm{\cableForceEq{1}}}{{\springCoeff{1}}} + {\length{1}}\right) \frac{\cableForceEq{1}}{\norm{\cableForceEq{1}}}
	\end{align}

	We highlight that knowledge about the cable properties is required only when computing the reference position of the leader robot, according to \eqref{eqn:equilibriumConditions:pR}. What is more, only the information about $\springCoeff{1}$ and $\length{1}$ is  required. We conclude that  knowledge of \emph{ $\length{2}$ and $\springCoeff{2}$, is not necessary to stabilize the load at a desired pose. Moreover, $\length{1}$ and $\springCoeff{1}$ have no effect on the load attitude at equilibrium but they do influence the load position.} This is evident by simply substituting~\eqref{pr1dlo} into \eqref{pLeqdl0} with $\length{1}\neq\lengthU{1}$ and $\springCoeff{1}\neq\springCoeffU{1}$. Note that, since the robots' forces and the leader robot's position at the equilibrium coincide with the reference values available to the robots themselves, they are unaware of the load pose error induced by this uncertainty. 
\section{Stability Analysis}\label{sec:stability}
In this section, we shall analyze the stability of the equilibrium configurations discovered in Sec~\ref{sec:equilibria}.
First, being ${\state = (\config,\dconfig)}$ the state of the system, we define the following equilibrium states (subspaces of the state space): 
\begin{itemize}
		\item $\stateSetEqZeroi{} = \{ \state \; : \; \config \in \configSetEqZeroi{}, \; \dconfig = \vZero \}$,
	\item $\stateSetEqZeroi{1} = \{ \state \; : \; \config \in \configSetEqZeroi{1}, \; \dconfig = \vZero \}$,
	\item $\stateSetEqZeroi{2} = \{ \state \; : \; \config \in \configSetEqZeroi{2}, \; \dconfig = \vZero \}$,
	\item $\stateSetEqPlus = \{ \state \; : \; \config \in \configSetEqPlus, \; \dconfig = \vZero \}$,
	\item $\stateSetEqMinus = \{ \state \; : \; \config \in \configSetEqMinus, \; \dconfig = \vZero \}$.
\end{itemize}
\begin{thm}\label{thm:stability}
	Let us consider a desired load configuration $\configLEq$. For the system~\eqref{eqn:closedLoopDynamics}, let the constant forcing input be $\paramAEqInc{}$. 
	Then, 
	\begin{itemize}
		\item $\stateSetEqZeroi{1}$ is asymptotically stable if $\condZero>0$ 
		and unstable if $\condZero<0$;
		\item $\stateSetEqZeroi{2}$ is asymptotically stable if $\condZero<0$ 
		and unstable if $\condZero>0$.
		\item $\stateSetEqZeroi{}$ is a set of marginally stable equilibrium points if $\condZero=0$. 
		\item $\stateSetEqPlus$ is asymptotically stable
		\item $\stateSetEqMinus$ is unstable.
	\end{itemize}
\end{thm}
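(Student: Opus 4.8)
The plan is to construct a Lyapunov function for the closed-loop system and apply LaSalle's invariance principle to each candidate equilibrium set, exploiting the fact that the system is a mechanical system with dissipation (the admittance damping $\dampingA{}$ on the robots). The natural candidate is the total ``energy-like'' function
\begin{align}
V(\state) = \tfrac{1}{2}\dconfigR^\top\inertiaA{}\dconfigR + \tfrac{1}{2}{}^L\angVelL^\top\inertiaL\,{}^L\angVelL + \tfrac{1}{2}\massL\dpL^\top\dpL + U(\config),\nonumber
\end{align}
where $U$ collects the elastic potential of the two cables (from Hooke's law \eqref{eqn:cableForce}), the gravitational potential of the load, the potential of the virtual springs $\springA{i}$ on the robots, and the linear potential $-\paramAEqInc{i}{}^\top\pR{i}$ coming from the constant forcing input. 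First I would verify that the stationary points of $U$ on the configuration manifold coincide exactly with the equilibrium configurations characterized in Theorem~\ref{theorem:paramToConfig}, so that the sets $\configSetEqZeroi{1}$, $\configSetEqZeroi{2}$, $\configSetEqPlus$, $\configSetEqMinus$ are precisely the critical points of $U$ (with zero velocity). Then a direct computation gives $\dlyapunovFun = -\dconfigR^\top\dampingA{}\dconfigR \le 0$, since the cable forces and load dynamics terms are workless in the total energy balance and only the robot damping dissipates. LaSalle then confines trajectories to the largest invariant set inside $\{\dconfigR = \vZero\}$; on that set the robot accelerations must also vanish, which forces the cable forces to match $\paramAEqInc{} - \springA{}\configR$, and propagating this through the load equations pins the state to $\bigcup\,(\text{equilibrium sets})$.

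\textbf{Second}, to get the \emph{local} asymptotic stability / instability dichotomy I would analyze the Hessian (second variation) of $U$ at each critical configuration, since for a dissipative mechanical system a strict local minimum of the potential is asymptotically stable while a critical point with a descent direction is unstable (Chetaev). The key reduction is that the robot-position and cable-force directions are ``stiff'' (the virtual spring $\springA{1}$ and the cable springs $\springCoeff{i}$ make $U$ strictly convex in those variables), so the sign of the Hessian is governed entirely by the restriction to the load attitude variable $\rotMatL \in \SO{3}$, effectively the pitch $\pitch$ and yaw $\yaw$. Plugging the equilibrium relations \eqref{Rd3}, \eqref{f1d3}, \eqref{f2d3} into this reduced potential, the relevant term behaves like the potential energy of a pendulum in an effective ``gravity'' field proportional to $\condZero g\vE{3} + L\internalTension\rotMatLEq\vE{1}$: the configuration ${\rotMatLEquilib}^+$ aligns $\rotMatL\vE{1}$ with this field (a minimum), while ${\rotMatLEquilib}^-$ anti-aligns it (a maximum), giving $\stateSetEqPlus$ asymptotically stable and $\stateSetEqMinus$ unstable. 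When $\internalTension = 0$ the field is $\condZero g\vE{3}$, so $\configSetEqZeroi{1}$ (load-axis up) is the minimum iff $\condZero>0$ and $\configSetEqZeroi{2}$ (load-axis down) is the minimum iff $\condZero<0$, with the signs swapped for instability, which is exactly the claimed statement.

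\textbf{Third}, for the degenerate case $\condZero = 0$ with $\internalTension = 0$, the reduced potential is constant along the entire $\SO{2}$ of load attitudes, so the critical set $\stateSetEqZeroi{}$ is a nonisolated manifold of equilibria; here $U$ still has a (non-strict) minimum in the transverse directions, so $V$ is a genuine Lyapunov function showing $\stateSetEqZeroi{}$ is stable, and LaSalle gives convergence to the set, i.e. marginal (non-asymptotic) stability of the set as a whole — no single equilibrium in it is attractive. \textbf{The main obstacle} I anticipate is the instability claims: LaSalle/energy arguments only deliver stability of minima, so for $\stateSetEqMinus$ (and for $\configSetEqZeroi{2}$ when $\condZero>0$, etc.) I need a separate linearization or Chetaev-function argument to exhibit an explicit unstable direction in the load attitude, and I must check carefully that the ``stiff directions decouple'' claim survives — i.e. that the cross-terms in the Hessian between the load attitude and the robot/cable variables cannot destabilize a configuration that is a minimum in the reduced attitude coordinate, which is where a Schur-complement estimate on the full Hessian will be needed.
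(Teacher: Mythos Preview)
Your proposal is correct and follows essentially the same route as the paper: an energy-type Lyapunov function whose derivative along trajectories is $\dlyapunovFun=-\dconfigR^\top\dampingA{}\dconfigR\le 0$, LaSalle's invariance principle for asymptotic stability of the minima, and Chetaev's theorem for instability of the non-minima; you have also correctly identified the governing ``effective field'' $\condZero g\,\vE{3}+L\internalTension\,\rotMatLEq\vE{1}$ that determines which attitude is the minimum.

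The one methodological difference worth noting is how positive definiteness at the equilibrium is established. You propose a Hessian/Schur-complement reduction to the attitude coordinate, and correctly flag the cross-terms as the delicate part. The paper bypasses this entirely: instead of analyzing the Hessian of a single potential $U$, it builds $V$ as a sum of pieces that are \emph{individually} nonnegative and vanish at the equilibrium. In particular it uses the combinations $\tfrac{1}{2}\springCoeff{i}(\norm{\cableAttitude{i}}-\length{i})^2-\cableAttitude{i}^\top\cableForceEquilib{i}$ (shown in~\cite{2018h-TogGabPalFra} to have a global minimum at the equilibrium) together with an \emph{explicit} attitude term $V_R(\state)=-(\condZero g\,\vE{3}+\internalTension L\,\rotMatLEq\vE{1})^\top\rotMatL\vE{1}+{\rm const}$ (and its $\internalTension=0$ specializations $\pm\condZero g(1\mp\vE{3}^\top\rotMatL\vE{1})$). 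Because $V_R$ depends only on $\rotMatL$ and is itself nonnegative with the minimum at the target equilibrium, no cross-term/Schur-complement estimate is needed. For the instability claims the same $V$ is used: one checks directly that the equilibrium is an accumulation point of $\{V<0\}$ by perturbing only $\rotMatL$, which is exactly your ``descent direction in the attitude'' observation. So your plan works, but the paper's explicit additive construction of $V_R$ is a cleaner way to avoid the obstacle you anticipated.
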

\begin{proof}
Consider the following Lyapunov candidate function:
	\begin{align}
		\label{eqn:LyapunovFunction}
		V(\state ) =& \frac{1}{2} (\dconfigR^\top \inertiaA{} \dconfigR +  \errorpREq{}^\top\springA{}\errorpREq{} + \dconfigL^\top\InertiaL\dconfigL + \nonumber \\
		&  {+\springCoeff{1}}  {(\norm{\cableAttitude{1}}-\length{1})^2}  +  {\springCoeff{2}} {(\norm{\cableAttitude{2}} - {\length{2}})^2)}  -{ \cableAttitude{1}^{\top} \cableForceEquilib{1}  } + \nonumber\\& - \cableAttitude{2}^\top\cableForceEquilib{2} +V_0 +\Vadd, 
	\end{align}
	where the robot position error is $\errorpREq{}=\pR{}-\pREquilib{}$, $V_0$ is constant, and $\Vadd$ is an additional term explained in the following. Function \eqref{eqn:LyapunovFunction} is composed of  standard positive definite quadratic terms equal to zero in the equilibrium points and by two terms of the form ${\frac{1}{2}{\springCoeff{i}} (\norm{\cableAttitude{i}}-{\length{i}})^2)  -{\cableAttitude{i}}^\top\cableForceEquilib{i}}$, call them $V_i(\state)$: these are  linked to the elastic energy of the cables and  have a minimum at the equilibrium as well.  A detailed proof of the former point can be found in~\cite{2018h-TogGabPalFra}. The proof first shows that  $V_i(\state)$ is \textit{radially unbounded}, i.e., $\lim_{\norm{\state}\to\infty} V_i(\state) = \infty$. Then, based on this result and Theorem 1.15 of~\cite{2000-HorParVan}, the term has a global minimum. Finally, it has  been shown that the global minimum of $V_i(\state)$ corresponds to the considered equilibrium~\cite{2018h-TogGabPalFra}. 
	
	We define the value of $V_i(\state)$ at the equilibrium (its minimum value) as $-V_0$, and we cancel it in \eqref{eqn:LyapunovFunction} so that its value at the equilibrium is zero. 

 Let us start considering  $\stateSetEqZeroi{1}$ and $\condZero>0$. In this case, we set $\Vadd = \condZero g (1-\vE{3}^\top\rotMatL\vE{1})$. With this choice,  \eqref{eqn:LyapunovFunction} is zero in $\stateSetEqZeroi{1}$ because also the term $1-\vE{3}^\top\rotMatL\vE{1}$ is zero in $\stateSetEqZeroi{1}$ by definition (load aligned with the vertical with $\rotMatL\vE{1}$ and $\vE{3}$ pointing in the same direction) and positive elsewhere (the scalar product $\vE{3}^\top\rotMatL\vE{1}\leq1$ because $\vE{3}$ and $\rotMatL\vE{1}$ have both unit norm).


Studying the sign of the time derivative of~\eqref{eqn:LyapunovFunction},  using~\eqref{eqn:closedLoopDynamics},~\eqref{eqn:cableForce}, and~\eqref{eqn:equilibriumConditions:cableForces},  we obtain ${\dlyapunovFun  = -\dconfigR{}^\top\dampingA{}\dconfigR{}},$ which is clearly negative semidefinite. In particular, let us define $\dVZeroSet= \{\state \; : \; \dlyapunovFun =0\}$. In this case, we have $\dVZeroSet = \{\state \; : \; \dconfigR = \vZero,\; \angVelL = \vZero \}$.
	
	Since $\dlyapunovFun$ is only negative \emph{semi}definite, we rely on  \textit{LaSalle's invariance principle} to complete the proof: one can easily verify from \eqref{eqn:closedLoopDynamics} that the largest invariant set in $\dVZeroSet$  is $\stateSetEqZeroi{1}$. 
	
	Analogous reasoning can be used when $\condZero<0$. The computation of $\dot{V}$ does not change, and it is, thus, negative semidefinite. However,  $\stateSetEqZeroi{1}$ is a set of accumulation for the points where $V(\state) < 0$ if $\condZero<0$.
	To see this, consider $\dconfig = \vZero$ and all quantities at the equilibrium apart from $\rotMatL$, which is such that $\vE{3}^\top\rotMatL\vE{1} = 1 - \epsilon$, with $\epsilon > 0$ arbitrarily small, meaning that $\rotMatL$ is arbitrarily close to $\rotMatLEquilib$.
	Under these conditions, $V(\state) =  g\condZero\epsilon < 0$.
	All conditions of \textit{Chetaev's theorem}~(the formulation of both this and La Salle's invariance principle can be found, e.g., in~\cite{khalil2002nonlinear}) are satisfied. Hence, we can conclude that $\stateSetEqPlus$ is unstable.
To show that $\stateSetEqZeroi{2}$ is asymptotically stable if $\condZero<0$, we set $\Vadd=-\condZero g (1+\vE{3}^\top\rotMatL\vE{1})$, which is zero in $\stateSetEqZeroi{2}$ (when  $\vE{3}^\top\rotMatL\vE{1}=-1$ by definition), and positive elsewhere.  The same Lyapunov candidate function is used to show that $\stateSetEqZeroi{2}$ is unstable if $\condZero>0$.  The reasoning is exactly dual to the previous case, hence it is here omitted for the sake of space.

Consider now $\condZero=0$. We set $\Vadd=0$, so that \eqref{eqn:LyapunovFunction}  is zero in  $\stateSetEqZeroi{}$ and positive elsewhere. Moreover, $\dot{V}$ is negative semidefinite as before. One can easily show that the largest invariant set is  $\stateSetEqZeroi{}$. We can thus say that the system state converges to a state $\state\in\stateSetEqZeroi{}$, which is, however, composed of a continuum of equilibrium points. Hence they are only \textit{marginally} stable. 

Finally, we study the stability of the equilibrium points when $\internalTension\neq 0$. In this case, 
 we set ${\Vadd=-(\condZero g\vE{3} +\internalTension L \rotMatLEq\vE{1})^\top {\rotMatL}\vE{1}+V_0',}$ with $V_0'= (\condZero g\vE{3} +\internalTension L \rotMatLEq\vE{1})^\top {\rotMatLEquilib}\vE{1}$. Clearly, $\Vadd$ and hence $\lyapunovFun$ are zero at the equilibrium. Moreover, $\Vadd$ is positive elsewhere by definition of $\stateSetEqPlus$, ($\condZero g\vE{3} +\internalTension L \rotMatLEq\vE{1}$ and $\rotMatL\vE{1}$ are aligned and point in the same direction when $\rotMatL=\rotMatLEquilib$, so that $\Vadd$ has its minimum in $\stateSetEqPlus$). Moreover, $\dlyapunovFun=-\dconfigR{}^\top\dampingA{}\dconfigR{}$, and the application of LaSalle's invariance principle leads to the conclusion that $\stateSetEqPlus$ is an asymptotically stable equilibrium point, similarly to before. To show the instability of $\stateSetEqMinus$, we use the same choice for $\Vadd$. However, since $\condZero g\vE{3} +\internalTension L \rotMatLEq\vE{1}$ and $\rotMatL\vE{1}$ are anti-parallel in  $\stateSetEqMinus$, $\Vadd$ is still zero at the equilibrium but negative when $\rotMatL
$ is arbitrarily close to $\rotMatLEquilib$. $\stateSetEqMinus$ is a point of accumulation for the points in which $\dlyapunovFun$ is negative, while $\dlyapunovFun$ remains negative semi-definite. For Chetaev's theorem, we conclude that $\stateSetEqMinus$ is unstable. \end{proof}
It is important to highlight that, as shown in Figure \ref{fig:subfig2}, for $\internalTension>0,$ $\stateSetEqPlus$ corresponds to a configuration of the system in which $\rotMatLEquilib$ (irrespective of the sign of $\condZero$) is the closest condition to $\rotMatLEq$, namely to the desired attitude, with a displacement due to the parametric uncertainty. Instead, for $\internalTension<0$, is  $\stateSetEqMinus$ the equilibrium point in which the configuration of the load is the closest to the desired one. We can say that these configurations are the most desirable equilibrium configuration of the load in the presence of parametric uncertainties. As stated in \textit{Theorem~\ref{thm:stability}, ${\internalTension>0}$ stabilizes the most desirable equilibrium configuration of the load, which is, instead, unstable if ${\internalTension<0}$.}

\section{The role of the internal forces on the load error caused by parametric uncertainties}\label{sec:error}
In this section, we provide a formal analysis of the role that the internal force plays in determining the load pose at the equilibrium in the presence of parametric uncertainties. We shall consider the simultaneous presence of all the uncertainties listed in Sec \ref{sec:equilibria}. We start considering the load \emph{attitude}.  
\subsection{Load attitude error}\label{ssec:attitude}
	\begin{thm}\label{thm_sens}
	The load attitude error at the equilibrium $\errorRL$, is inversely proportional to the intensity of a positive internal force $\internalTension$. Furthermore, defining 
\begin{align}\label{eq:errR}
&\errorRL =\norm{\rotMatL\vE{1} \times \rotMatLEq\vE{1} }^2,
\end{align}
the error sensitivity w.r.t. $\Delta_m$, $\Delta_b$, $\Delta_{\springCoeff{i}}$, $\Delta_{\length{i}}$, and $\Delta_\ell$, defined as $\dfrac{\partial\errorRL}{\partial\Delta_m}$, $\dfrac{\partial\errorRL}{\partial\Delta_b}$, $\dfrac{\partial\errorRL}{\partial\Delta_{\springCoeff{i}}}$, $\dfrac{\partial\errorRL}{\partial\Delta_{{\length{i}}}}$, and $\dfrac{\partial\errorRL}{\partial\Delta_\ell}$, respectively, is given by:
	  \begin{align} \label{eq:sens_mass}
	  &\dfrac{\partial\errorRL}{\partial\Delta_m} = \frac{-2{\anchorLengthU{1}}\hat{\ell}g^2\alpha\cos{\pitch}^2}{\internalTension^2 L^2} \\ 
	  &\dfrac{\partial\errorRL}{\partial\Delta_b} = \frac{-2{\massLU}\hat{\ell}g^2\alpha\cos{\pitch}^2}{\internalTension^2 L^2}\\
	 &\dfrac{\partial\errorRL}{\partial{\Delta_{ki}}}= \dfrac{\partial\errorRL}{\partial{\Delta_{l0i}}}= 0\\
	   &\dfrac{\partial\errorRL}{\partial\Delta_\ell} = \frac{-2{\anchorLengthU{1}}{\massLU}g^2\alpha\cos{\pitch}^2}{\internalTension^2 L^2}
	  \end{align}
	  where $$\alpha:= (\anchorLength{1} - \Delta b) (\massL-\Delta m) (\ell -\Delta_{\ell}) - \massL\anchorLength{1}.$$
\end{thm}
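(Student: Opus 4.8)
The plan is to start from the characterization of the equilibrium attitude already derived in Theorem~\ref{theorem:paramToConfig}, in particular the alignment condition~\eqref{Rd3}, which says that at equilibrium $\rotMatLEquilib\vE{1}$ is parallel to $\condZero g\vE{3} + L\internalTension\rotMatLEq\vE{1}$. From this one gets, under the non-degeneracy assumption $\bar{\pitch}\neq\pi/2 + k\pi$, the explicit relations~\eqref{eq:inc_mass_yaw}--\eqref{eq:inc_mass_pitch} of Remark~\ref{rmk:dm_tl}: the yaw is unchanged and $\tan\pitch = \tan\bar\pitch - \condZero g/(L\internalTension\cos\bar\pitch)$. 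The first claim, that $\errorRL$ is inversely proportional to a positive $\internalTension$, then follows by noting that the deviation of $\pitch$ from $\bar\pitch$ is governed entirely by the term $\condZero g/(L\internalTension\cos\bar\pitch)$, which vanishes as $1/\internalTension$; so the closer $\rotMatLEquilib\vE{1}$ sits to $\rotMatLEq\vE{1}$ the larger $\internalTension$, and in the limit $\errorRL\to 0$.

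For the sensitivity formulas the plan is to express $\errorRL$ in closed form as a function of $\pitch$ and $\bar\pitch$ and then differentiate through the equilibrium relation. Since both $\rotMatL\vE{1}$ and $\rotMatLEq\vE{1}$ are unit vectors, $\errorRL = \norm{\rotMatL\vE{1}\times\rotMatLEq\vE{1}}^2 = \sin^2(\pitch - \bar\pitch)$ (the yaw contributes nothing by~\eqref{eq:inc_mass_yaw}). I would then use~\eqref{eq:inc_mass_pitch} to write $\pitch - \bar\pitch$, or more conveniently $\tan(\pitch-\bar\pitch)$, as an explicit function of $\condZero$, and observe from~\eqref{eq:condzero} how $\condZero$ depends affinely on each of $\Delta_m$, $\Delta_b$ (hence $\Delta_{b2}$), and $\Delta_\ell$. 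The chain rule then gives
\begin{align}
\frac{\partial\errorRL}{\partial\Delta_\star} = \frac{\partial\errorRL}{\partial\condZero}\,\frac{\partial\condZero}{\partial\Delta_\star},\nonumber
\end{align}
and each factor is a routine computation: $\partial\errorRL/\partial\condZero$ picks up the $g/(L\internalTension\cos\bar\pitch)$ factor from~\eqref{eq:inc_mass_pitch} together with the derivative of $\sin^2$, producing the common $1/\internalTension^2$ and $\cos^2\pitch$ structure, while $\partial\condZero/\partial\Delta_m$, $\partial\condZero/\partial\Delta_b$, $\partial\condZero/\partial\Delta_\ell$ supply the coefficients ${\anchorLengthU{1}}\hat\ell$, ${\massLU}\hat\ell$, and ${\anchorLengthU{1}}{\massLU}$ respectively. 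The vanishing of $\partial\errorRL/\partial\Delta_{ki}$ and $\partial\errorRL/\partial\Delta_{l0i}$ is immediate from~\eqref{eq:Rlinc}: the cable parameters do not enter condition~\eqref{Rd3} at all, so $\rotMatLEquilib$ — and therefore $\errorRL$ — is independent of them.

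I would organize the write-up as: (i) reduce $\errorRL$ to $\sin^2(\pitch-\bar\pitch)$; (ii) recall~\eqref{eq:inc_mass_pitch} and rewrite the equilibrium pitch deviation explicitly in terms of $\condZero$; (iii) read off $\partial\condZero/\partial\Delta_\star$ from~\eqref{eq:condzero}; (iv) assemble via the chain rule and simplify to match the stated right-hand sides, identifying the quantity $\alpha$ as the factor coming from the derivative of the $\sin^2$/$\tan$ composition evaluated at equilibrium. The main obstacle I anticipate is purely bookkeeping: carefully tracking the algebra that converts the $\tan$-based relation~\eqref{eq:inc_mass_pitch} into the $\cos^2\pitch$ form appearing in the statement, and verifying that the messy equilibrium-dependent prefactor collapses exactly into the compact expression $\alpha = (\anchorLength{1}-\Delta_b)(\massL-\Delta_m)(\ell-\Delta_\ell) - \massL\anchorLength{1}$ — i.e. that $\alpha$ is precisely $\hat{b}_{01}\hat{m}_L\hat\ell - \massL\anchorLength{1}$, the mismatch between the nominal and true product $b_{01}m_L\ell$. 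A secondary point to be careful about is the sign and the evaluation point (equilibrium vs.\ desired) of $\cos\pitch$ versus $\cos\bar\pitch$, since~\eqref{eq:inc_mass_pitch} mixes the two; I would handle this by differentiating implicitly rather than substituting a closed form for $\pitch$.
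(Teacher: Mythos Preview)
Your proposal is correct and reaches the result by a route that differs from the paper's in its mechanics, though not in its starting point. Both begin from the alignment condition~\eqref{Rd3}. You then pass to scalars: you reduce $\errorRL$ to $\sin^2(\pitch-\bar\pitch)$ via~\eqref{eq:inc_mass_yaw}, and propose to differentiate implicitly through the scalar relation~\eqref{eq:inc_mass_pitch}, factoring the computation as $(\partial\errorRL/\partial\condZero)(\partial\condZero/\partial\Delta_\star)$. The paper instead stays in vector form: it rewrites~\eqref{Rd3} in $\frameW$, isolates the cross product $\rotMatLEquilib\vE{1}\times\rotMatLEq\vE{1}$ as an explicit vector $\vect{x}$ proportional to $\rotMatLEquilib\vE{1}\times g\vE{3}$, sets $\errorRL=\vect{x}^\top\vect{x}$, and obtains each sensitivity directly as $2\vect{x}^\top(\partial\vect{x}/\partial\Delta_\star)$, collapsing the result with the identity $(\vect{a}\times\vect{b})^\top(\vect{a}\times\vect{c})=|\vect{a}|^2(\vect{b}^\top\vect{c})-(\vect{a}^\top\vect{b})(\vect{a}^\top\vect{c})$. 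The vector route sidesteps exactly the bookkeeping you flagged as the main obstacle---the $\tan$-to-$\cos^2\pitch$ conversion and the implicit differentiation in the equilibrium angle---and makes the appearance of $\alpha$ and of $\cos^2\pitch$ (the latter coming from $(\rotMatLEquilib\vE{1})^\top\vE{3}=-\sin\pitch$) almost immediate. Your scalar route is more elementary and makes the dependence on $\condZero$, hence on each $\Delta_\star$, structurally transparent via the chain rule; it costs a bit more algebra but is equally valid.
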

\begin{proof}
With a positive internal force,  at the equilibrium, \eqref{eq:inc_mass_yaw} with $k=0$ holds. Thus, the quantity $|\pitch^{eq}-\bar{\pitch}|$ is a viable indicator of the attitude error. According to  \eqref{eq:inc_mass_pitch} and for monotonicity of the tan() function, the difference between $\pitchDes$ and $\pitchEq$ varies with the quantity $\condZero g/(L\internalTension\cos{\pitchDes})$, which is inversely proportional to $\internalTension$. Hence, also the error is.
 Rewrite now \eqref{Rd3} in $\frameW$ as:
	 \begin{equation}
	 \rotMatLEquilib\vE{1}\times\left[\left(\anchorLength{1}\massL - \frac{\anchorLengthU{1}{\massLU}L}{\hat{L}}\right)g\vE{3} + L\internalTension\rotMatLEq\vE{1}\right] =\vect{0}.\label{Rd3W}
	 \end{equation}
Define also:
\begin{align}
 &\frac{\anchorLengthU{1}\massLU L-\anchorLength{1}\massL\hat{L}}{\internalTension L\hat{L}}(\rotMatLEquilib\vE{1}\times g\vE{3})
 := \vect{x} .
 \end{align}
 Thus, from \eqref{Rd3W}, we have that $
 \rotMatLEquilib \vE{1} \times \rotMatLEq \vE{1} = \vect{x}$
	 and, from \eqref{eq:errR}, that
	 $
	 \errorRL = \vect{x}^\top\vect{x}.
	$
Regarding the sensitivity, we show the proof for \eqref{eq:sens_mass} only, because the other cases follow the exactly same analysis.
We can write the sensitivity as:
\begin{align}
&\dfrac{\partial\errorRL}{\partial\Delta_m} = 2\vect{x}^\top\dfrac{\partial \vX}{\partial \Delta_m} =\nonumber\\
&= 2[\frac{1}{\internalTension L} \rotMatL\vE{1} \times (\alpha)g\vE{3}]^\top[\frac{1}{\internalTension L} \rotMatL\vE{1} \times(\Delta b -\anchorLength{1})g\vE{3}] \label{eq:sensdm_partial}
\end{align}
Eventually, \eqref{eq:sensdm_partial} can be rewritten as \eqref{eq:sens_mass} by remembering that, given three vectors $\vect{a}, \vect{b},$ and $\vect{c}$  $$(\vect{a} \times \vect{b})^\top (\vect{a} \times \vect{c}) = |\vect{a}|^2 (\vect{b}^\top \vect{c})-(\vect{a}^\top \vect{b})(\vect{a}^\top \vect{c}).$$ 
 Note that we are considering $\internalTension\neq0$ by assumption. \end{proof}
 \begin{rmk}
The definition in  \eqref{eq:errR} is a suitable metric for the attitude error.  if we consider the equilibrium point $\stateSetEqPlus$, namely the one in which the displacement between $\rotMatLEquilib\vE{1}$ and $\rotMatLEq\vE{1}$ is the smallest, and hence our desired equilibrium point.
Firstly, $\rotMatL\vE{1}$ is enough to describe the entire attitude of the \emph{beam-like} load. Secondly, $\errorRL$ is zero when $\rotMatLEquilib = \rotMatLEq$ and increases with the displacement between the two vectors $\rotMatLEquilib\vE{1}$ and $\rotMatLEq\vE{1}$, at least locally (for displacements smaller than $\pm\pi/2$).
 \end{rmk}
 
Moreover, Theorem~\ref{thm_sens} shows that increasing the  intensity of the internal force $\internalTension$ not only makes the attitude error smaller in presence of parametric uncertainties, but it also makes the error more \emph{robust to variations of such uncertainties.} 

This last aspect may be of particular practical interest:  as a matter of fact, parametric uncertainty variations take place every time the actual physical parameters of the system change. A possible real-world scenario is the transportation of objects that are slightly different from each other, e.g., in mass and length. One may want to transport the objects without changing every time the controller parameters for the sake of time, thus dealing with varying parametric uncertainties. Especially interesting, as also highlighted in~\cite{aghdam2016cooperative}, is the variation affecting the CoM position, which may change online when transporting moving masses, i.e. containers of liquids, or boxes with smaller objects free to move inside.
The previous analysis suggests that in all these cases having a larger value of $\internalTension$ is of uttermost benefit, resulting in an error less sensitive to the aforementioned parametric variations.

\subsection{Load position error} \label{sec:err-position}
Differently from what happens to the load attitude error, the load position error at the equilibrium does not necessarily decrease when $\internalTension$ increases. 
While to claim a positive statement a comprehensive proof is needed, as we did in Sec~\ref{ssec:attitude}, to deny a positive statement, as we do in this section,  a  counterexample is enough.
First, it is easy to see that, when only $\length{1}$ is uncertain, the load position error at the equilibrium, $\errorPL:=\pLEquilib- \pLEq$ is 
\begin{equation}
    \errorPL= \Delta_l\frac{b_2mg\vE{3}+\internalTension\rotMatLEq\vE{1}}{\norm{b_2mg\vE{3}+\internalTension\rotMatLEq\vE{1}}}\label{eq:epl_l_inc}.
\end{equation}
Eq. \eqref{eq:epl_l_inc} suggests that $\errorPL$ is equal to a unit vector multiplied by $\Delta_{\length{1}}$, thus, its module is independent of the value of $\internalTension$. Moreover, in the next section, we provide two numerical examples showing that, depending on the specific combination and values of uncertainties, $\errorPL$ may even have a non-monotonic evolution for increasing values of $\internalTension$, with an initial increase or decay. 

We show, however, that the load position error at the equilibrium can be corrected, ideally to zero, without altering the leader-follower architecture nor requiring direct communication between the robots.

We recall that, due to parametric uncertainties, the reference position given to the leader robot is \begin{equation}
\pREqInc{1} = \pLEq +\rotMatLEq {\anchorPosLU{1}} + \left(\frac{{\norm{\cableForceEqInc{1}}}}{{\springCoeffU{1}}} +{\lengthU{1}}\right)\frac{{\cableForceEqInc{1}}}{\norm{{\cableForceEqInc{1}}}}. \label{pr1_expl}
\end{equation}
By using kinematics, \eqref{pr1d3}, and \eqref{pr1_expl}, the load position at the equilibrium is
\begin{align}&\pLEquilib = \pREquilib{1} - \left(\frac{{\norm{\cableForceEquilib{1}}}}{{\springCoeff{1}}} + {\length{1}}\right)\frac{{\cableForceEquilib{1}}}{\norm{{\cableForceEquilib{1}}}} - \rotMatLEquilib \anchorPos{1}\nonumber\\ 
& = \pLEq +\rotMatLEq {\anchorPosLU{1}} + \left(\frac{{\norm{\cableForceEqInc{1}}}}{{\springCoeffU{1}}} + {\lengthU{1}}\right)\frac{{\cableForceEqInc{1}}}{\norm{{\cableForceEqInc{1}}}} - K_A^{-1}\Delta_m g\vE{3} +\nonumber \\ &- \left(\frac{{\norm{\cableForceEquilib{1}}}}{{\springCoeff{1}}} + {\length{1}}\right)\frac{{\cableForceEquilib{1}}}{\norm{{\cableForceEquilib{1}}}} - \rotMatLEquilib \anchorPos{1}. \label{eq:pL}
\end{align} 
From \eqref{eq:pL}, we have an expression for $\errorPL$.
Now, \emph{if the leader robot knows the load position}, it can recognize that, at steady state, $\tilde{\pL}\neq0$ holds, and it can adjust its position reference to $^2\pREqInc{1}$ accordingly, with 
\begin{equation}
^2\pREqInc{1} = \pREqInc{1} - \errorPL.\label{eq:pos_p1_new}
\end{equation}
In this way, there will be a new equilibrium in which the leader robot position is
\begin{equation}
\pREquilib{1}=    {^2\pREqInc{1}} - K_A^{-1}\Delta_m g\vE{3}
\end{equation}
and thus \eqref{eq:pL} becomes 
\begin{equation}
\pLEquilib = \pLEq
\end{equation} 
It is important to highlight that the leader robot position only influences the load position and  not the attitude at the equilibrium, which depends only upon the reference forces computed based on $\rotMatLEq$. Indeed,  by evaluating \eqref{eqn:closedLoopDynamics} at the equilibrium, the last three rows are
\begin{align*}
	\skew{\anchorPosL{1}}{\rotMatLEquilib}^\top\cableForceEquilib{1} +  \skew{\anchorPosL{2}}{\rotMatLEquilib}^\top\cableForceEquilib{2} &= \vZero,
		\end{align*}
		which becomes, substituting $\cableForceEquilib{i}$, eq.  \eqref{Rd3}.
	Hence, the leader robot can correct the load position error, while the internal force independently acts decreasing the attitude error. Because the load's position can be steered relying solely on the leader robot, unlike the load's  attitude which is determined by the cooperative actions of both robots, the control approach  can  maintain its distributed nature. However, to correct the load position error, the leader robot must have access to the load position. This implies that the leader robot would require additional sensors, such as cameras to accomplish this task.

\section{Numerical Validation}\label{sec:exp_num}

  \begin{figure}[t]
    \centering
    \includegraphics[width=0.90\columnwidth]{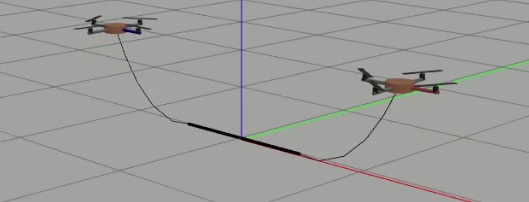}
    \caption{The robots are taking off in the simulated scenario. The sagging effect reproduced by the cable model employed in the simulator is clearly visible.}
    \label{fig:takeoff}
\end{figure}

 \begin{figure*}[t]
\centering
\subfloat[][$\Delta_m\neq0$\label{m_att_err}]{\includegraphics[trim={3cm 8cm 4cm 8cm},clip,width=0.33\textwidth]{./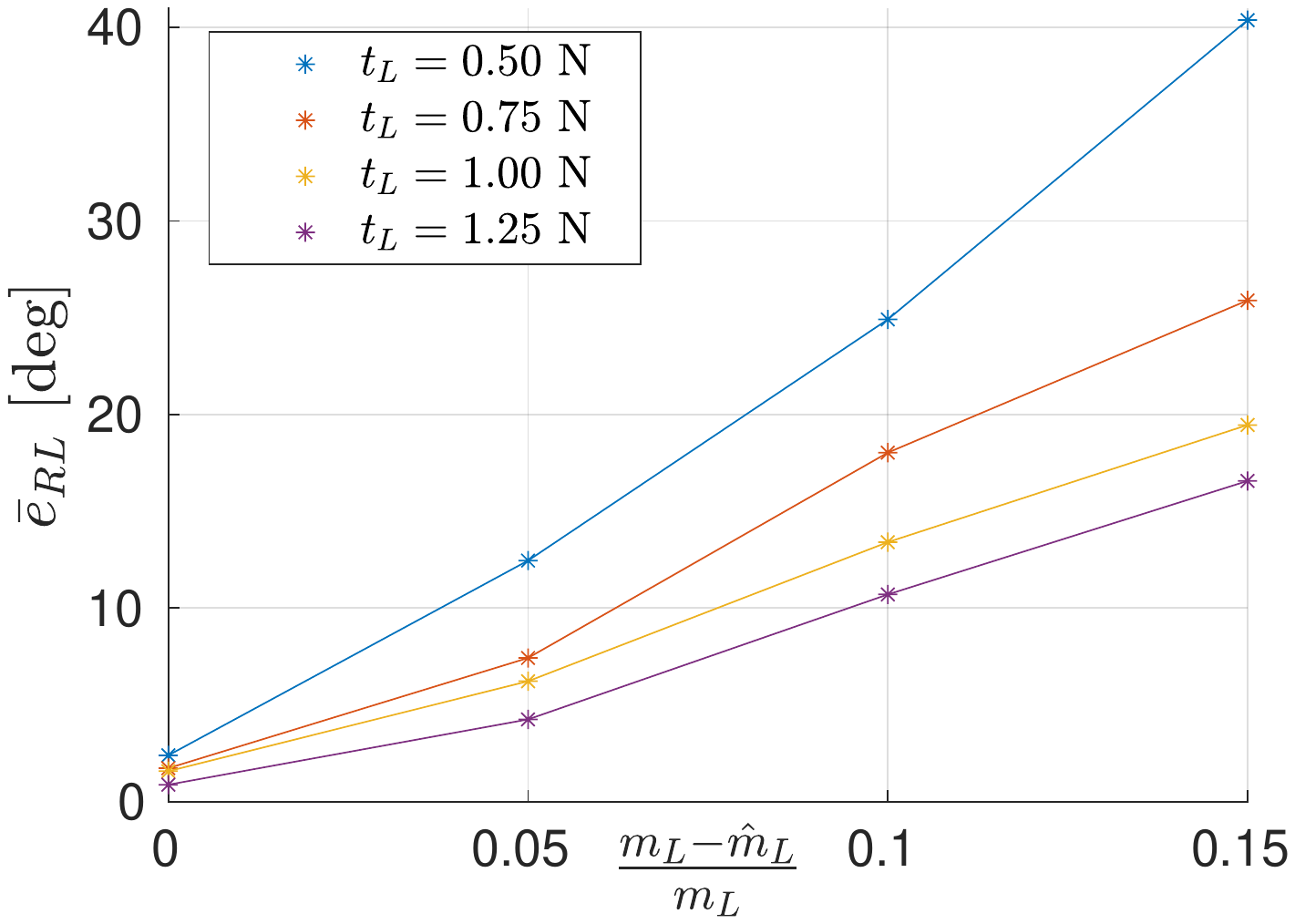}}\subfloat[][$\Delta_\ell\neq0$\label{L_att_err}]{\includegraphics[trim={3cm 8cm 4cm 8cm},clip,width=0.33 \textwidth]{./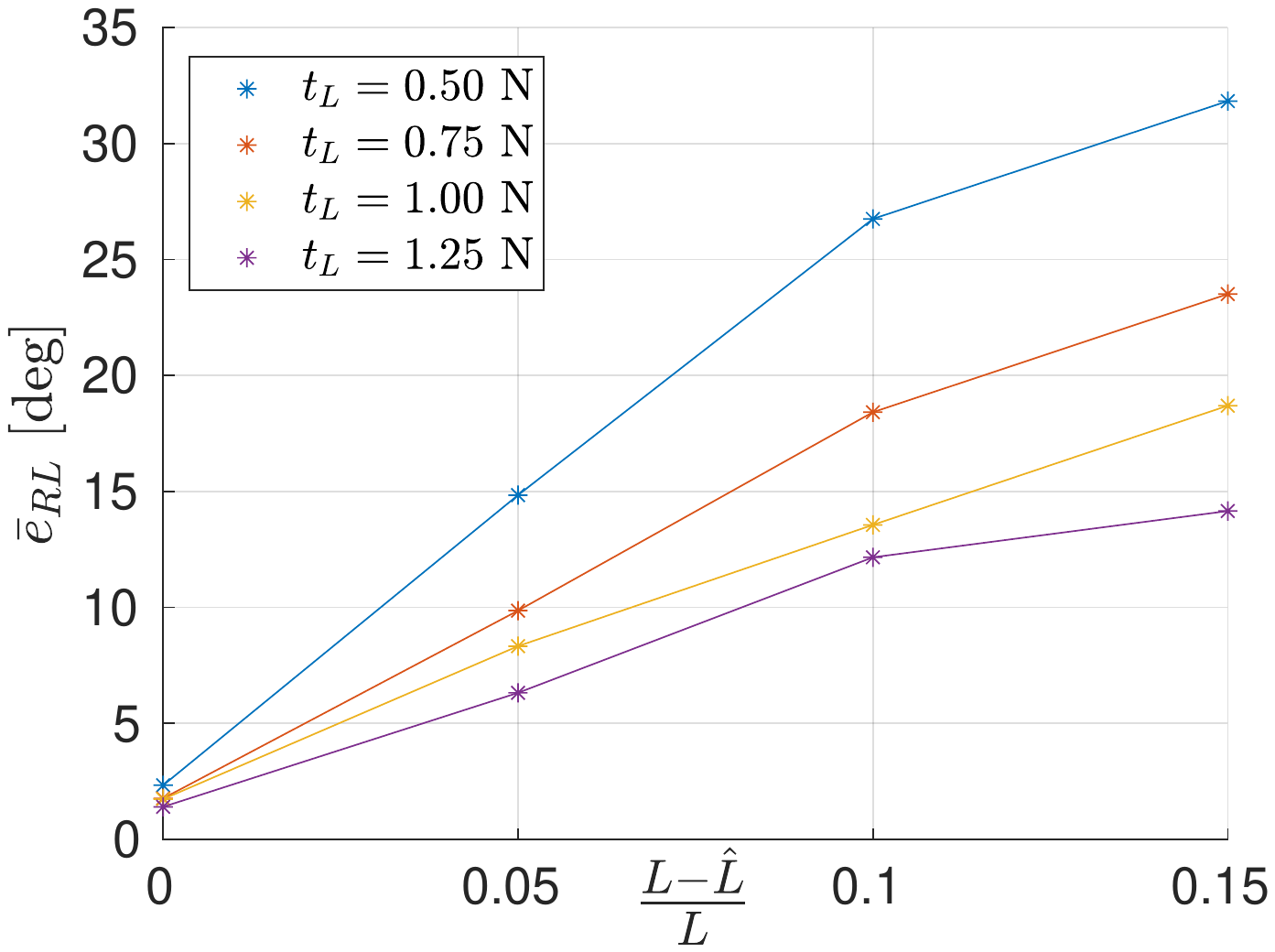}}\subfloat[][$\Delta_{b}\neq0$\label{b1_att_err}]{\includegraphics[trim={3cm 8cm 4cm 8cm},clip,width=0.33 \textwidth]{./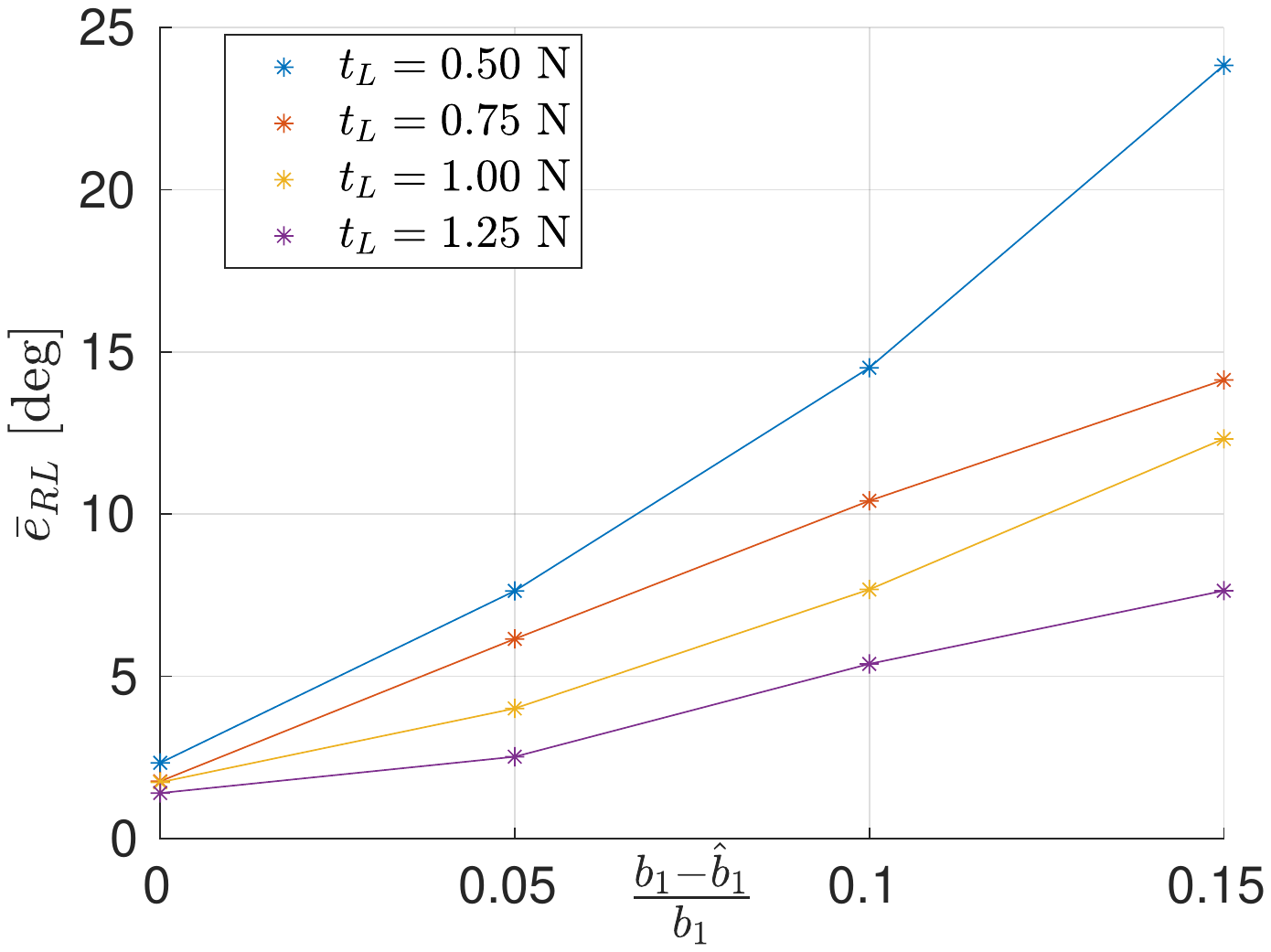}}
\caption{Each point in the plots is a value of the average attitude error at steady state in a simulation with $\internalTension$ as  indicated in the legend, and parametric uncertainty as indicated in the x-axis of the corresponding plot. A total of 40 simulation results are known in these plots.}
\label{fig:att_err}
\end{figure*}
  \begin{figure*}[t]
    \centering
    \subfloat[][$\lengthU{1}=1.15\cdot \length{1}$. As expected, $\pLEquilib\neq\pLEq$, but $\yaw^{eq}=\bar{\psi}$ and $\theta^{eq}=\bar\theta$.  \label{fig:l115_tl1}]{\includegraphics[trim={0.5cm 7cm 0cm 7cm},clip,width=\columnwidth]{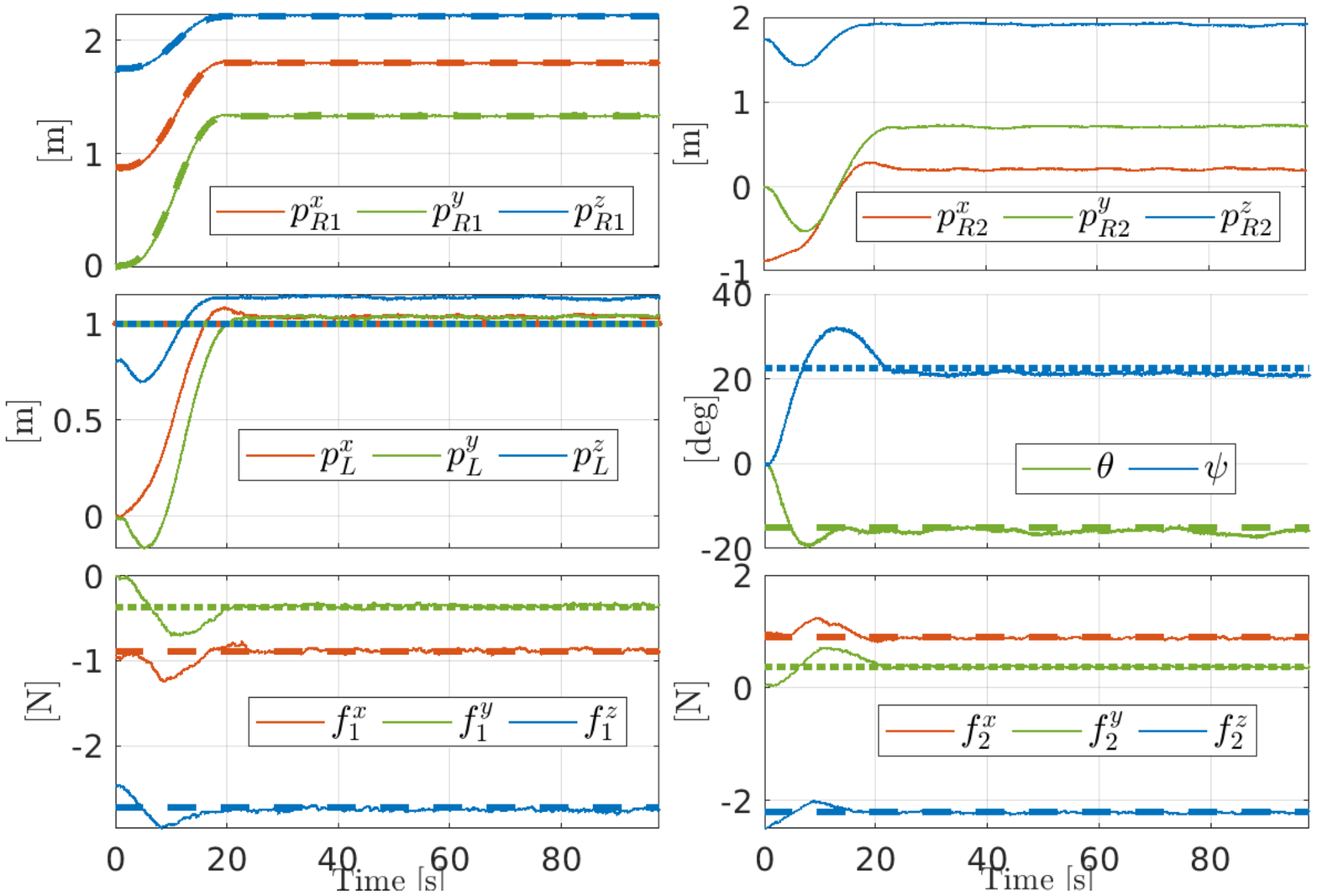}}
    \subfloat[][$\lengthU{2}=1.15\cdot \length{2}$. As expected, $q_L=\bar{q}_L$ at the equilibrium. \label{fig:l215_tl1}]{\includegraphics[trim={0.5cm 7cm 0cm 7cm},clip,width=\columnwidth]{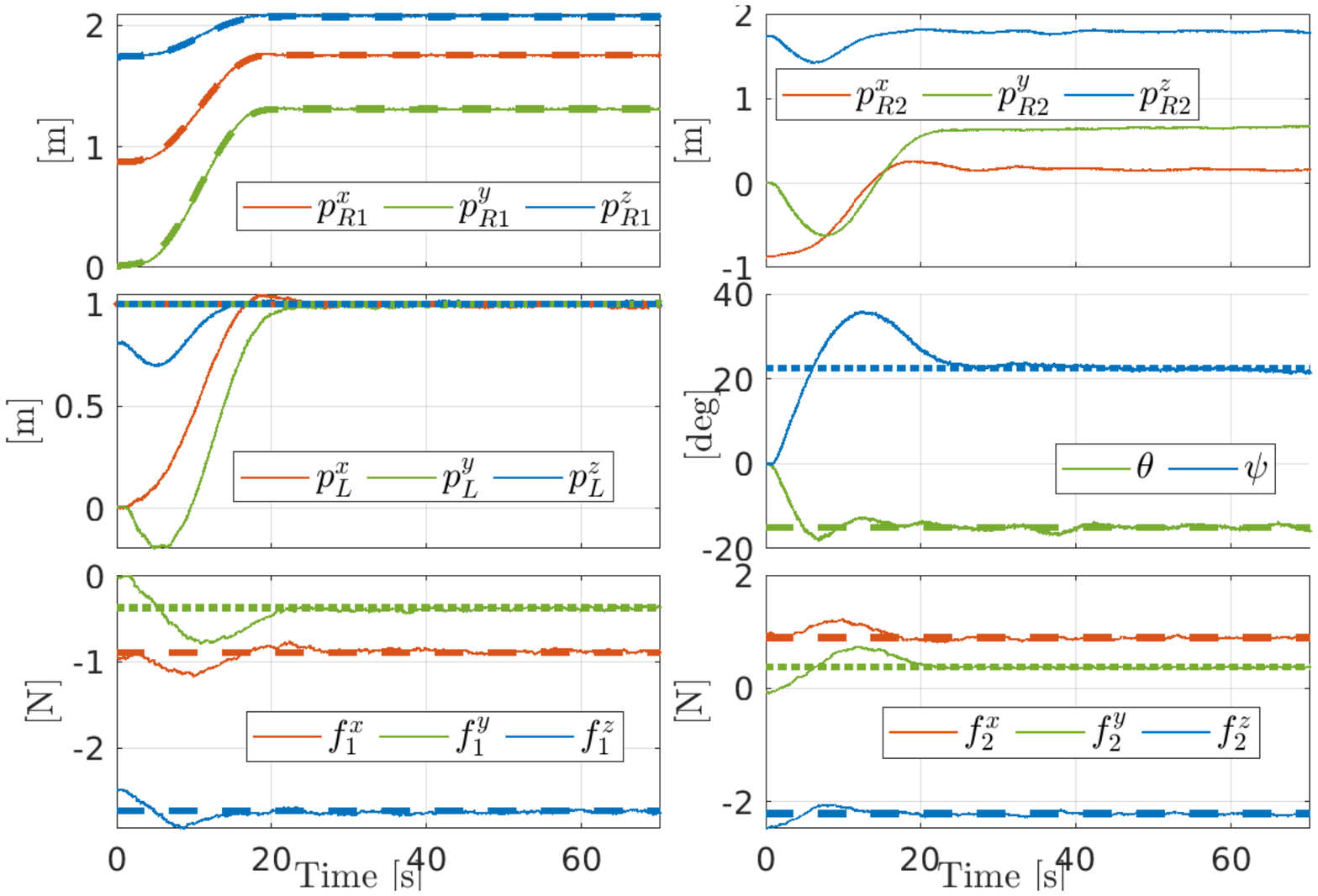}}
    \caption{Simulations for cable parameter uncertainties and $\internalTension=1\rm{N}$. Dotted lines of the same color indicate the corresponding desired quantities.}\label{fig:tineg}
  \end{figure*}
\begin{figure*}[t]
    \centering
  \subfloat[][$\condZero>0$: $\massLU=0.95\massL$,   $\hat{L}=1.05L$. \label{fig:tineg_csipos}]{\includegraphics[trim={0.5cm 6.8cm 0cm 6.8cm},clip,width=\columnwidth]{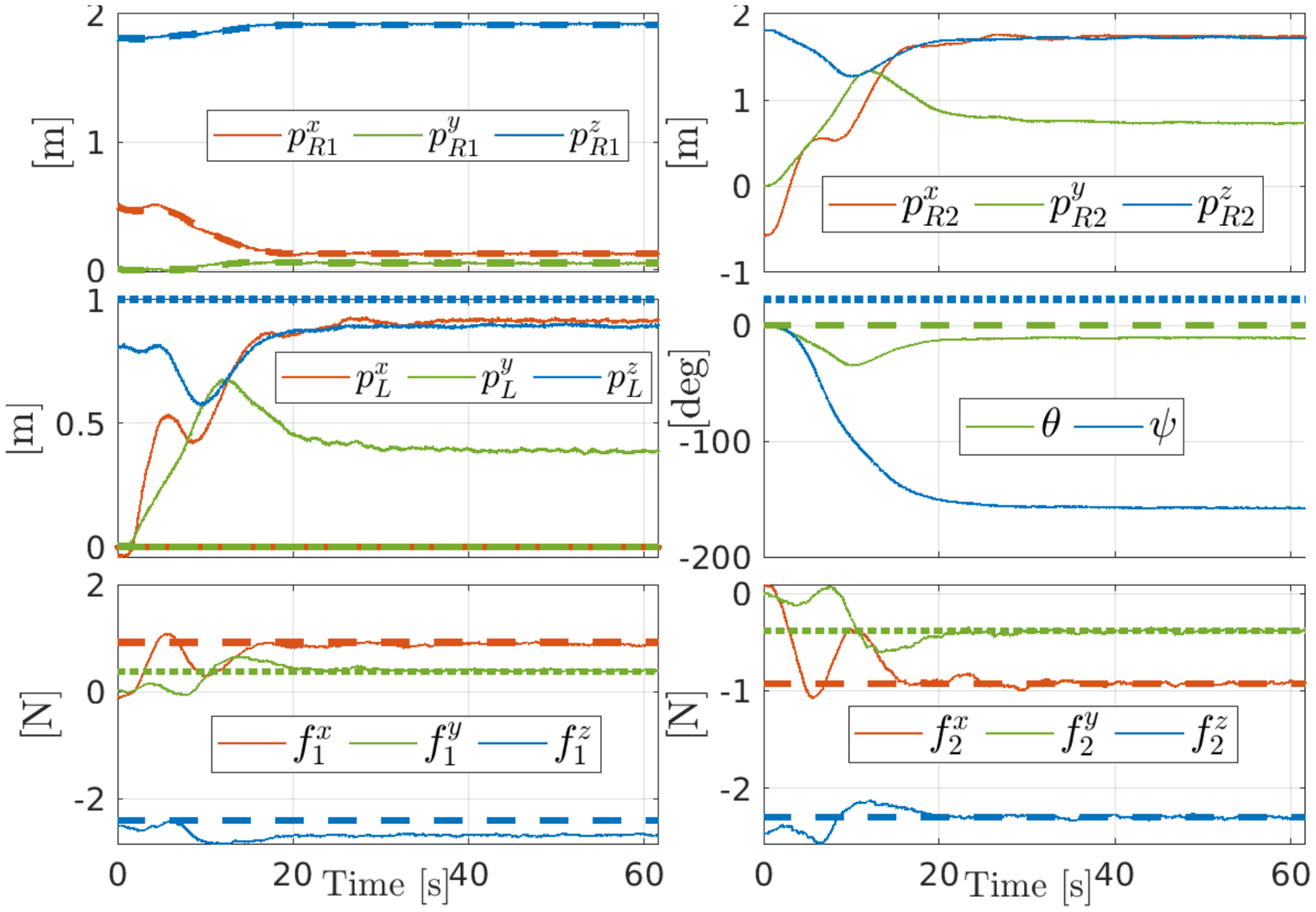}}
    \subfloat[][$\condZero<0$: $\massLU=1.05\massL$,  $\hat{L}=0.95L$ \label{fig:tineg_csineg}]{\includegraphics[trim={0.5cm 6.8cm 0cm 6.8cm},clip,width=\columnwidth]{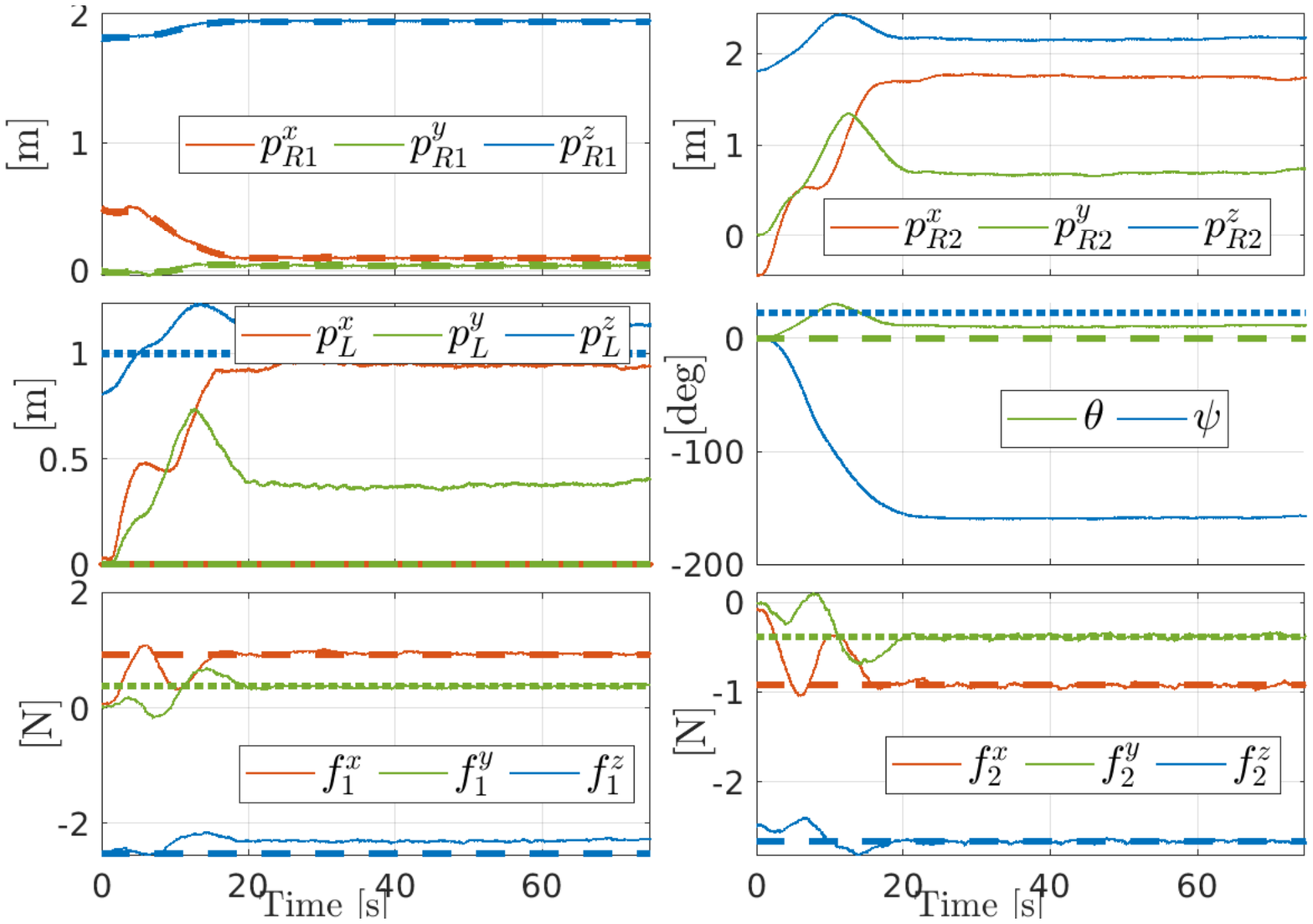}}
    \caption{Simulations for cable parameter uncertainties and $\internalTension=-1\rm{N}$. Dotted lines of the same color indicate the corresponding desired quantities.}\label{fig:tineg_signals}
  \end{figure*}
  %
 Extensive numerical simulations have been carried out  using a URDF description of the system and ODE physics engine in Gazebo. 
  We avoided validating the theoretical results on the same equations used to derive them. The main differences between the  control model used to derive a fully satisfactory theoretical analysis and the  complex simulation model used to study the applicability of the theoretical results in the real world are in the following. 
  \begin{itemize}
      \item Under-actuated quadrotors have been deliberately preferred for the validation since they represent the worst-case in terms of the validity of some of the assumptions made in the theoretical analyses. Validation using fully-actuated aerial robots would have seemed, instead, limiting.
      \item The cables are subject to sagging, which is obtained  by using a series of several links interconnected by passive universal joints, as can be seen in Figure~\ref{fig:takeoff}. 
      \item In the validation, there is no guarantee of perfect trajectory tracking as assumed in the theory but a standard position controller~\cite{2010-LeeLeoMcc} is implemented for each robot. 
      \item The wrench observer proposed in~\cite{2019h-RylMusPieCatAntCacFra} is used to estimate the force applied by the cable on the robot. The observer introduces noisy and delayed measurements when compared to ideal force measurement.
  \end{itemize}
The control software has been implemented in Matlab-Simulink using the Generator of Modules GenoM\footnote{\url{https://git.openrobots.org/projects/genom3}}.  The interface between Matlab and Gazebo is also managed by a Gazebo-genom3 plugin\footnote{\url{https://git.openrobots.org/projects/mrsim-gazebo}}.
All phases of a physical experiment, starting with takeoff, are replicated in the simulated environment using a state machine, 
ensuring that the results are as realistic as possible. After the takeoff, the two robots lift the load, and the admittance controller is activated right after.

The robot models are two quadrotors weighing 1.03 kg and having a maximum thrust for each propeller of 6 N. They are equipped with two light cables of length 1 m and attached to a bar.
The bar is a one-meter-long link with a mass of 0.5 kg.    
 
 \begin{figure*}[t]
    \centering
    \includegraphics[width=\textwidth]{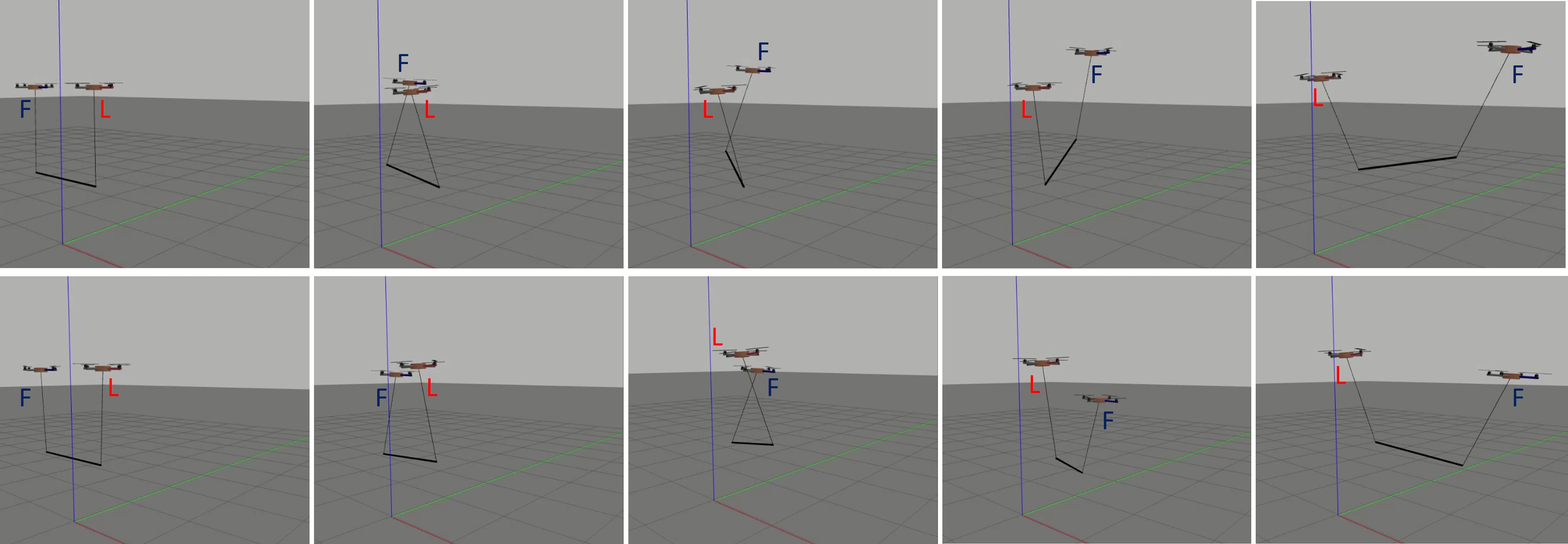}
    \caption{Frames from two simulated scenarios for $\internalTension<0$. Top: $\condZero<0$; bottom: $\condZero>0$. The leader has a red arm and the follower a blue arm. However, to facilitate the distinction among them, a red letter `L' indicated the leader and a blue `F' the follower.}
    \label{fig:tineg}
\end{figure*}

\subsubsection{Case of $\internalTension>0$}\label{subs:tI>0}
Figure~\ref{fig:att_err} contains the average load attitude error at a steady state in a total of 40 simulations. The average is computed in a 2-second time window.  In all those simulations, $\pLEq=[1\ 1\ 1]^\top$ m, the  load  desired yaw is $\bar{\psi}=\frac{\pi}{8}$ radians, and the desired pitch  $\bar{\theta}=-\frac{\pi}{12}$ radians. In each of the three plots in Figure~\ref{fig:att_err}, for four different values of the internal force, ${\internalTension=\{0.5,\ 0.75,\ 1,\ 1.25 \}\rm{N}}$, four different simulation results are displayed for each relative error equal to $0\%, 5\%, 10\%, 15\%$  on a specific uncertain parameter considered separately from the others. Specifically, Figure~\ref{m_att_err}, considers the uncertainty on $\massL$, Figure~\ref{L_att_err} on $L$, and Figure~\ref{b1_att_err} on $\anchorLength{1}$. Even in the absence of uncertainties, small errors of less than 
2.5 degrees in the bar's attitude control can be found.  This can be due to minor tracking errors or possible biases in the wrench observer, which estimations are unbiased as soon as the robot takes off. These considerations ignore the external forces applied by the loose cables at the startup phase.
 From all the three figures one can appreciate the beneficial effect of larger values of $\internalTension$ on the attitude error: for the same value of the uncertainty, the attitude error decreases if the $\internalTension$ increases. Moreover, the plots show that for every value of $\internalTension$, increasing the uncertainty on one parameter increases the attitude error, as expected, but, especially, the increase is smaller for high values of $\internalTension$ (this can be seen by the slope of the lines in the plots). These results confirm the theoretical findings collected in Theorem \ref{thm_sens}. 
 Figure~\ref{fig:l115_tl1} and \ref{fig:l215_tl1} provide  validation of the theoretical results on the effect of the uncertainties affecting the cable parameters. The leader robot position reference is not given as a step, but the robot follows a 5-th order polynomial trajectory to reach the desired position. An error of 15\% is considered to affect the length of the leader and follower robot's cable in Figure~\ref{fig:l115_tl1} and \ref{fig:l215_tl1}, respectively. In both cases, $\internalTension=1 \rm{N}$. Note that the displayed time starts  after the admittance controller activation.  The reader can appreciate how uncertainties on the leader robot's cable model only cause  $\pLEquilib\neq\pLEq$, while $\rotMatLEquilib=\rotMatLEq$, and how the follower robot's cable parameters are not needed to control the load pose, as expected from~\eqref{pLeqdl0} and~\eqref{eq:Rlinc}. 

\subsubsection{Case of $\internalTension<0$}
Figure~\ref{fig:l115_tl1} and \ref{fig:l215_tl1} provide  validation of the theoretical results on the effect of the uncertainties affecting the cable parameters. The leader robot position reference is not given as a step, but the robot follows a 5-th order polynomial trajectory to reach the desired position. An error of 15\% is considered to affect the length of the leader and follower robot's cable in Figure~\ref{fig:l115_tl1} and \ref{fig:l215_tl1}, respectively. In both cases, $\internalTension=1 \rm{N}$. Note that the displayed time starts  after the admittance controller activation.  The reader can appreciate how uncertainties on the leader robot's cable model only cause  $\pLEquilib\neq\pLEq$, while $\rotMatLEquilib=\rotMatLEq$, and how the follower robot's cable parameters are not needed to control the load pose, as expected from~\eqref{pLeqdl0} and~\eqref{eq:Rlinc}. 

\subsubsection{Case of $\internalTension<0$}
Here we show the behavior of the system with $\internalTension<0$. The unstable nature of the desired configuration with no parametric uncertainties was shown in~\cite{2018h-TogGabPalFra}. The simulations of the realistic system, in accordance with Theorem \ref{thm:stability},  show that $\stateSetEqMinus$ is unstable when the parametric uncertainties are considered. We report the results of two simulations with $\pREq{1}=[0\ 0\ 1]^\top$ m, $\bar{\psi}=\frac{\pi}{8}$ radians, and $\bar{\theta}=0$ radians (desired horizontal bar). We simulate an uncertainty of $5\%$ both on $\massL$ and $L$ such that (i) $\condZero>0$ (we chose $\massLU<\massL$ and $\hat{L}>L$) and (ii) $\condZero<0$ (thanks to $\massLU>\massL$ and $\hat{L}<L$). In both cases, we obtained that, in accordance to Theorem \ref{thm:stability}, the system converges to $\stateSetEqPlus$. Since $\internalTension<0$, that means, as reported in Figure \ref{fig:tineg}, that we have $\yaw^{eq} = \bar{\yaw}-\pi$, while $\theta^{eq}$ varies according to the sign of $\condZero$, as expected (see Figure~\ref{fig:subfig2}). The cable forces were observed to be as desired,  except for the vertical component of $\cableForceEq{1}$, as expected due to $\Delta_m\neq0$ according to~\eqref{f1dm}. Figure~\ref{fig:tineg} shows the behavior of the system in the described cases through screenshots of the Gazebo environment, and Figure \ref{fig:tineg_signals} shows the evolution of the main quantities during the simulated tasks.

\subsubsection{Case of $\internalTension=0$}
When it comes to the case in which $\condZero\neq0$ and $\internalTension=0$, clearly, the sole equilibrium configurations are not really  attainable: all elements of the system are supposed to be aligned vertically, one on top of the others (see Figure \ref{fig:equilibrium:ZeroTi}). 
\begin{figure}[t]
    \centering
   \subfloat[][$\condZero>0$]{\includegraphics[width=0.5\columnwidth,height=3.5cm]{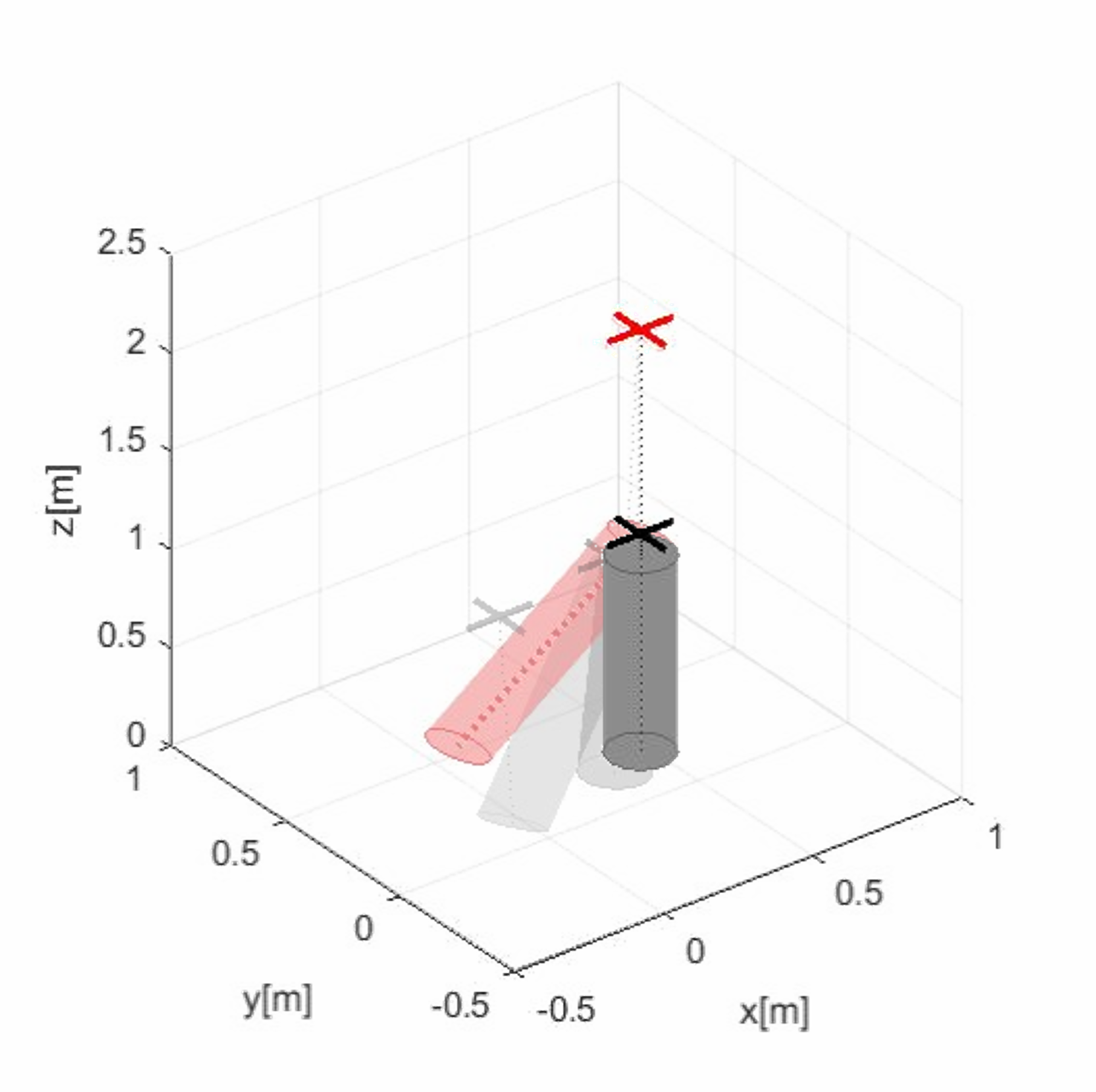}}
    \subfloat[][$\condZero<0$]{\includegraphics[width=0.5\columnwidth,height=3.5cm]{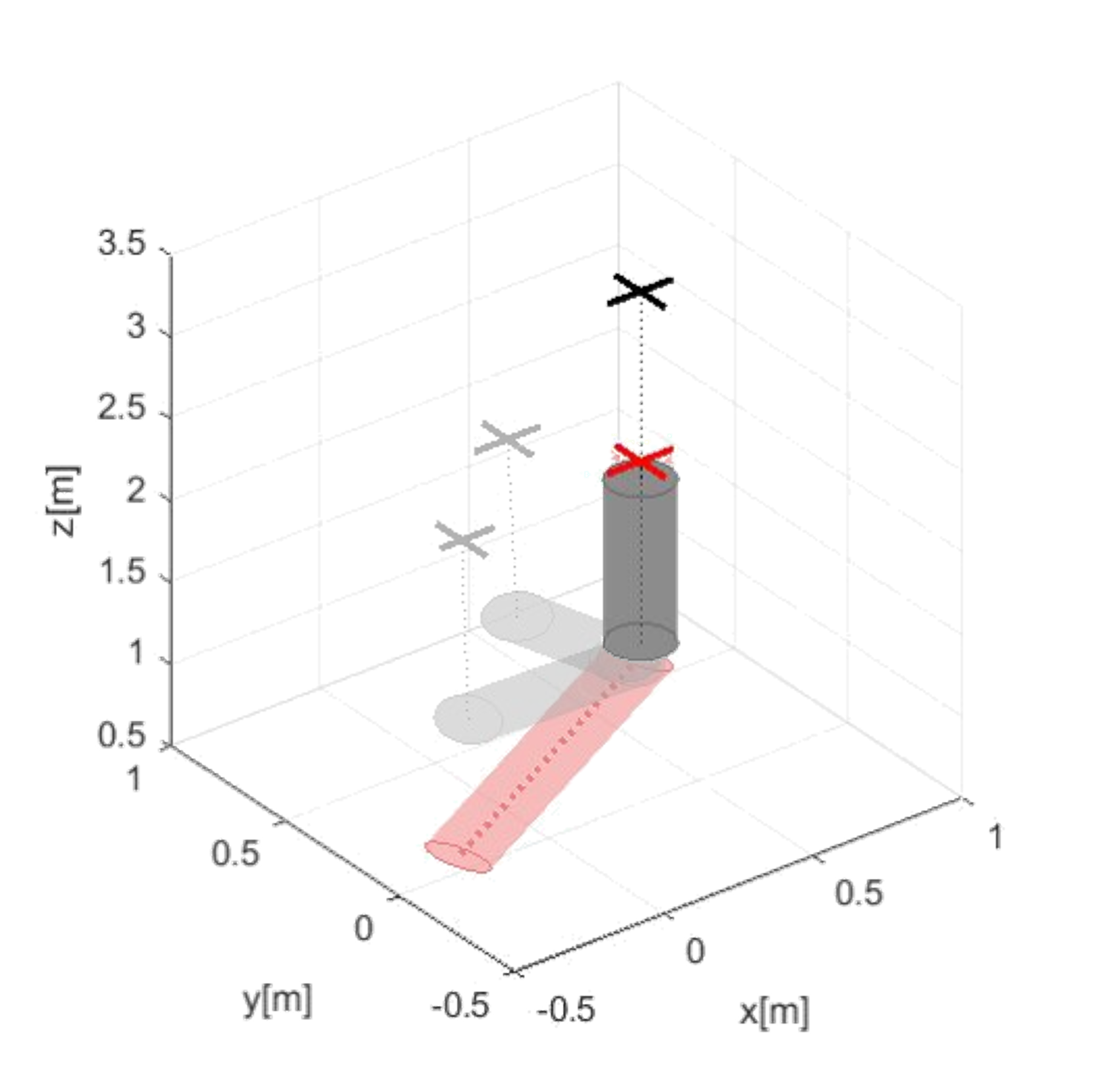}}
    \caption{Superposition of different instants of simulations with $\condZero\neq0$ and $\internalTension=0$. The solid image is the final equilibrium. The grey cylinder is the load, and the red cross is the
leader robot. The red cylinder is the desired (identical to  the initial) pose of the load.}
    \label{fig:ti0unc}
\end{figure} When simulating such condition in Gazebo, we found that numerical issues arise as the system approaches the expected configuration in which the link that models the load and those that model the cables are vertically aligned. Despite the practical irrelevance of the considered case, with the objective of demonstrating the validity of the theoretical results, simulations have been carried out also for this case, using the Matlab-Simulink simulator used in~\cite{2018h-TogGabPalFra}.

In that simulator, the cables are modeled as mass-less extensible elements and the force is directly retrieved by the model of the cable without resorting to a wrench observer. Nevertheless, underactuated quadrotors are still considered, as well as the same trajectory controller. The results of two simulations can be found in Figure \ref{fig:ti0unc}. Even though the load has been initialized in the desired configuration, with position $\pLEq=[0\ 0\ 1 ]^\top$ and the same desired yaw and pitch as before, it moves to the vertical equilibrium, with the leader on top when $\condZero>0$, and the follower on top when $\condZero<0$, as explained by the stability analysis in Sec. \ref{sec:stability}.
\subsubsection{Position Error}
First, we provide in Figure~\ref{fig:epl} two examples of the different behavior of $\errorPL$ when  $\internalTension$ increases and different values of the uncertainties are present. This fully supports the finding that the load position error at the equilibrium does not necessarily decrease when $\internalTension$ is increased.
\begin{figure}[t]
    \centering
    \subfloat[][$\frac{\hat{m}}{m}=0.85$, $\frac{\lengthU{1}}{\length{1}}=0.9$]{\includegraphics[trim={3cm 8cm 4cm 8cm},clip,width=0.48\columnwidth]{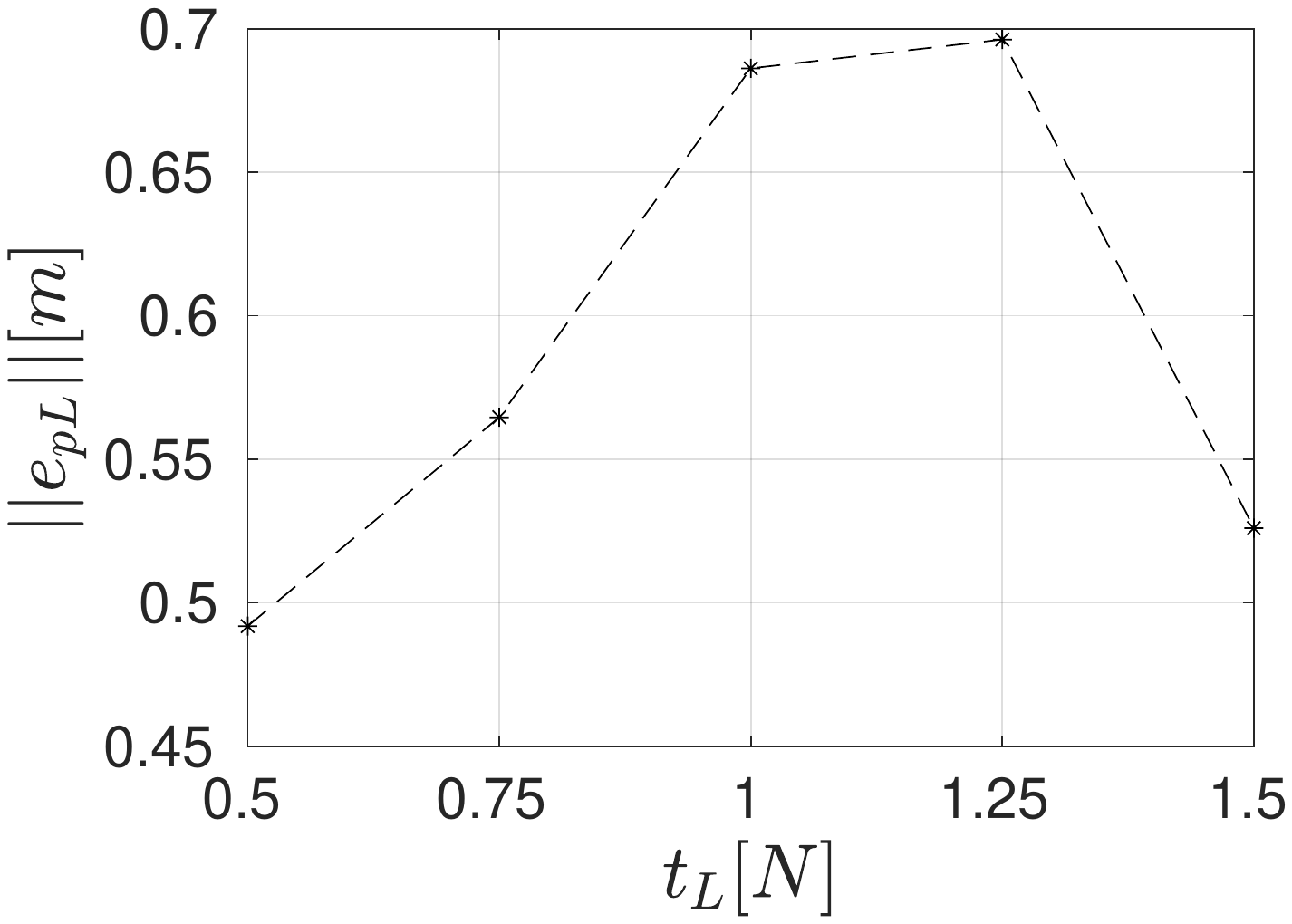}}
    \subfloat[][$\frac{\hat{m}}{m}=0.85$, $\frac{\lengthU{1}}{\length{1}}=1.2$]{\includegraphics[trim={3cm 8cm 4cm 8cm},clip,width=0.48\columnwidth]{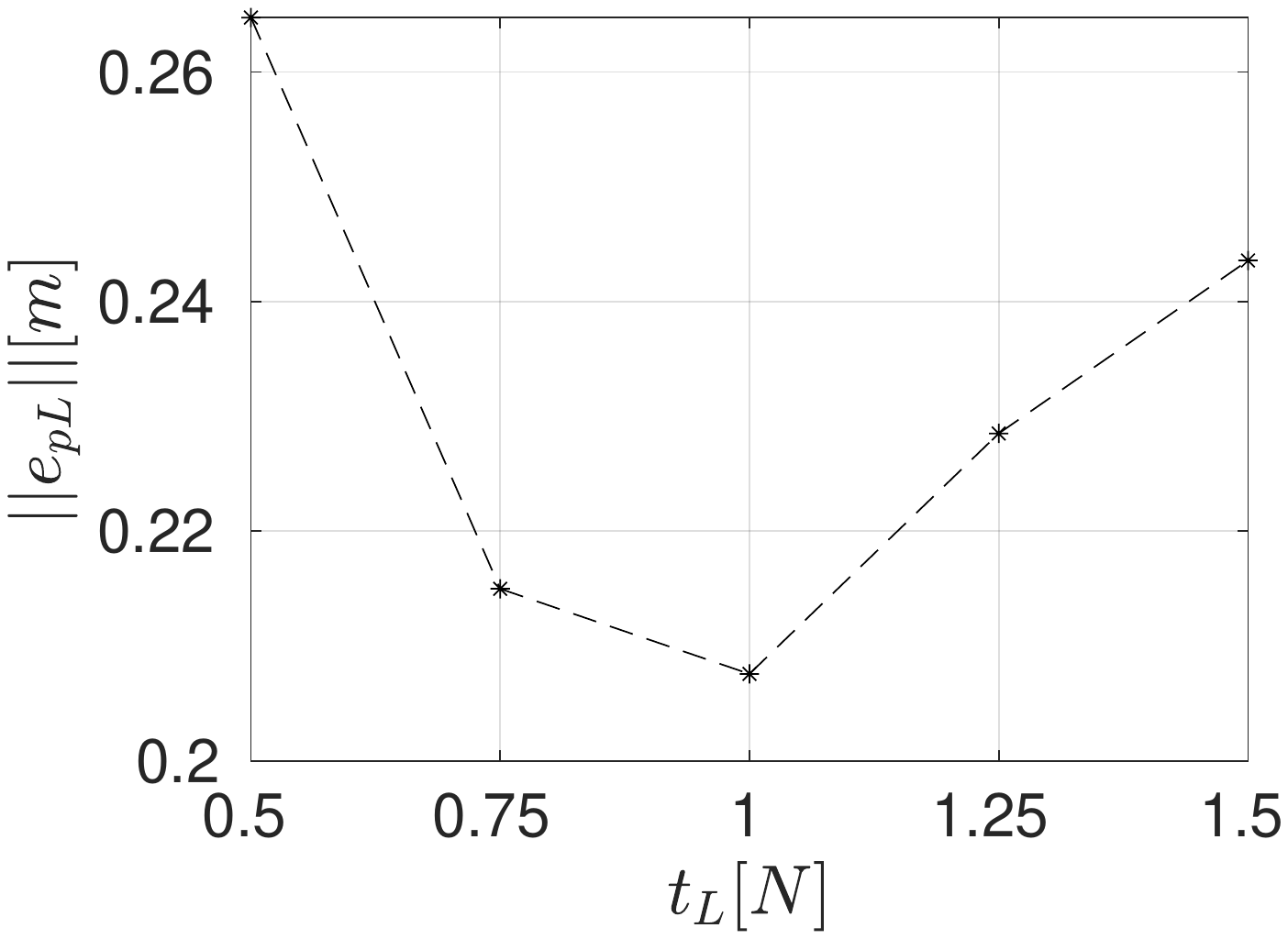}}
    \caption{$\norm{\errorPL}$ for different values of $\internalTension$ in two cases in which different values of the uncertainties are considered on two parameters, $m$ and $\length{1}$. $\errorPL$ does not always decrease when $\internalTension$ increases.}
    \label{fig:epl}
\end{figure}
\begin{figure}[t]
    \centering
    \includegraphics[trim={0.5cm 7cm 0cm 7cm},clip,width=\columnwidth]{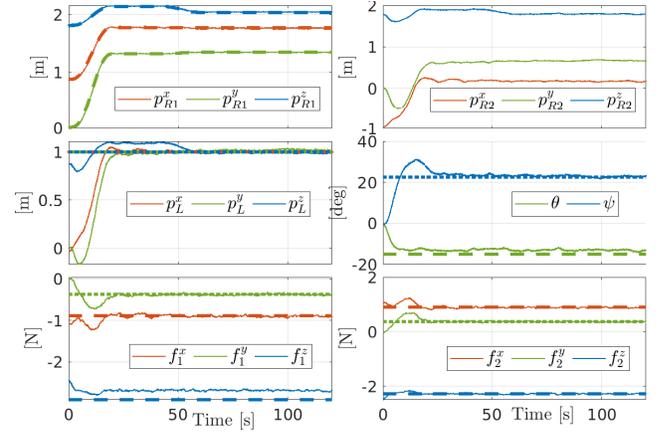}
    \caption{Simulation results for $\internalTension=1$ N and a 5\% error on each parameter. Around Time = 41 s, signed by a red vertical line, the leader robot  reads the load position, and corrects its own reference position in order to zero the load position error. Dotted lines of the same color indicate the corresponding desired quantities.}
    \label{fig:pos_corr}
\end{figure}
Anyway, as we have seen from the theory, it is possible to correct the error of the load position by acting solely on the leader robot reference position. This, in turn, does not affect the regulation of the load attitude. In Figure~\ref{fig:pos_corr}, we report the results of a Gazebo simulation in which the initial and desired load pose are as in Sec. \ref{subs:tI>0}, $\internalTension=1$ N, and an error equal to 5\% of the nominal value is considered on each uncertain parameter. After 41 s, the leader robot corrects its reference position based on the position of the load according to~\eqref{eq:pos_p1_new}. The results show that, consequently, the load is steered to the desired position when the new equilibrium is reached. On the other hand, as expected from the theory (see Eq. \eqref{Rd3} and \eqref{f1d3}),  due to the inaccurate knowledge of the system parameters, the value of the pitch angle and the leader robot's cable force at the equilibrium do not match the desired values. Also, as predicted, one can observe in Figure~\ref{fig:pos_corr} that their values are not affected by the change in the leader robot position.

\section{Experimental Validation}\label{sec:exp_real}
\subsection{Experimental Setup}
\subsubsection{Hardware}
\newcommand{\Appendix}{Appendix~}
The system is made of a 2-meter-long carbon fiber bar carried by two UAVs by means of two cables that connect the robots at the bar's end. Each cable is 1~$\rm m$ long, the bar weighs 0.300~$\rm kg$ and each UAV weighs 1.03 $\rm kg$. The cable anchoring points are installed on the robots' underside at a distance $\displacement = [0~0~-d]^\top$ from their CoM, where $d=$0.15 $\rm cm$.
Such a geometrical configuration changes the process by which the leader reference position $\pREqInc{1}$ is generated as it is explained in the Appendix. 
In addition, the aerial vehicles have an onboard PC, four ESCs (Electronic Speed Controllers) that control the propeller speed in closed-loop~\cite{2017c-FraMal}, and a flight controller~\cite{2017c-FraMal}. 
\subsubsection{Software}
The control architecture runs in part onboard and in part on a desktop PC. A state-of-the-art UKF-based state estimation, which fuses Motion Capture measurements at 120 $\rm Hz$ with the IMU measurements at 1 $\rm kHz$, and a geometric control are carried out as part of the onboard task at 1 \rm kHz. The admittance filter and wrench observer are implemented in  Matlab/Simulink and run on the desktop PC. Wi-fi is used for command and data transfer between the desktop PC and the onboard computers at 100 $\rm Hz$.

A picture taken from the experiments  and highlighting the main setup components is in  Fig~\ref{fig:exp}. 
\begin{figure}[t]
    \centering
    \includegraphics[width=0.9\columnwidth]{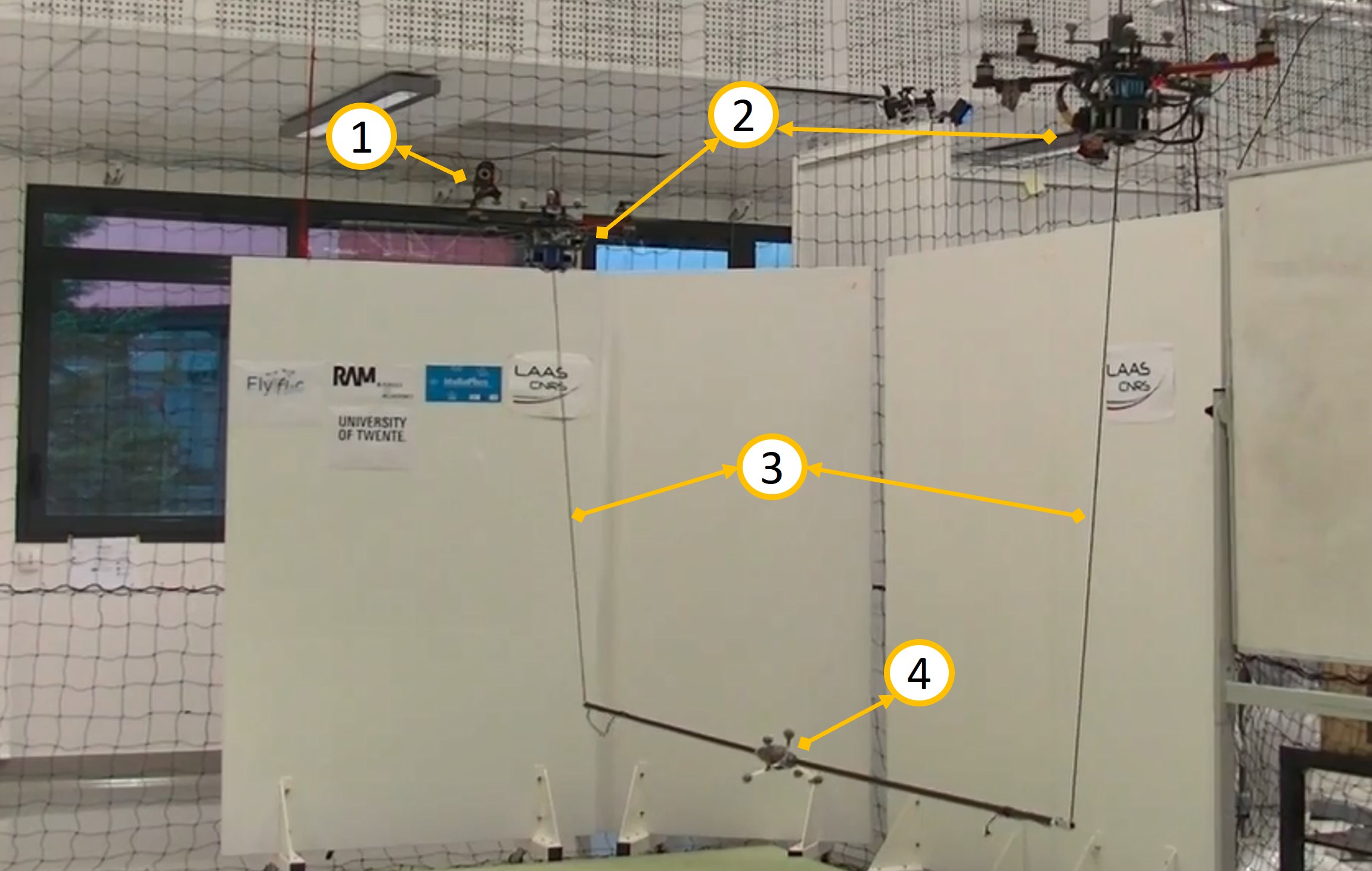}
    \caption{Picture from the experiment showing: one of the motion capture system cameras 1); two quadrotors 2); two cables 3) connecting the robots to the carbon fiber load 4). The markers on the bar track the pose of the object for validation purposes.}
    \label{fig:exp}
\end{figure}

\subsubsection{Experimental Results}
Two main sets of experiments were carried out: one in which the controllers use as accurate as possible values of the system parameters; one in which the controllers use values of the parameters that differ by 10\% from the accurate corresponding value. 
We refer to the former case as `without uncertainty', and to the latter as `with uncertainty'.

For each case, we performed three tests in which the same manipulation task is carried out for three values of $\internalTension$, equal to~0\,N, 1.5\,N, and 3\,N. The task execution, as the simulated one, starts with  initial steps in which the load, from position $\pL(0)=[0\ 0\ 0]^\top$\,m and zero yaw and pitch angles, is lifted by the robots through simple upwards motions; hence, the proposed controller is activated and the robots try to bring the load to $\pLEq=[0.5\ 0\ 1.5]^\top$\,m with $\yawDes=11.5$\,deg and $\pitchDes=-6.9$\,deg.

The case with no uncertainties is depicted in Figure \ref{fig:exp_att_no_unc} for all three experiments. The evolution of the attitude error of the load is displayed, in the form of quantities $\pitch-\pitchDes$ and $\yaw-\yawDes$. As expected from \eqref{eq:inc_mass_yaw}, the yaw angle at the equilibrium coincides with the desired value. Instead, the pitch angle converges to an arbitrary value when $\internalTension=0$, in this case with an error around 18 deg, when $\internalTension=0$. When a positive internal force is applied, the attitude error is reduced up to about 2.5 deg.
\begin{figure}[t]
    \centering
    \includegraphics[trim={3cm 8cm 4cm 9cm},clip, width=0.68\columnwidth]{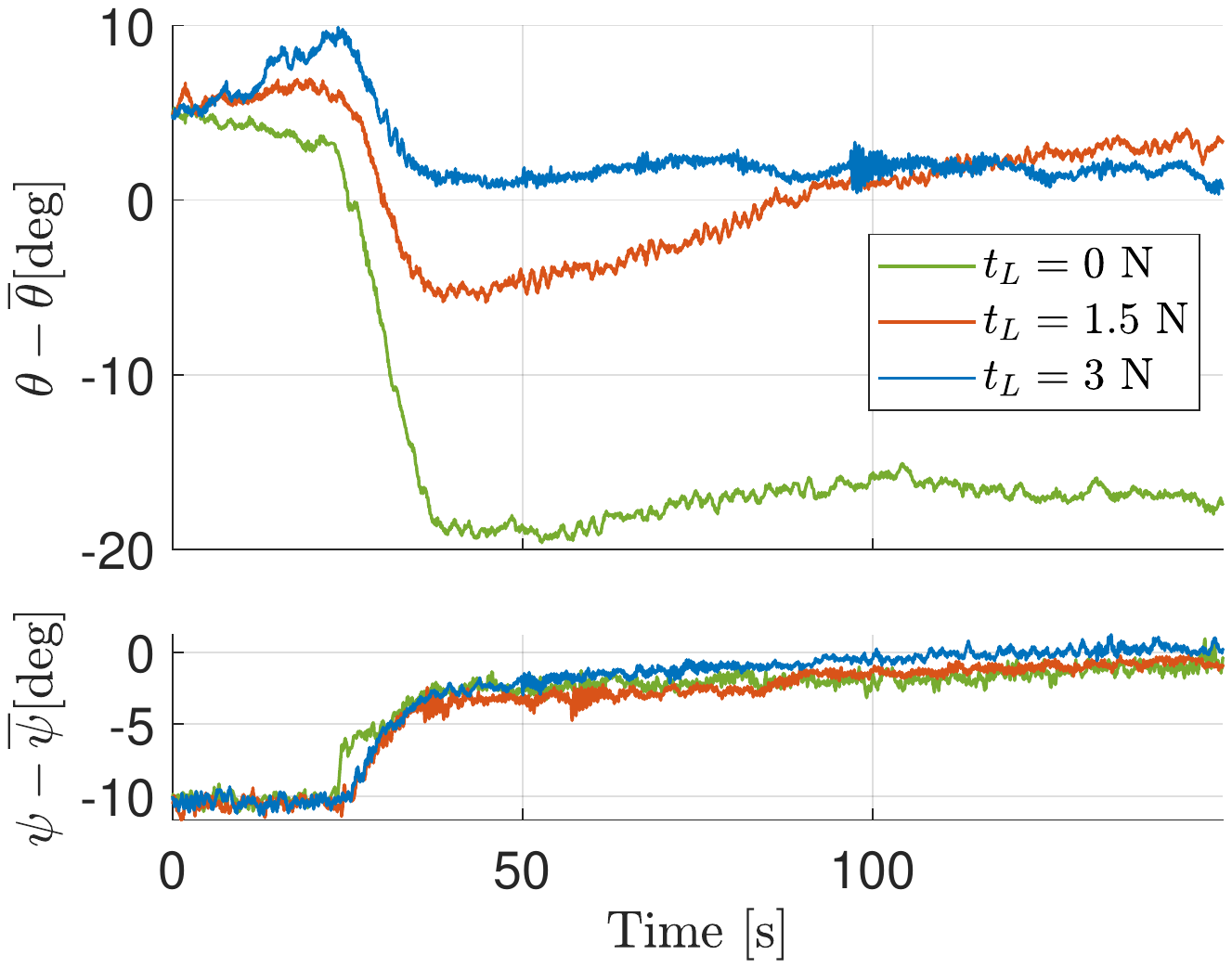}\includegraphics[trim={7cm 7.5cm 7cm 9cm},clip, width=0.32\columnwidth]{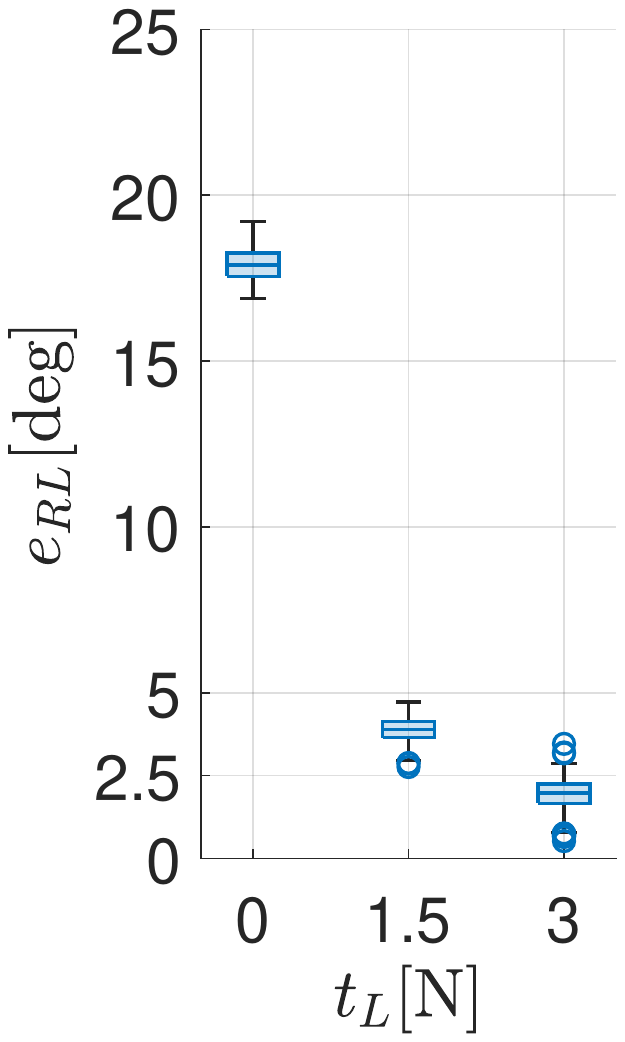}
    \caption{Evolution of the pitch (top) and yaw (bottom) angle errors during three experiments  with no parameter uncertainties for different values of $\internalTension$. As expected, the yaw angle converges to the desired value while for the pitch to do the same, $\internalTension>0$ is needed. On the left is the boxplot of the average attitude error over the last 20 seconds of the experiments.}
    \label{fig:exp_att_no_unc}
\end{figure}
\begin{figure}[t]
    \centering
    \includegraphics[trim={3cm 8cm 4cm 9cm},clip, width=0.68\columnwidth]{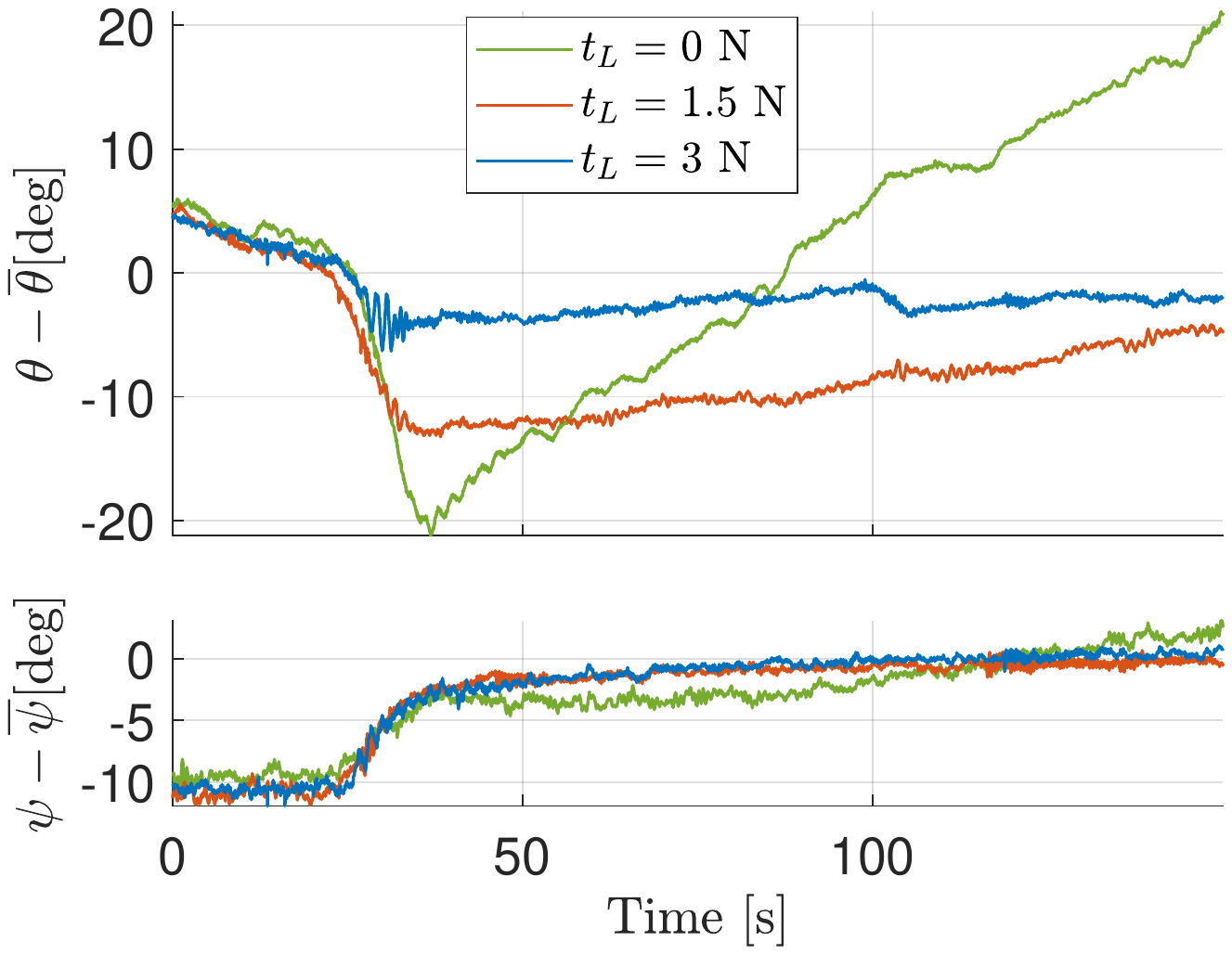}\includegraphics[trim={7cm 7.5cm 7cm 9cm},clip,width=0.32\columnwidth]{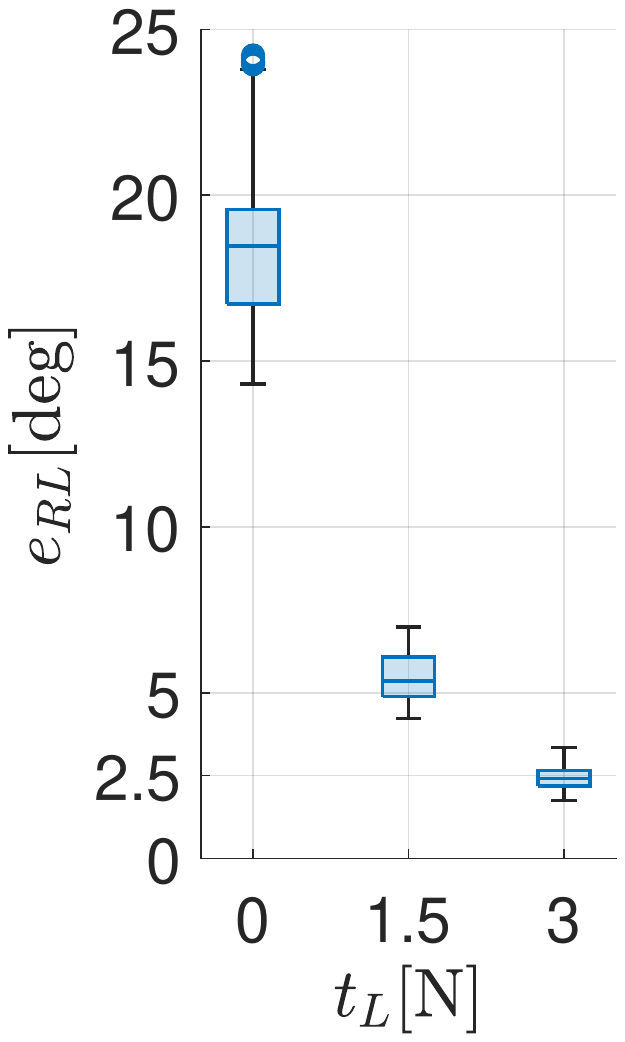}
    \caption{Evolution of the pitch (top) and yaw (bottom) angle errors during three experiments  with 10\% uncertainty on each parameter for different values of $\internalTension$. As expected, the yaw angle converges to the desired value while the pitch angle increases for $\internalTension=0$ as the load becomes more and more vertical. For $\internalTension>0$, the pitch becomes closer and closer to the desired value. On the left is the boxplot of the average attitude error over the last 20 seconds of the experiments.}
    \label{fig:exp_att_unc}
\end{figure}
\begin{figure}[t]
    \centering
    \includegraphics[trim={0.5cm 7cm 0cm 7cm},clip,width=\columnwidth]{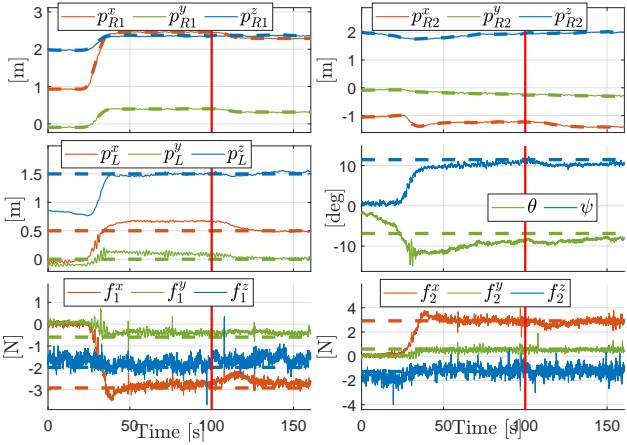}
    \caption{Experimental results: $\internalTension=3~\rm{N}$ and 10\% error on each parameter. At Time=100 s, signed by a vertical red stripe, the leader robot corrects its reference position to zero the load position error.}
    \label{fig:pos_corr_exp}
\end{figure}

Figure \ref{fig:exp_att_unc} shows the three tests in the case with uncertainties. Again, in accordance with \eqref{eq:inc_mass_yaw}, there is no error on the load yaw angle at the equilibrium. The equilibrium configuration could not be reached with $\internalTension=0$. This is also the case in general since so would imply that all bodies are vertically aligned. However, the reader can clearly appreciate the increasing pitch angle evolution, in line with \eqref{Rd3}. As soon as $\internalTension>0$, the pitch angle at the equilibrium approaches the desired value, as expected by \eqref{eq:inc_mass_pitch}. Box plots of the equilibrium error between 130\,s and 150\,s of the task execution are displayed in Figure \ref{fig:exp_att_no_unc} and \ref{fig:exp_att_unc} for both cases, i.e. with and without  uncertainties, respectively.

The evolution of all the most relevant quantities  can be appreciated in Figure \ref{fig:pos_corr_exp} for another task execution with uncertainties and $\internalTension=3$\,N. Furthermore, in that experiment, the leader robot corrects its own reference position at  Time=100 s according to \eqref{eq:pos_p1_new} and, as a consequence,  the load equilibrium position is also adjusted. This validates the load position correction method involving solely the leader robot. Videos from the experimental validation can be found in the multimedia attachment.
\section{Conclusions}\label{sec:conclusions}
This work concerns the decentralized cooperative manipulation of a cable-suspended load by two aerial
robots in the absence of direct communication. The robots are controlled with a leader-follower  scheme  achieved through an admittance controller on each robot. The controllers make use of nominal system parameters that are subject to uncertainty. 
The equilibrium points and their stability are formally studied. The theory demonstrates how an  internal force that tends to stretch the load longitudinally, generated by non-vertically operated cables, is beneficial in terms of stability of the load pose control as well as in terms of robustness to the uncertainties. The complete theoretical results are validated through numerical simulations embedding additional realistic effects.

In the future, an extension to non-beam rigid loads with uncertain parameters will be formally addressed. Note that in the case of generic rigid objects,  $N>2$ robots will be considered as two robots would not be able to control the full pose of the cable-suspended object: rotations around the line connecting the two cables attaching points on the object would not be controlled \cite{MicFinKum2011}. The manipulation of deformable objects in a communication-less setup is an interesting extension. Experimental tests outdoors could be valuable to assess the robustness of the method in windy conditions and when relying on outdoor state-estimation techniques. Investigating how the approach could benefit from the introduction of limited communication between the robots, e.g., low-frequency communication, is an interesting future direction. Exploring the possibility of  communication-less trajectory tracking and of adaptive laws is left as future work. 

\appendix \label{appendix}
Underneath the robots, at a distance $\displacement$ from the CoM ($\displacement~=~[0~0~0.15]^\top \rm cm$), the cable anchoring points are attached. 
A reference position taking into account such a displacement can be provided to the leader robot at the equilibrium according to
\begin{equation}
\pREqIncRef{1} = \pREqInc{1} - \rotMatRRef{1}\displacement
\end{equation}
where $\rotMatRRef{1}$ is the leader robot rotational matrix at the equilibrium. This matrix is computed from the leader robot's equilibrium condition under the assumption that the thrust is aligned with the external forces
$$\rotMatRRef{1}\vE{3}=\frac{m_Rg\vE{3}+\cableForcesEqInc{1}}{||m_Rg\vE{3}+\cableForcesEqInc{1}||}:=\begin{bmatrix}A1\\A2\\A3\end{bmatrix}.$$
Hence, assuming the yaw is controlled to zero,  $\rotMatRRef{1}=\rotMat{_Y}(\pitch_{R1}^r)\rotMat{_X}(\roll_{R1}^r)$ with  ${\pitch_{R1}^r=\text{atan}{(A1/A3)}}$ and ${\roll_{R1}^r=\text{asin}{(-A2)}.}$ 
\bibliographystyle{IEEEtran}
\bibliography{Biblio/bibAlias,Biblio/bibMain,Biblio/bibNew,Biblio/bibAF,Biblio/bibCustom}
\end{document}